\newtheorem{theorem}{Theorem}[section] 
\newtheorem{proposition}[theorem]{Proposition} 
\newtheorem{corollary}[theorem]{Corollary} 
\newtheorem{lemma}[theorem]{Lemma}
\newtheorem{example}[theorem]{Example}
\begin{document}
\title[Hastings factorization]{On Hastings factorization for quantum many-body systems in the infinite volume setting}
\author[A.\ Ukai]{Ayumi Ukai}
\address{
Graduate School of Mathematics, 
Nagoya University, 
Nagoya, 464-8602, Japan
}
\email{ukai.ayumi.t8@s.mail.nagoya-u.ac.jp}
\date{\today}

\begin{abstract}
We will give self-contained and detailed proofs of the (multidimensional) Hastings factorization as well as a ($1$-dimensional) area law in a wider setup than previous works. Especially, they are applicable to both quantum spin and fermion chains of infinite volume as special cases. 
\end{abstract}

\maketitle

\allowdisplaybreaks

\tableofcontents

\section{Introduction and Ganeral setting}\label{S1}
Taku Matsui \cite{Matsui13} pointed out that Hastings's proof \cite{Hastings07} of an area law in the setting of $1$-dimensional spin chains of finite volume essentially works even in the infinite volume setup. Here, an \emph{area law} means that the entanglement entropy of a given state with respect to a finite region, say $X$, grows at most propotionally with the size of boundary of $X$. Actually, he gave just a very brief outline to prove the so-called Hastings factorization (for the density operator of a gapped ground state in its GNS representation) and established the following: 
\begin{center}
Hastings factorization $\Rightarrow$ Area law ($\Rightarrow$ Split property)
\end{center} 
for $1$-dimensional quantum spin chains of infinite volume. After almost a decade, Matsui \cite{Matsui20} also gave a way to derive the split property for fermion chains of infinite volume from the corresponding result for quantum spin chains by utilizing the Jordan--Wigner transformation. 

Several years after Matsui's work \cite{Matsui13}, Yoshiko Ogata \cite{Ogata20} found a remarkable application of Matsui's split property. Actually, she used it to give a mathematically rigorous definition to the SPT (symmetry protected topological phase) index in the setting of $1$-dimensional quantum spin chains of infinite volume. See her recent exposition \cite{Ogata22} too. This suggests that any facts related to an area law have the potential to play an important r\^{o}le in the study of quantum many-body systems of infinite volume. 

Although the notion of projected entangled pair states (PEPS) is believed to be a multidimensional counterpart of matrix product states (MPS), we want to reconsider the question of what the right multidimensional analog of MPS is from the viewpoint of Hastings's proof of a ($1$-dimensional) area law. Since the Hastings factorization can be regarded as an MPS-like approximation of a gapped ground state, we think that a multidimensional generalization of the Hastings factorization may play an important r\^{o}le in investigating the question. The same motivation was also mentioned by Hamza, Michalakis, Nachtergaele and Sims \cite{Hamza-Michalakis-Nachtergaele-Sims}, but their result given there requires an assumption on local Hamiltonians rather than a ground state.  

The purpose of this paper is two-fold. The first one is to give a detailed proof of the Hastings factorization in the multidimensional and infinite volume setting with special emphasis on the part of treating unbounded operators arising from unbounded regions that never emerge in the finite volume setting. Here again, we would like to emphasize that Matsui \cite{Matsui13} gave just a very brief sketch and treated only the $1$-dimensional setting. The second one is to formulate a class of quantum many-body systems, for which Hastings's and Matsui's arguments work well. The class naturally includes both quantum spin and fermion systems so that this work provides a unified approach to a $1$-dimensional area law in the setting of infinite volume. Here we remind the reader that the possibility to prove a $1$-dimensional area law for fermion chains of infinite volume was already suggested in \cite{Matsui20} without any details.

The proofs concering ``Hastings factorization $\Rightarrow$ ($1$-dimensional) Area law" that we will give in this paper can be regarded as a reconstruction of Matsui's work \cite{Matsui13} \emph{in our wider setup}, though we need some care on a difficult point (see subsection \ref{S4.1}) in Matsui's discussion there. Thus, we do give a detailed proof for the sake of completeness. We also extend some part of the proof to the multidimensional setting (see subsection \ref{S4.1}). Unfortunately, we do not know at the present moment whether or not such a multidimensional generalization is actually useful, but we hope that our generalization clarifies which part of Hastings's and Matsui's proofs depend on the $1$-dimensional setting crucially. 
 
\medskip
We will explain the general framework of quantum many-body systems based on $C^*$-algebras.

Let $(\Gamma,d)$ be a (possibly infinite) countable discrete metric space, of which we think as the underlying space of a quantum many-body system in question. For two subsets $\Lambda_1, \Lambda_2 \subseteq \Gamma$ we define the distance $d(\Lambda_1,\Lambda_2)$ between them as usual, that is, $d(\Lambda_1,\Lambda_2) := \inf\{ d(x_1,x_2);\ x_i \in \Lambda_i, i=1,2\}$. We denote by $\mathcal{P}_0(\Gamma)$ all the finite subsets of $\Gamma$. 

Let $\mathcal{A}$ be a unital $C^*$-algebra, whose (self-adjoint) elements are regarded as \emph{observables}. Assume that $\mathcal{A}$ has an increasing net $\{\mathcal{A}_X\}_{X \in \mathcal{P}_0(\Gamma)}$ of simple $C^*$-subalgebras. We write $\mathcal{A}_\mathrm{loc} := \bigcup_{X\in\mathcal{P}_0(\Gamma)}\mathcal{A}_X$, whose (self-adjoint) elements are regarded as \emph{local observables}. Assume also that $\mathcal{A}_\mathrm{loc}$ is a dense subset in $\mathcal{A}$. Note that $\mathcal{A}$ becomes simple and hence any representation must be faithful. To each subset $\Lambda \subseteq \Gamma$, we assign the $C^*$-subalgebra $\mathcal{A}_\Lambda$ of $\mathcal{A}$ generated by all the $\mathcal{A}_X$ with $\Lambda \supseteq X \in \mathcal{P}_0(X)$. 

The interaction $\Phi$ of the quantum many-body system in question is given as a function $\Phi : \mathcal{P}_0(\Gamma) \to \mathcal{A}_\mathrm{loc}$ such that $\Phi(X) = \Phi(X)^* \in \mathcal{A}_X$ for each $X \in \mathcal{P}_0(\Gamma)$. 

The triple $(\mathcal{A},\{\mathcal{A}_X\}_{X \in \mathcal{P}_0(\Gamma)}, \Phi)$ we have introduced so far is indeed a general mathematical (or $C^*$-algebraic) framework of quantum many-body system in question. 
 
In this paper, we assume that all quantum many-body systems we will consider satisfy the condition:
\begin{itemize}
\item[(A1)] The interaction $\Phi$ satisfies $\Phi(X)\in\mathcal{A}_X\cap\mathcal{A}_{X^c}'$ for every $X\in\mathcal{P}_0(\Gamma)$.
\end{itemize}
This is fulfilled when the triple $(\mathcal{A},\{\mathcal{A}_X\}_{X \in \mathcal{P}_0(\Gamma)}, \Phi)$ is associated with a quantum spin or a fermion system (i.e., the interaction is an even one).

Under a suitable set of assumptions, the time evolution of the quantum many-body system in question is well defined as a strongly continuous $1$-parameter automorphism group $\alpha^t=\alpha^t_\Phi$ on $\mathcal{A}$. Then, the notion of ground states can be established as certain states on the $C^*$-algebra $\mathcal{A}$ with the help of the derivation associated with $\alpha^t$. 

For each ground state $\omega$, the Hamiltonian $H_\omega$ naturally emerges as a positive self-adjoint (possibly unbounded) operator on $\mathcal{H}_\omega$, where $\mathcal{A}\curvearrowright\mathcal{H}_\omega$ is the GNS representation associated with $\omega$. We consider $\mathcal{A}$ to be embedded in $\mathcal{B}(\mathcal{H}_\omega)$ and omit the symbols in the GNS representation since the representation is faithful. A ground state $\omega$ is said to be gapped if $H_\omega$ has spectral gap, that is, $0$ is a non-degenerate eigenvalue of $H_\omega$ and $\inf(\sigma(H_\omega)\setminus\{0\})> 0$ with the spectrum $\sigma(H_\omega)$. Remark that any gapped ground state must be pure (see e.g., \cite[Definition 4.2.6]{Sakai:book} and its explanation there).

\medskip\noindent
{\bf Acknowledgement.} We acknowledge our supervisor Professor Yoshimichi Ueda for his encouragements and editorial supports to this paper. We thank Professor Yoshiko Ogata for her intensive course at Nagoya in Dec., 2021, which initially motivated us to study this topic.  

\section{Preliminaries}\label{S2}
Throughout these notes, we keep the notation introduced in section \ref{S1}. In this section, we will provide the further notation and facts that will be used later. 

\medskip
For each $\Lambda\in\mathcal{P}_0(\Gamma)$, we define a local Hamiltonian 
\[
H_\Lambda=H_{\Phi,\Lambda}:=\sum_{X\subseteq\Lambda}\Phi(X)\in\mathcal{A}_\Lambda
\]
and the local time evolution to be a strongly continuous $1$-parameter automorphism group 
\[
\alpha_\Lambda^t=\alpha_{\Phi,\Lambda}^t:=\mathrm{Ad}\,e^{itH_\Lambda}
\] 
of $\mathcal{A}$, where $\mathrm{Ad}\,e^{itH_\Lambda}(A) =e^{itH_\Lambda}Ae^{-itH_\Lambda}$. Then, for every $A\in\mathcal{A}$ we define
\[
\delta_\Lambda(A)=\delta_{\Phi,\Lambda}(A):=i[H_\Lambda,A], 
\]
which coincides with 
\[
\frac{d}{dt}\Big|_{t=0}\alpha^t_\Lambda(A)
:=\lim_{t\to0}\frac{1}{t}(\alpha_\Lambda^t(A)-A)\quad\text{in norm}.
\] 
Moreover, 
\[
\alpha_\Lambda^t(A)=\exp(t\delta_\Lambda)(A)
:=\sum_{n=0}^\infty\frac{t^n}{n!}\delta_\Lambda^n(A)
\]
holds for every $A\in\mathcal{A}$.

\subsection{Lieb-Robinson bound}\label{S2.1}
We begin by recalling the notion of $F$-functions. 

An $F$-function $F: [0,\infty)\to(0,\infty)$ is a non-increasing function satisfying the following conditions: 
\begin{gather*}
\Vert F\Vert :=\sup_{x\in\Gamma}\sum_{y\in\Gamma} F(d(x,y))<\infty,\\
c_F :=\sup_{x,y\in\Gamma}\sum_{z\in\Gamma}\frac{F(d(x,z))F(d(z,y))}{F(d(x,y))} <\infty.
\end{gather*}
We set 
\[
\Vert\Phi\Vert_F :=\sup_{x,y\in\Gamma}\frac{1}{F(d(x,y))} \sum_{\substack{X\subseteq\Gamma\\ x,y\in X}}\Vert\Phi(X)\Vert.
\]

Recall that an interaction $\Phi$ is bounded, if
\[
\mathfrak{j}= \mathfrak{j}_\Phi
:= \sup_{x\in\Gamma}\sum_{x\in X\in\mathcal{P}_0(\Gamma)}\frac{\Vert\Phi(X)\Vert}{|X|}
<\infty,
\]
and that $\Phi$ has finite range $\mathfrak{r}=\mathfrak{r}_\Phi > 0$, if 
\[
\Phi(X)=0\quad\text{as long as}\quad\mathrm{diam}(X):=\sup\{ d(x_1,x_2)\,;\,x_1,x_2\in X\} >\mathfrak{r}.
\] 

Write $\mathfrak{B}_r(x):=\{ y\in\Gamma\,;\, d(x,y)\leq r\}$ for each pair $x\in\Gamma$ and $r>0$. We observe that
\[
\sup_{x\in\Gamma}|\mathfrak{B}_r(x)|\leq \sup_{x\in\Gamma}\sum_{y\in\Gamma} \frac{F(d(x,y))}{F(r)} =\frac{\Vert F\Vert}{F(r)},
\]
since we have $1\leq F(d(x,y))/F(r)$ for any $y\in\mathfrak{B}_r(x)$ and $0<F(d(x,y))/F(r)$ for any $y\in\Gamma$. Thus if $\Phi$ is bounded and has finite range $\mathfrak{r}$, we have 
\begin{align*}
\mathfrak{j}'= \mathfrak{j}'_\Phi 
&:= \sup_{x\in\Gamma}\sum_{x\in X\in\mathcal{P}_0(\Gamma)}\Vert\Phi(X)\Vert
\leq \frac{\mathfrak{j}\Vert F\Vert}{F(\mathfrak{r})}
<\infty, \\
\mathfrak{j}''= \mathfrak{j}''_\Phi 
&:= \sup_{x\in\Gamma}\sum_{x\in X\in\mathcal{P}_0(\Gamma)}|X|\Vert\Phi(X)\Vert
\leq \frac{\mathfrak{j}'\Vert F\Vert}{F(\mathfrak{r})}<\infty. 
\end{align*} 

The $\Phi$-boundary of an $X \in \mathcal{P}_0(\Gamma)$ is defined to be
\[
\partial_\Phi X :=\{x\in X\,;\,\text{$x\in Y$, $Y \cap X^c\neq\emptyset$ and $\Phi(Y)\neq0$ for some $Y\in\mathcal{P}_0(\Gamma)$}\}.
\] 

\medskip
Here is a version of the so-called Lieb-Robinson bound that we will need later. The idea of proof of \cite[Theorem 3.1]{Nachtergaele-Sims-Young18} or \cite[Theorem 4.3]{Bru-Siqueira Pedra:book} works for proving it. 

\begin{theorem}\label{thm_Lieb-Robinson}
Assume that $(\Gamma,d)$ admits an $F$-function and that $(\mathcal{A},\{\mathcal{A}_X\}_{X\in\mathcal{P}_0(\Gamma)},\Phi)$ is a quantum many-body system such that $\Vert\Phi\Vert_F<\infty$. Let $X,Y,\Lambda \in \mathcal{P}_0(\Gamma)$ with $X \subseteq Y \cap \Lambda$ be arbitrarily given. If $A\in\mathcal{A}_X$ and $B\in\mathcal{A}_Y' \cap \mathcal{A}$,
then we have
\[
\Vert[\alpha_\Lambda^t(A),B]\Vert
\leq \frac{2\Vert A\Vert\Vert B\Vert}{c_F} \left(e^{2\Vert\Phi\Vert_F c_F|t|} -1\right) \sum_{x\in\partial_\Phi X} \sum_{y\in Y^c} F(d(x,y)).
\]
\end{theorem}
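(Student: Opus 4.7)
I follow the classical Lieb--Robinson scheme (as in \cite{Nachtergaele-Sims-Young18} and \cite{Bru-Siqueira Pedra:book}), adapted so that the spatial factor on the right is indexed by $\partial_\Phi X$ rather than all of $X$. Set $f(t) := [\alpha_\Lambda^t(A), B]$. The starting observation is that $f(0) = 0$: indeed, the hypothesis $X \subseteq Y$ places $A \in \mathcal{A}_X \subseteq \mathcal{A}_Y$, while $B \in \mathcal{A}_Y' \cap \mathcal{A}$ commutes with everything in $\mathcal{A}_Y$. Hence $\|f(t)\|$ is built up entirely from the dynamical spreading produced by $\alpha_\Lambda^t$.

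Differentiating and applying the Jacobi identity yields $f'(t) = i[H_\Lambda, f(t)] - i[\alpha_\Lambda^t(A), [H_\Lambda, B]]$. The $[H_\Lambda, f(t)]$ piece is problematic because $H_\Lambda$ may be unbounded in the relevant infinite-volume limit, but it is removed by passing to $g(t) := \alpha_\Lambda^{-t}(f(t))$, since $\|g\| = \|f\|$ (automorphisms are isometric) and $g'(t) = -i\,\alpha_\Lambda^{-t}([\alpha_\Lambda^t(A), [H_\Lambda, B]])$. Integrating with $g(0) = 0$ gives the basic inequality
\[
\|f(t)\| \;\le\; \int_0^t \bigl\|[\alpha_\Lambda^s(A), [H_\Lambda, B]]\bigr\|\,ds.
\]
Next I localize the two endpoints. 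Expanding $[H_\Lambda, B] = \sum_{Z \subseteq \Lambda}[\Phi(Z), B]$ and invoking (A1): if $Z \subseteq Y$ then $\Phi(Z) \in \mathcal{A}_Z \subseteq \mathcal{A}_Y$ commutes with $B$, so only $Z$ with $Z \cap Y^c \neq \emptyset$ survive. For the $A$-side, decompose $H_\Lambda = H_X + W$ with $H_X := \sum_{Z \subseteq X}\Phi(Z) \in \mathcal{A}_X$ and apply Duhamel's formula $\alpha_\Lambda^t(A) = \alpha_X^t(A) + i\int_0^t \alpha_\Lambda^{t-s}([W, \alpha_X^s(A)])\,ds$. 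The first summand lies in $\mathcal{A}_X \subseteq \mathcal{A}_Y$ and contributes $0$ upon commutation with $B$; in $[W, \alpha_X^s(A)] = \sum_{Z \not\subseteq X}[\Phi(Z), \alpha_X^s(A)]$, (A1) again kills every $Z$ with $Z \cap X = \emptyset$, so the surviving $Z$ satisfy $Z \cap X \neq \emptyset$ and $Z \cap X^c \neq \emptyset$, forcing every $x \in Z \cap X$ into $\partial_\Phi X$ by definition.

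Finally, I close the bound by iteration. Using the Leibniz-type identity $[\alpha_\Lambda^s(A), [\Phi(Z), B]] = [[\alpha_\Lambda^s(A), \Phi(Z)], B] + [\Phi(Z), [\alpha_\Lambda^s(A), B]]$, one extracts a Gr\"onwall-type term feeding back into $\|f(s)\|$ (producing the exponential $e^{2\|\Phi\|_F c_F|t|}$) and a source term $\|[\alpha_\Lambda^s(A), \Phi(Z)]\|$, which I estimate by the same procedure with $\Phi(Z)$ playing the role of $B$. Each iteration introduces one new interaction region; the $F$-function assumptions are designed precisely to close this: each intermediate convolution $\sum_z F(d(x,z))F(d(z,y))$ is dominated by $c_F F(d(x,y))$, and $\|\Phi\|_F$ absorbs each $\|\Phi(Z)\|$ against its corresponding $F$-factor. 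Summing the resulting Taylor series in $t$ yields the stated factor $(e^{2\|\Phi\|_F c_F|t|}-1)/c_F$ multiplying the endpoint sum $\sum_{x \in \partial_\Phi X}\sum_{y \in Y^c} F(d(x,y))$.

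\textbf{Main obstacle.} The conceptual moves are standard; the real labor lies in the bookkeeping of the iterative step — checking that the $n$-fold nested region-sums, constrained to form a chain from $\partial_\Phi X$ to $Y^c$ through $n-1$ intermediate points, telescope to $\|\Phi\|_F^n c_F^{n-1}$ times the single endpoint sum, and that the time integrals reassemble into the Taylor expansion of $(e^{2\|\Phi\|_F c_F|t|}-1)/c_F$. This collapses onto the scheme in \cite{Nachtergaele-Sims-Young18, Bru-Siqueira Pedra:book}; the only novelty is to verify that (A1), rather than the usual strict commutation of spatially disjoint observables, suffices at each occurrence.
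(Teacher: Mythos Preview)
Your outline assembles most of the right ingredients, but step 5 (the iteration/closure) does not work as stated, and this is where the argument actually lives.

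After your basic inequality $\|f(t)\|\le\int_0^{|t|}\|[\alpha_\Lambda^s(A),[H_\Lambda,B]]\|\,ds$ and the restriction to $Z$ with $Z\cap Y^c\neq\emptyset$, you split via Jacobi into a ``Gr\"onwall term'' $[\Phi(Z),f(s)]$ and a ``source term'' $[[\alpha_\Lambda^s(A),\Phi(Z)],B]$. But the Gr\"onwall coefficient is $2\sum_{Z\subseteq\Lambda,\,Z\cap Y^c\neq\emptyset}\|\Phi(Z)\|$, which scales with $|\Lambda\setminus Y|$, \emph{not} with the universal constant $2\|\Phi\|_Fc_F$; a Gr\"onwall step here produces an exponential that blows up with the volume and cannot give the stated bound. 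Likewise, your proposed recursion ``with $\Phi(Z)$ playing the role of $B$'' requires the initial condition $[A,\Phi(Z)]=0$, i.e.\ $Z\cap X=\emptyset$ (via (A1)); this fails for those $Z$ that reach $X$, so the $B$-side recursion does not close either.

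The paper avoids both problems by iterating on the $A$-side only. It introduces the quantity $c_B^\Lambda(Z,t):=\sup_{A\in\mathcal{A}_Z\setminus\{0\}}\|[\alpha_\Lambda^t(A),B]\|/\|A\|$ and uses the interaction picture $t\mapsto[\alpha_\Lambda^t(\alpha_Z^{-t}(A)),B]$ (equivalently, your Duhamel with $H_Z$ in place of $H_X$) to obtain directly
\[
c_B^\Lambda(Z,t)\le c_B^\Lambda(Z,0)+2\sum_{\substack{Z'\subseteq\Lambda\\ Z'\cap Z^c\neq\emptyset}}\|\Phi(Z')\|\int_0^{|t|}c_B^\Lambda(Z',s)\,ds.
\]
There is no Gr\"onwall step: one simply iterates this integral inequality starting from $Z=X$ (where $c_B^\Lambda(X,0)=0$), picking up $c_B^\Lambda(Z_n,0)\le 2\|B\|\,\mathbb{1}[Z_n\cap Y^c\neq\emptyset]$ at the $n$th step, and then bounds the resulting chain sums by $\|\Phi\|_F^n c_F^{n-1}\sum_{x\in\partial_\Phi X}\sum_{y\in Y^c}F(d(x,y))$. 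Your Duhamel paragraph is exactly the first step of this $A$-side recursion and correctly identifies how $\partial_\Phi X$ enters; the fix is to iterate \emph{that} step (with the growing support playing the role of $X$), not to switch to a Jacobi/Gr\"onwall scheme on the $B$-side.
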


\begin{corollary}[Lieb--Robinson bound for quantum many-body systems]\label{cor_Lieb-Robinson}
Assume that $(\Gamma,d)$ admits an $F$-function and that $(\mathcal{A},\{\mathcal{A}_X\}_{X\in\mathcal{P}_0(\Gamma)},\Phi)$ is a quantum many-body system such that $\Phi$ is bounded and has finite range. For any $\mu > 0$ there are two constants $c_\mu, v_\mu > 0$ so that the following holds true: Let $X,Y,\Lambda \in \mathcal{P}_0(\Gamma)$ with $X \subseteq Y \cap \Lambda$ be arbitrarily given. If $A\in\mathcal{A}_X$ and $B\in\mathcal{A}_Y' \cap \mathcal{A}$,
then we have
\[
\Vert[\alpha_\Lambda^t(A),B]\Vert \leq c_\mu\Vert A\Vert \Vert B\Vert\, |\partial_\Phi X|\, e^{-\mu(d(X,Y^c)-v_\mu|t|)}.
\]
\end{corollary}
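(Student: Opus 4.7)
The plan is to deduce the corollary from Theorem \ref{thm_Lieb-Robinson} by applying it with an auxiliary $F$-function obtained by exponential reweighting. Given the $F$-function $F$ on $(\Gamma,d)$ and a parameter $\mu>0$, I would define
\[
F_\mu(r):= e^{-\mu r}F(r),\qquad r\geq0,
\]
and verify that $F_\mu$ is again an $F$-function for $(\Gamma,d)$. Monotonicity is immediate, $\Vert F_\mu\Vert\leq\Vert F\Vert<\infty$ because $e^{-\mu r}\leq 1$, and the triangle inequality $d(x,y)\leq d(x,z)+d(z,y)$ yields $e^{-\mu d(x,z)}e^{-\mu d(z,y)}\leq e^{-\mu d(x,y)}$, which gives $c_{F_\mu}\leq c_F<\infty$.

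Next, I would check that $\Vert\Phi\Vert_{F_\mu}<\infty$ using the two hypotheses on $\Phi$. Since $\Phi$ has finite range $\mathfrak{r}$, the sum $\sum_{X\ni x,y}\Vert\Phi(X)\Vert$ vanishes whenever $d(x,y)>\mathfrak{r}$, so only pairs $x,y$ with $d(x,y)\leq\mathfrak{r}$ contribute. For such pairs monotonicity gives $F(d(x,y))\geq F(\mathfrak{r})$ and $e^{\mu d(x,y)}\leq e^{\mu\mathfrak{r}}$, whence
\[
\Vert\Phi\Vert_{F_\mu}
=\sup_{x,y}\frac{e^{\mu d(x,y)}}{F(d(x,y))}\sum_{\substack{X\subseteq\Gamma\\ x,y\in X}}\Vert\Phi(X)\Vert
\leq\frac{e^{\mu\mathfrak{r}}}{F(\mathfrak{r})}\,\mathfrak{j}'_\Phi<\infty,
\]
where $\mathfrak{j}'_\Phi$ is finite by the estimate already displayed in Section \ref{S2.1}. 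Thus Theorem \ref{thm_Lieb-Robinson} applies to $F_\mu$.

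Applying that theorem and then extracting the exponential decay in distance is the final step. For $x\in\partial_\Phi X\subseteq Y$ and $y\in Y^c$ one has $d(x,y)\geq d(X,Y^c)$, hence
\[
\sum_{x\in\partial_\Phi X}\sum_{y\in Y^c}F_\mu(d(x,y))
\leq e^{-\mu d(X,Y^c)}\sum_{x\in\partial_\Phi X}\sum_{y\in Y^c}F(d(x,y))
\leq |\partial_\Phi X|\,\Vert F\Vert\,e^{-\mu d(X,Y^c)}.
\]
Combining this with the bound $e^{2\Vert\Phi\Vert_{F_\mu}c_{F_\mu}|t|}-1\leq e^{2\Vert\Phi\Vert_{F_\mu}c_{F_\mu}|t|}$ from Theorem \ref{thm_Lieb-Robinson} and setting
\[
c_\mu:=\frac{2\Vert F\Vert}{c_{F_\mu}},\qquad v_\mu:=\frac{2\Vert\Phi\Vert_{F_\mu}c_{F_\mu}}{\mu},
\]
produces exactly the desired inequality $\Vert[\alpha^t_\Lambda(A),B]\Vert\leq c_\mu\Vert A\Vert\Vert B\Vert|\partial_\Phi X|e^{-\mu(d(X,Y^c)-v_\mu|t|)}$.

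The only delicate point is the second step, namely checking $\Vert\Phi\Vert_{F_\mu}<\infty$: this is precisely where both hypotheses (finite range and boundedness) are used, and it is the main place where the finite-range assumption becomes essential, since otherwise the factor $e^{\mu d(x,y)}$ could not be controlled uniformly in $x,y$. Everything else is a straightforward manipulation of the constants produced by Theorem \ref{thm_Lieb-Robinson}.
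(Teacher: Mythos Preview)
Your proposal is correct and follows essentially the same approach as the paper's proof: define the reweighted $F$-function $F_\mu(r)=e^{-\mu r}F(r)$, verify $\Vert\Phi\Vert_{F_\mu}<\infty$ using finite range and boundedness, apply Theorem~\ref{thm_Lieb-Robinson} with $F_\mu$, extract the factor $e^{-\mu d(X,Y^c)}$ from the double sum, and set $c_\mu=2\Vert F\Vert/c_{F_\mu}$, $v_\mu=2\Vert\Phi\Vert_{F_\mu}c_{F_\mu}/\mu$. The paper bounds $\Vert\Phi\Vert_{F_\mu}\leq e^{\mu\mathfrak{r}}\Vert\Phi\Vert_F$ rather than your $e^{\mu\mathfrak{r}}\mathfrak{j}'_\Phi/F(\mathfrak{r})$, but both estimates rely on the same finite-range cutoff and yield the same conclusion.
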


A proof of these theorem and corollary will be given in Appendix \ref{Appendix_Lieb-Robinson} for the reader's convenience, because our setting is wider than the known results. 

\subsection{Time evolution $\alpha^t=\alpha^t_\Phi$}\label{S2.2}

Throughout this subsection, let $(\Gamma,d)$ and $(\mathcal{A},\{\mathcal{A}_X\}_{X\in\mathcal{P}_0(\Gamma)},\Phi)$ be as in Corollary \ref{cor_Lieb-Robinson}, and we will also use the notations in subsection \ref{S2.1} freely. 

The next proposition implies that a time evolution $\alpha^t$ is constructed out of $\Phi$ as mentioned in section \ref{S1}. Thanks to assumption (A1), the idea of proof of \cite[Lemma 4.4]{Bru-Siqueira Pedra:book} or \cite[Theorem 2.2]{Nachtergaele-Ogata-Sims} works for proving it.

\begin{proposition}\label{prop_time}
We obtain a strongly continuous $1$-parameter automorphism group $\alpha^t$ defined by 
\[
\lim_{\Lambda\nearrow\Gamma}\alpha_\Lambda^t(A)=\alpha^t(A)
\]
for each $A\in\mathcal{A}$ and each $t\in\mathbb{R}$, where $\lim_{\Lambda\nearrow\Gamma}$ denotes the thermodynamic limit (see Appendix \ref{Appendix_derivation}). Furthermore, the convergence of $\alpha^t_\Lambda(A)$ is uniform in $t$ on any compact subsets.
\end{proposition}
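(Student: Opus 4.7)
\bigskip

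\noindent\textbf{Proof plan.} The plan is to establish the existence of the limit $\alpha^t(A):=\lim_{\Lambda\nearrow\Gamma}\alpha_\Lambda^t(A)$ first for local observables $A\in\mathcal{A}_X$, $X\in\mathcal{P}_0(\Gamma)$, uniformly for $t$ on compact intervals, and then to extend to all of $\mathcal{A}$ by a density/equicontinuity argument. The standard interaction-picture trick gives, for $X\subseteq\Lambda_1\subseteq\Lambda_2$ and $A\in\mathcal{A}_X$,
\[
\alpha_{\Lambda_2}^t(A)-\alpha_{\Lambda_1}^t(A)
=\int_0^t \frac{d}{ds}\bigl(\alpha_{\Lambda_2}^{t-s}\alpha_{\Lambda_1}^s(A)\bigr)\,ds
=\int_0^t \alpha_{\Lambda_2}^{t-s}\bigl(i[H_{\Lambda_1}-H_{\Lambda_2},\alpha_{\Lambda_1}^s(A)]\bigr)\,ds.
\]
Since $H_{\Lambda_2}-H_{\Lambda_1}=\sum_{Y\subseteq\Lambda_2,\,Y\not\subseteq\Lambda_1}\Phi(Y)$ and $\alpha_{\Lambda_2}^{t-s}$ is isometric, taking norms yields
\[
\Vert\alpha_{\Lambda_2}^t(A)-\alpha_{\Lambda_1}^t(A)\Vert
\leq \int_0^{|t|}\sum_{\substack{Y\subseteq\Lambda_2\\ Y\not\subseteq\Lambda_1}}
\Vert[\Phi(Y),\alpha_{\Lambda_1}^s(A)]\Vert\,ds,
\]
and the whole task reduces to showing that the integrand decays fast enough as $\Lambda_1\nearrow\Gamma$ to produce a limit uniformly in $t$ on compacta.

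Next I would insert the Lieb--Robinson bound of Corollary~\ref{cor_Lieb-Robinson}. Once $\Lambda_1\supseteq X$ is large enough that $d(X,\Lambda_1^c)>\mathfrak{r}$, every $Y$ appearing in the sum satisfies $Y\cap X=\emptyset$ (because $Y$ meets $\Lambda_1^c$ and has $\mathrm{diam}(Y)\leq\mathfrak{r}$). For such $Y$, one picks a finite intermediate region $\widetilde{Y}$ with $X\subseteq\widetilde{Y}\subseteq Y^c$ and $d(X,\widetilde{Y}^c)\geq d(X,Y)$---concretely, the finite set $\widetilde{Y}:=\{z\in\Gamma:d(z,X)<d(X,Y)\}$, which is finite because it is contained in the union of the (finite) balls $\mathfrak{B}_{d(X,Y)}(x)$, $x\in X$. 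Assumption (A1) gives $\Phi(Y)\in\mathcal{A}_{Y^c}'\subseteq\mathcal{A}_{\widetilde{Y}}'$, so Corollary~\ref{cor_Lieb-Robinson} applied with this $\widetilde{Y}$ yields
\[
\Vert[\alpha_{\Lambda_1}^s(A),\Phi(Y)]\Vert
\leq c_\mu\Vert A\Vert\Vert\Phi(Y)\Vert|\partial_\Phi X|\,e^{-\mu(d(X,Y)-v_\mu|s|)}.
\]
Using $d(X,Y)\geq d(X,y_0)-\mathfrak{r}$ for any $y_0\in Y\cap\Lambda_1^c$ and reorganising the sum over $Y\ni y_0$, we obtain
\[
\sum_{\substack{Y\subseteq\Lambda_2\\ Y\not\subseteq\Lambda_1}}\Vert[\Phi(Y),\alpha_{\Lambda_1}^s(A)]\Vert
\leq c_\mu\Vert A\Vert|\partial_\Phi X|\,e^{\mu(\mathfrak{r}+v_\mu|s|)}
\,\mathfrak{j}'\!\!\sum_{y\in\Lambda_1^c}e^{-\mu d(X,y)}.
\]
For $\mu$ chosen so that $\sum_{y\in\Gamma}e^{-\mu d(x,y)}<\infty$ (which one gets from the given $F$-function by choosing $\mu$ smaller than the exponential decay rate available), the tail sum $\sum_{y\in\Lambda_1^c}e^{-\mu d(X,y)}$ tends to $0$ as $\Lambda_1\nearrow\Gamma$, and the prefactor is locally uniform in $s$. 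Integrating in $s$ over $[0,|t|]$ with $|t|\leq T$ produces a bound that is uniform in $t\in[-T,T]$ and tends to $0$, hence $\{\alpha_\Lambda^t(A)\}$ is Cauchy in the thermodynamic limit, locally uniformly in $t$.

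The remaining steps are routine. Density of $\mathcal{A}_\mathrm{loc}$ in $\mathcal{A}$ combined with the isometry $\Vert\alpha_\Lambda^t\Vert=1$ upgrades the pointwise limit on $\mathcal{A}_\mathrm{loc}$ to a well-defined bounded linear map $\alpha^t$ on $\mathcal{A}$, still with convergence locally uniform in $t$. Passing to the limit in $\alpha_\Lambda^t(AB)=\alpha_\Lambda^t(A)\alpha_\Lambda^t(B)$, $\alpha_\Lambda^t(A^*)=\alpha_\Lambda^t(A)^*$, $\alpha_\Lambda^0=\mathrm{id}$ and $\alpha_\Lambda^{t+s}=\alpha_\Lambda^t\circ\alpha_\Lambda^s$ (using locally uniform convergence to compose limits) shows that $\alpha^t$ is a $1$-parameter group of $*$-automorphisms. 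Strong continuity $\lim_{t\to0}\Vert\alpha^t(A)-A\Vert=0$ follows by a $3\varepsilon$-argument: approximate $A$ by $A_0\in\mathcal{A}_{X_0}$, approximate $\alpha^t(A_0)$ by $\alpha_{\Lambda_0}^t(A_0)$ uniformly for small $t$, and use norm-continuity of $t\mapsto\alpha_{\Lambda_0}^t(A_0)$.

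The main obstacle, as always in these infinite-volume constructions, is verifying that the Lieb--Robinson estimate produces a tail sum over $\Lambda_1^c$ that is summable \emph{and} the correct intermediate region $\widetilde{Y}$ can be chosen to be finite while still respecting the non-commuting odd sectors allowed by (A1). This is where assumption (A1) replaces the usual tensor-product commutation of disjoint algebras, and where one must be careful not to invoke $[\mathcal{A}_X,\mathcal{A}_{X^c}]=0$, which fails for fermion chains.
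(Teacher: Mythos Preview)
Your overall strategy matches the paper's: the interaction-picture identity, a Lieb--Robinson estimate on each commutator $[\Phi(Y),\alpha_{\Lambda_1}^s(A)]$, a tail-sum over $\Lambda_1^c$, and then the routine extension from $\mathcal{A}_{\mathrm{loc}}$ to $\mathcal{A}$ together with the $3\varepsilon$ argument for strong continuity. The construction of the finite intermediate region $\widetilde{Y}$ and the invocation of (A1) to place $\Phi(Y)\in\mathcal{A}_{\widetilde{Y}}'$ are correct.

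There is, however, a genuine gap at the summability step. You assert that $\sum_{y\in\Gamma}e^{-\mu d(X,y)}<\infty$ ``by choosing $\mu$ smaller than the exponential decay rate available'' from the $F$-function. But the hypotheses only guarantee the \emph{existence} of some $F$-function; nothing forces $F$ to decay exponentially, and nothing prevents the ball growth of $(\Gamma,d)$ from being super-exponential. Under the stated assumptions the bare exponential tail $\sum_{y\in\Lambda_1^c}e^{-\mu d(X,y)}$ need not be the tail of a convergent series, so Corollary~\ref{cor_Lieb-Robinson} on its own is too coarse here. The fix is precisely what the paper does: apply Theorem~\ref{thm_Lieb-Robinson} (the $F$-function form of the bound) rather than its exponential corollary. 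After reorganising the double sum via $\Vert\Phi\Vert_F$ and $c_F$, one arrives at a tail of the form $\sum_{z\in\Lambda_m\setminus\Lambda_n}F(d(x,z))$, which \emph{is} the tail of $\sum_{z\in\Gamma}F(d(x,z))\leq\Vert F\Vert<\infty$ and therefore vanishes. Equivalently, you may keep the factor $F(d(x,y))$ when deriving the exponential bound, landing on $\sum_{y}e^{-\mu d(X,y)}F(d(X,y))$, which is summable because $F_\mu(r)=e^{-\mu r}F(r)$ is again an $F$-function.
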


A detailed proof of this proposition will be given in Appendix \ref{Appendix_derivation} for the reader's convenience.

\medskip
The next lemma will be used later. 
\begin{lemma}\label{lem_time}
For any $A\in\mathcal{A}_X$ with $X\in\mathcal{P}_0(\Gamma)$ and for every $\Lambda\in\mathcal{P}_0(\Gamma)$ with $d(\Lambda^c,X)>\mathfrak{r}:=\mathfrak{r}_\Phi$, we have
\[
\Vert\alpha^t(A)-\alpha^t_\Lambda(A)\Vert \leq \frac{c_{\mu}\mathfrak{j}'}{\mu v_\mu} |\partial_\Phi X||\partial_\Phi\Lambda|\, \Vert A\Vert\, e^{-\mu(d(\Lambda^c,X)-\mathfrak{r})}(e^{\mu v_\mu|t|}-1).
\]
Here, $\mathfrak{j}'=\mathfrak{j}'_\Phi$, and moreover, $\mu, c_\mu, v_\mu$ are the constants that appear in Corollary \ref{cor_Lieb-Robinson}. 
\end{lemma}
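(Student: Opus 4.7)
The plan is to compare $\alpha^t_\Lambda(A)$ with $\alpha^t_{\Lambda'}(A)$ for a larger finite region $\Lambda'\supseteq\Lambda$, obtain a bound independent of $\Lambda'$, and then pass to the thermodynamic limit using Proposition~\ref{prop_time}. Note first that the hypothesis $d(\Lambda^c,X)>\mathfrak{r}$ forces $X\subseteq\Lambda$, so $\alpha^s_\Lambda(A)\in\mathcal{A}_\Lambda$ for all $s\in\mathbb{R}$. Starting from the standard identity
\[
\alpha^t_{\Lambda'}(A)-\alpha^t_\Lambda(A)
=\int_0^t\frac{d}{ds}\bigl[\alpha^{t-s}_{\Lambda'}(\alpha^s_\Lambda(A))\bigr]\,ds
=\int_0^t\alpha^{t-s}_{\Lambda'}\bigl((\delta_{\Lambda'}-\delta_\Lambda)(\alpha^s_\Lambda(A))\bigr)\,ds,
\]
I rewrite the difference of derivations as $(\delta_{\Lambda'}-\delta_\Lambda)(B)=i\sum_{Z\subseteq\Lambda',\,Z\not\subseteq\Lambda}[\Phi(Z),B]$.

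Next I isolate which $Z$ actually contribute. If $Z\subseteq\Lambda^c$, then by (A1) we have $\Phi(Z)\in\mathcal{A}_Z\cap\mathcal{A}_{Z^c}'\subseteq\mathcal{A}_\Lambda'$, so $[\Phi(Z),\alpha^s_\Lambda(A)]=0$. Thus only $Z$ with $Z\cap\Lambda\neq\emptyset$ and $Z\cap\Lambda^c\neq\emptyset$ and $\Phi(Z)\neq 0$ contribute, and every such $Z$ contains a point of $\partial_\Phi\Lambda$. Moreover, since $\Phi$ has finite range $\mathfrak{r}$, any such $Z$ satisfies $\mathrm{diam}(Z)\leq\mathfrak{r}$, so every $z\in Z$ lies within distance $\mathfrak{r}$ of $\Lambda^c$, which yields $d(X,Z)\geq d(X,\Lambda^c)-\mathfrak{r}>0$ and in particular $X\cap Z=\emptyset$.

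For each such $Z$ I apply Corollary~\ref{cor_Lieb-Robinson} with the choice $Y:=\Lambda\setminus Z$. Then $X\subseteq Y\cap\Lambda$ by the preceding observation; by (A1), $\Phi(Z)\in\mathcal{A}_{Z^c}'\subseteq\mathcal{A}_Y'$; and $Y^c=\Lambda^c\cup Z$, so
\[
d(X,Y^c)=\min\bigl\{d(X,\Lambda^c),\,d(X,Z)\bigr\}\geq d(X,\Lambda^c)-\mathfrak{r}.
\]
The corollary therefore gives
\[
\bigl\Vert[\alpha^s_\Lambda(A),\Phi(Z)]\bigr\Vert
\leq c_\mu\Vert A\Vert\,\Vert\Phi(Z)\Vert\,|\partial_\Phi X|\,e^{-\mu(d(X,\Lambda^c)-\mathfrak{r}-v_\mu|s|)}.
\]

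Finally, summing over $Z$ by grouping according to a chosen point $x\in Z\cap\partial_\Phi\Lambda$ and applying the bounded-interaction estimate $\sum_{Z\ni x}\Vert\Phi(Z)\Vert\leq\mathfrak{j}'_\Phi$ produces an overall factor $|\partial_\Phi\Lambda|\,\mathfrak{j}'$. Integrating the resulting inequality over $s\in[0,|t|]$ gives the factor $(e^{\mu v_\mu|t|}-1)/(\mu v_\mu)$, and since the bound is independent of $\Lambda'$, letting $\Lambda'\nearrow\Gamma$ and invoking Proposition~\ref{prop_time} yields the claimed estimate. The main delicate point is verifying that the choice $Y=\Lambda\setminus Z$ actually satisfies the hypotheses of the Lieb-Robinson corollary (in particular that $\Phi(Z)\in\mathcal{A}_Y'$, which relies crucially on (A1)) and that the geometric estimate $d(X,Y^c)\geq d(X,\Lambda^c)-\mathfrak{r}$ holds uniformly in the contributing $Z$; the rest is routine integration and summation.
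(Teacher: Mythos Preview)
Your proof is correct and follows essentially the same strategy as the paper: expand $\alpha^t_{\Lambda'}(A)-\alpha^t_\Lambda(A)$ via the Duhamel identity, reduce to commutators $[\Phi(Z),\alpha^s_\Lambda(A)]$ for boundary $Z$, apply the Lieb--Robinson bound, sum over $Z$ using $\mathfrak{j}'$, integrate in $s$, and pass to the thermodynamic limit. The only cosmetic differences are that the paper invokes the $F$-function version (Theorem~\ref{thm_Lieb-Robinson}) with $F=F_\mu$ and a uniform choice $Y=\Lambda_{\mathrm{int}}(\mathfrak{r})$, whereas you apply Corollary~\ref{cor_Lieb-Robinson} directly with the $Z$-dependent choice $Y=\Lambda\setminus Z$; both choices yield the same geometric estimate $d(X,Y^c)\geq d(X,\Lambda^c)-\mathfrak{r}$ and the same final bound.
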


\begin{proof}
Thanks to assumption (A1), in the same way as in \cite[Eq.(2.28)]{Nachtergaele-Ogata-Sims} (see Appendix \ref{Appendix_derivation}), for any $F$-function and any $\Lambda,\Lambda' \in\mathcal{P}_0(\Gamma)$ with $X\subseteq\Lambda\subseteq\Lambda'$ and $d(\Lambda^c,X)>\mathfrak{r}$, we have
\begin{align*}
&\left\Vert\left(\alpha_{\Lambda'}^t-\alpha_\Lambda^t\right)(A)\right\Vert\\
&\leq\frac{2\Vert A\Vert}{c_F} \int_{\min\{0,t\}}^{\max\{0,t\}} \big(e^{2\Vert\Phi\Vert_F c_F|s|}-1\big) \,ds \sum_{\substack{Z\subseteq\Lambda'\\ Z\cap\Lambda\neq\emptyset\\ Z\cap\Lambda^{c}\neq\emptyset}} \Vert \Phi(Z)\Vert \sum_{x\in\partial_\Phi X} \sum_{y\in \Lambda_\mathrm{int}(\mathfrak{r})^c} F(d(x,y)),
\end{align*}
where we write $\Lambda_\mathrm{int}(\mathfrak{r}):=\{x\in\Lambda\,;\,d(x,\Lambda^c) >\mathfrak{r}\}$.

We have a new $F$-function $F_\mu(r):=e^{-\mu r}F(r)$ (see e.g., \cite[Appendix 2]{Nachtergaele-Sims-Young19}). Since $\Phi$ is bounded and has finite range, we have $\Vert\Phi\Vert_{F_\mu}\leq e^{\mu\mathfrak{r}}\Vert \Phi\Vert_F<\infty$. Applying the above equation with $F=F_\mu$, we have 
\begin{align*}
\left\Vert\left(\alpha_{\Lambda'}^t-\alpha_\Lambda^t\right)(A)\right\Vert
&\leq\frac{2\Vert A\Vert}{c_{F_\mu}}  
\int_{\min\{0,t\}}^{\max\{0,t\}} \big(e^{2\Vert\Phi\Vert_{F_\mu}c_{F_\mu}|s|}-1\big) \,ds \\
&\quad\sum_{z\in\partial_\Phi\Lambda} \sum_{\substack{Z\in\mathcal{P}_0(\Gamma)\\ z\in Z}} \Vert \Phi(Z)\Vert
\sum_{x\in\partial_\Phi X} \sum_{y\in \Lambda_\mathrm{int}(\mathfrak{r})^c} e^{-\mu d(x,y)}F(d(x,y))\\
&\leq \frac{2\Vert A\Vert}{c_{F_\mu}} 
\frac{e^{2\Vert\Phi\Vert_{F_\mu} c_{F_\mu}|t|}-1}{2\Vert\Phi\Vert_{F_\mu} c_{F_\mu}}
\big(|\partial_\Phi \Lambda| \mathfrak{j}'\big)\,
\big(e^{-\mu d(\partial_\Phi X,\Lambda_\mathrm{int}(\mathfrak{r})^c)} |\partial_\Phi X| \Vert F\Vert\big).
\end{align*}
Taking $\Lambda'\nearrow\Gamma$, we have
\[
\left\Vert\left(\alpha^t-\alpha_\Lambda^t\right)(A)\right\Vert
\leq \frac{\mathfrak{j}'\Vert F\Vert}{\Vert\Phi\Vert_{F_\mu}c_{F_\mu}^2} |\partial_\Phi X| |\partial_\Phi \Lambda| \, \Vert A\Vert\, e^{-\mu(d(\Lambda^c,X)-\mathfrak{r})} (e^{2\Vert\Phi\Vert_{F_\mu} c_{F_\mu}|t|} -1),
\]
where we note that 
$d(\partial_\Phi X,\Lambda_\mathrm{int}(\mathfrak{r})^c)
\geq d(X,\Lambda_\mathrm{int}(\mathfrak{r})^c)
\geq d(X,\Lambda^c)-\mathfrak{r}$. Hence we have obtained the desired inequality with letting $v_\mu:=2\Vert\Phi\Vert_{F_\mu} c_{F_\mu}/\mu$ and $c_\mu:=2\Vert F\Vert/c_{F_\mu}$.
\end{proof}

The next proposition consists of several well-known facts and will play an important r\^{o}le later. The proofs in \cite[subsection 3.3.2]{Naaijkens13:book} or \cite[theorem 6.2.4]{Bratteli-Robinson:book} work for proving this proposition too thanks to Proposition \ref{prop_time} and assumption (A1). 

\begin{proposition}\label{prop_derivation}
In this context, the following statements hold:
\begin{itemize}
\item[(i)] For every $A\in\mathcal{A}_\mathrm{loc}$ 
\[
\delta(A):=\lim_{\Lambda\nearrow\Gamma}\delta_\Lambda(A)
\] 
coverges in norm and defines a (symmetric) derivation $\delta=\delta_\Phi :\mathcal{A}_\mathrm{loc}\to\mathcal{A}$.

\item[(ii)] For every $A\in\mathcal{A}_\mathrm{loc}$, there exists a $T>0$ such that
\[
\exp(t\delta)(A):=\sum_{n=0}^\infty\frac{t^n}{n!}\delta^n(A)
\]
converges uniformly in norm for any $t\in(-T,T)$ on any compact subsets. 

\item[(iii)] For every $A\in\mathcal{A}_\mathrm{loc}$, 
\[
\alpha^t(A)=\lim_{\Lambda\nearrow\Gamma}\alpha^t_\Lambda(A)=\lim_{\Lambda\nearrow\Gamma}\exp(t\delta_{\Lambda})(A)=\exp(t\delta)(A),
\]
uniformly in $t$ on any compact subset of $(-T,T)$. In particular, 
\[
\frac{d}{dt}\Big|_{t=0}\alpha^t(A)=\lim_{t\to0}\frac{1}{t}(\alpha^t(A)-A)=\delta(A)=\lim_{\Lambda\nearrow\Gamma}\delta_\Lambda(A)\quad\text{in norm}
\]
holds for every $A\in\mathcal{A}_\mathrm{loc}$. 
\end{itemize}
\end{proposition}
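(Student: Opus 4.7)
The plan is to exploit the finite range and boundedness of $\Phi$, together with assumption (A1), to establish each item in sequence, relying on Proposition~\ref{prop_time} for the identification with the time evolution.

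For (i), fix $A \in \mathcal{A}_X$ with $X \in \mathcal{P}_0(\Gamma)$. By (A1), $[\Phi(Z), A] = 0$ whenever $Z \cap X = \emptyset$, because in that case $\Phi(Z) \in \mathcal{A}_{Z^c}' \subseteq \mathcal{A}_X'$. Combined with $\Phi(Z) = 0$ whenever $\mathrm{diam}(Z) > \mathfrak{r}$, we obtain
\[
\delta_\Lambda(A) = i\sum_{\substack{Z \subseteq \Lambda \\ Z \cap X \neq \emptyset}} [\Phi(Z), A],
\]
a sum of only finitely many nonzero terms. Once $\Lambda \supseteq X^{(1)} := \{y \in \Gamma : d(y, X) \leq \mathfrak{r}\}$, this sum is independent of $\Lambda$, so $\delta(A)$ exists trivially and lies in $\mathcal{A}_{X^{(1)}} \subseteq \mathcal{A}_\mathrm{loc}$. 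Linearity, the Leibniz rule, and the symmetry $\delta(A^*) = \delta(A)^*$ transfer from $\delta_\Lambda$ by passing to the norm limit.

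For (ii), iterating (i) yields $\delta^n(A) \in \mathcal{A}_{X^{(n)}}$, where $X^{(n)}$ is the $n\mathfrak{r}$-neighborhood of $X$. The delicate point is the norm estimate: a naive recursion $\|\delta^{k+1}(A)\| \leq 2|X^{(k)}|\mathfrak{j}'_\Phi \|\delta^k(A)\|$ produces super-exponential growth once the expansion of $|X^{(k)}|$ is accounted for, and so fails to yield a convergent power series. Instead, I would invoke the standard combinatorial bound
\[
\|\delta^n(A)\| \leq n!\, C^n\, \|A\|
\]
for a constant $C = C(X, \Phi, F)$, derived by expanding $\delta^n(A)$ as an iterated sum of commutators $[\Phi(Z_n), \ldots [\Phi(Z_1), A] \ldots]$ over chains $(Z_1, \ldots, Z_n)$ anchored to $X$ and controlling each chain via $\sum_{Z \ni x, y} \|\Phi(Z)\| \leq \|\Phi\|_F F(d(x,y))$ together with the constant $c_F$. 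This is essentially the calculation in \cite[subsection 3.3.2]{Naaijkens13:book} and \cite[theorem 6.2.4]{Bratteli-Robinson:book}, where (A1) is what ensures that only connected chains contribute. Taking $T := 1/C$, the series $\sum_n (t^n/n!) \delta^n(A)$ then converges absolutely and uniformly on compact subsets of $(-T, T)$.

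For (iii), I would combine (ii) with Proposition~\ref{prop_time}. The stabilization in (i) iterates to $\delta^k_\Lambda(A) = \delta^k(A)$ for all $k \leq n$ whenever $\Lambda \supseteq X^{(n)}$, while the bound in (ii) applies to $\delta_\Lambda$ with the same constant $C$, uniformly in $\Lambda$. Hence
\[
\bigl\|\exp(t\delta_\Lambda)(A) - \exp(t\delta)(A)\bigr\| \leq \sum_{k > n} \frac{|t|^k}{k!}\bigl(\|\delta^k_\Lambda(A)\| + \|\delta^k(A)\|\bigr) \leq 2\|A\|\sum_{k > n}(C|t|)^k
\]
tends to $0$ uniformly in $\Lambda$ and in $t$ on compact subsets of $(-T, T)$. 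Since functional calculus for the bounded operator $H_\Lambda$ gives $\alpha^t_\Lambda(A) = \exp(t\delta_\Lambda)(A)$ and Proposition~\ref{prop_time} gives $\alpha^t(A) = \lim_\Lambda \alpha^t_\Lambda(A)$, the full chain of equalities follows, and the derivative assertion at $t = 0$ is a term-by-term consequence of the uniform convergence. The main obstacle is the combinatorial bound in (ii): mere support-growth estimates cannot control $\|\delta^n(A)\|$ at the rate needed, so one must unpack the structure of nested commutators to extract the saving $n!$ factor.
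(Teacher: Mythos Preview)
Your proposal is correct and follows essentially the same route as the paper's proof: stabilization of $\delta_\Lambda(A)$ via (A1) and finite range for (i), a nested-commutator expansion yielding $\|\delta^n(A)\|\leq n!\,C^n\|A\|$ for (ii), and a tail estimate combined with Proposition~\ref{prop_time} for (iii). The one notable difference is in the combinatorics of (ii): you invoke the $F$-function bound $\sum_{Z\ni x,y}\|\Phi(Z)\|\leq\|\Phi\|_F F(d(x,y))$ and the constant $c_F$, whereas the paper exploits finite range more directly by observing that each $\Phi(Z_k)$ enlarges the running support $S_{k-1}$ by at most $N:=\sup_x|\mathfrak{B}_{\mathfrak{r}}(x)|$ sites, giving $\prod_{k=1}^n(Nk+|X|)\leq (N+|X|)^n n!$. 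Your $F$-function route is slightly more general (it would survive relaxing finite range to $\|\Phi\|_F<\infty$), while the paper's counting is more elementary under the stated hypotheses; both deliver the same $n!\,C^n$ bound and the rest of the argument is identical.
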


A detailed proof of this proposition will be given in Appendix \ref{Appendix_derivation} for the reader's convenience.

\medskip
Let $\omega$ be a ground state, and $\mathcal{A}\curvearrowright\mathcal{H}_\omega\ni\Omega$ be its GNS representation with GNS unit vector. It is known, see \cite[section 3.4]{Naaijkens13:book}, that we have a positive self-adjoint operator $H_\omega$ on $\mathcal{H}_\omega$ in such a way that $H_\omega\Omega=0$ and $\alpha^t=\mathrm{Ad}e^{itH_\omega}$ for any $t\in\mathbb{R}$. Here is a corollary, which is probably known among specialists. We give its proof for the sake of completeness.

\begin{corollary}\label{cor_dom}
With the above notation, $\mathcal{A}_\mathrm{loc}\mathrm{Dom}(H_\omega)\subseteq\mathrm{Dom}(H_\omega)$ and 
\[
[H_\omega,A]\phi=-i\delta(A)\phi=\lim_{\Lambda\nearrow\Gamma}[H_\Lambda,A]\phi
\]
holds for every pair $(A,\phi)\in\mathcal{A}_\mathrm{loc}\times\mathrm{Dom}(H_\omega)$. In particular, any $[H_\omega,A]$ with $A\in\mathcal{A}_\mathrm{loc}$ must be bounded and canonically extends to the whole $\mathcal{H}_\omega$. 
\end{corollary}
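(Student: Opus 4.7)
My plan is to invoke Stone's theorem together with the intertwining relation coming from $\alpha^t = \mathrm{Ad}\,e^{itH_\omega}$. Recall that a vector $\psi \in \mathcal{H}_\omega$ lies in $\mathrm{Dom}(H_\omega)$ exactly when the limit $\lim_{t\to0}t^{-1}(e^{itH_\omega}\psi - \psi)$ exists in norm, and then the limit equals $iH_\omega\psi$. I therefore fix $A \in \mathcal{A}_{\mathrm{loc}}$ and $\phi \in \mathrm{Dom}(H_\omega)$, set $\psi := A\phi$, and aim to compute this limit by exploiting the identity
\[
e^{itH_\omega} A\phi = \alpha^t(A)\,e^{itH_\omega}\phi,
\]
which holds because $\alpha^t(A) = e^{itH_\omega}Ae^{-itH_\omega}$.

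The main calculation is to split the difference quotient as
\[
\frac{e^{itH_\omega}A\phi - A\phi}{t}
= \frac{\alpha^t(A)-A}{t}\,e^{itH_\omega}\phi
+ A\,\frac{e^{itH_\omega}\phi - \phi}{t}.
\]
For the first summand, Proposition \ref{prop_derivation}(iii) says that $t^{-1}(\alpha^t(A)-A) \to \delta(A)$ in the norm of $\mathcal{A}$ (since $A$ is local), while strong continuity of $e^{itH_\omega}$ gives $e^{itH_\omega}\phi \to \phi$; hence this summand converges in norm to $\delta(A)\phi$. For the second summand, $\phi \in \mathrm{Dom}(H_\omega)$ ensures $t^{-1}(e^{itH_\omega}\phi-\phi) \to iH_\omega\phi$ and $A$ is a bounded operator on $\mathcal{H}_\omega$, so this summand converges to $iAH_\omega\phi$. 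Consequently, the full difference quotient converges, which simultaneously shows $A\phi \in \mathrm{Dom}(H_\omega)$ and yields $iH_\omega A\phi = \delta(A)\phi + iAH_\omega\phi$, i.e.\ $[H_\omega,A]\phi = -i\delta(A)\phi$.

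For the approximation claim, I will use Proposition \ref{prop_derivation}(i): $\delta_\Lambda(A) \to \delta(A)$ in norm as $\Lambda \nearrow \Gamma$, hence $[H_\Lambda,A]\phi = -i\delta_\Lambda(A)\phi \to -i\delta(A)\phi$. The final assertion on bounded extension is immediate: since $\delta(A) \in \mathcal{A} \subset \mathcal{B}(\mathcal{H}_\omega)$ is bounded and agrees with $[H_\omega,A]$ on the dense subspace $\mathrm{Dom}(H_\omega)$, the commutator $[H_\omega,A]$ has a unique bounded extension to all of $\mathcal{H}_\omega$, which is $-i\delta(A)$.

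I do not expect any serious obstacle here; the only delicate point is that the norm differentiability of $t \mapsto \alpha^t(A)$ on $\mathcal{A}_{\mathrm{loc}}$ rather than merely strong continuity is essential in handling the first summand cleanly, and this is precisely what Proposition \ref{prop_derivation}(iii) supplies.
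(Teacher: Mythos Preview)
Your proof is correct and follows essentially the same approach as the paper: both use Stone's theorem together with the intertwining identity $e^{itH_\omega}A\phi=\alpha^t(A)e^{itH_\omega}\phi$ and split the difference quotient into two pieces, one handled by Proposition~\ref{prop_derivation}(iii) and the other by $\phi\in\mathrm{Dom}(H_\omega)$. The only cosmetic difference is that the paper writes the decomposition as $\alpha^t(A)\cdot\frac{1}{it}(e^{itH_\omega}-I)\phi+\frac{1}{it}(\alpha^t(A)-A)\phi$ rather than your $\frac{\alpha^t(A)-A}{t}e^{itH_\omega}\phi+A\frac{e^{itH_\omega}\phi-\phi}{t}$, which is an immaterial rearrangement.
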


\begin{proof}
For every pair $(A,\phi)\in\mathcal{A}_\mathrm{loc}\times\mathrm{Dom}(H_\omega)$,
\[
\frac{1}{it}(e^{itH_\omega}-I)A\phi=\alpha^t(A)\frac{1}{it}(e^{itH_\omega}-I)\phi +\frac{1}{it}(\alpha^t(A)-A)\phi
\]
converges to $AH_\omega\phi-i\delta(A)\phi$ as $t\to0$ by Proposition \ref{prop_derivation}(iii). The desired assertion immediately follows by the Stone theorem and the construction of $\delta(A)$.
\end{proof}

\subsection{Approximation of observables by local observables}\label{S2.3}
Let $(\mathcal{A},\{\mathcal{A}_X\}_{X\in\mathcal{P}_0(\Gamma)},\Phi)$ be a quantum many-body system. Throughout this subsection, assume that a state $\omega$ is pure and that assumption (A2) holds below:
\begin{itemize}
\item[(A2)] For each $X\in\mathcal{P}_0(\Gamma)$, $\mathcal{A}_X$ is $*$-isomorphic to a full matrix algebra $M_{d_X}(\mathbb{C})$ with some $d_X\in\mathbb{N}$. 
\end{itemize}
Let $\mathcal{A}\curvearrowright\mathcal{H}_\omega$ be the GNS representation associated with $\omega$. Note that this representation must be irreducible. 

The next lemma is a folklore among operator algebraists. The keys are assumption (A2) and that $\mathcal{A}\curvearrowright\mathcal{H}_\omega$ is irreducible. 

\begin{lemma}\label{lem_commutant}
For any $X\in\mathcal{P}_0(\Gamma)$, $\mathcal{A}_X'= (\mathcal{A}_X'\cap\mathcal
{A})''$ holds on $\mathcal{H}_\omega$.
\end{lemma}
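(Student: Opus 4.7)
The plan is to prove the nontrivial inclusion $\mathcal{A}_X' \subseteq (\mathcal{A}_X' \cap \mathcal{A})''$; the reverse inclusion is immediate because $\mathcal{A}_X' \cap \mathcal{A} \subseteq \mathcal{A}_X'$ and the right-hand side is already a von Neumann algebra. By the double-commutant theorem applied to the unital $*$-subalgebra $\mathcal{A}_X' \cap \mathcal{A}$, it suffices to verify that $\mathcal{A}_X' \cap \mathcal{A}$ is strongly dense in $\mathcal{A}_X'$ on $\mathcal{H}_\omega$. This is where purity of $\omega$ (yielding irreducibility of the GNS representation, hence SOT-density of $\mathcal{A}$ in $\mathcal{B}(\mathcal{H}_\omega)$) and assumption (A2) will be combined.

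The key device will be an averaging map built from matrix units of $\mathcal{A}_X$. Using (A2), fix a system of matrix units $\{e_{ij}\}_{1\le i,j\le d_X}$ for $\mathcal{A}_X\cong M_{d_X}(\mathbb{C})$; note that $\sum_{i=1}^{d_X}e_{ii}=1_{\mathcal{A}_X}=I_{\mathcal{H}_\omega}$ since the inclusions $\mathcal{A}_X\subseteq\mathcal{A}$ are unital. Define
\[
E(a) := \sum_{i=1}^{d_X} e_{i1}\, a\, e_{1i}, \qquad a\in\mathcal{B}(\mathcal{H}_\omega).
\]
A direct computation with the relations $e_{ij}e_{kl}=\delta_{jk}e_{il}$ gives $e_{kl}\,E(a) = e_{k1}\, a\, e_{1l} = E(a)\,e_{kl}$ for all $k,l$, so $E$ takes values in $\mathcal{A}_X'$; and for $b\in\mathcal{A}_X'$ one has $E(b)=\sum_i b\,e_{i1}e_{1i}=b\sum_i e_{ii}=b$, so $E$ restricts to the identity on $\mathcal{A}_X'$. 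Since $e_{i1},e_{1i}\in\mathcal{A}_X\subseteq\mathcal{A}$, it follows that $E(\mathcal{A})\subseteq\mathcal{A}_X'\cap\mathcal{A}$; moreover $E$ is SOT-continuous on $\mathcal{B}(\mathcal{H}_\omega)$ as a finite sum of compositions of left and right multiplications by bounded operators, each of which is SOT-continuous.

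Given any $b\in\mathcal{A}_X'$, Kaplansky's density theorem combined with the SOT-density of $\mathcal{A}$ in $\mathcal{B}(\mathcal{H}_\omega)$ yields a norm-bounded net $(a_\lambda)\subseteq\mathcal{A}$ with $a_\lambda\to b$ strongly; by SOT-continuity of $E$, one then has $E(a_\lambda)\to E(b)=b$ strongly, and $E(a_\lambda)\in\mathcal{A}_X'\cap\mathcal{A}$ for every $\lambda$. This establishes the required strong density and hence the lemma via the bicommutant theorem. The only delicate point is to cook up the correct averaging formula so that the pendant ``1''-indices cancel to give both $E(\mathcal{B}(\mathcal{H}_\omega))\subseteq\mathcal{A}_X'$ and $E|_{\mathcal{A}_X'}=\mathrm{id}$ simultaneously; once this formula is in hand, the rest is essentially a standard Kaplansky plus bicommutant exercise.
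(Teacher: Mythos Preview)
Your proof is correct. Both you and the paper use the matrix units of $\mathcal{A}_X\cong M_{d_X}(\mathbb{C})$ together with irreducibility of the GNS representation, but the arguments are organized differently. The paper builds the spatial unitary $W:\mathcal{H}_\omega\to\mathbb{C}^{d_X}\otimes\mathcal{K}$ (with $\mathcal{K}=E_{11}\mathcal{H}_\omega$) that implements $\mathcal{A}_X=M_{d_X}(\mathbb{C})\otimes\mathbb{C}I$ and $\mathcal{A}=M_{d_X}(\mathbb{C})\otimes E_{11}\mathcal{A}E_{11}$, then reads off $(\mathcal{A}_X'\cap\mathcal{A})''=\mathbb{C}I\otimes B(\mathcal{K})=\mathcal{A}_X'$ directly from irreducibility. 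You instead produce an SOT-continuous conditional expectation $E:\mathcal{B}(\mathcal{H}_\omega)\to\mathcal{A}_X'$ by the averaging formula $E(a)=\sum_i e_{i1}ae_{1i}$, observe $E(\mathcal{A})\subseteq\mathcal{A}_X'\cap\mathcal{A}$, and push SOT-density of $\mathcal{A}$ through $E$. Your route is slightly more economical for the lemma as stated; the paper's route has the advantage that the explicit tensor decomposition via $W$ is reused later (in subsection~\ref{S4.1}) for the Schmidt decomposition of $\Omega$.
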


\begin{proof}
Choose a matrix unit system $\{E_{ij}\}_{1\leq i,j\leq d_X}$ of $\mathcal{A}_X \cong M_{d_X}(\mathbb{C})$ and let $\{e_i\}_{i=1}^d$ be a standard orthonormal basis of $\mathbb{C}^{d_X}$. Then we construct a unitary transform $W:\phi\in\mathcal{H}_\omega\mapsto\sum_{k=1}^{d_X} e_k\otimes E_{1k}\phi\in\mathbb{C}^{d_X}\otimes\mathcal{K}$ with $\mathcal{K}:=E_{11}\mathcal{H}_\omega$. We have 
\begin{align*} 
WA\phi 
&= 
\sum_{i=1}^{d_X} e_i \otimes E_{1i}A\phi 
= 
\Big(\sum_{i,j=1}^{d_X} |e_i\rangle\langle e_j| \otimes E_{1i}AE_{j1}\Big)W\phi 
\end{align*}
for any $A \in \mathcal{A}$ and $\phi\in\mathcal{H}_\omega$. Thus $W(\mathcal{A}_X \subseteq \mathcal{A})W^* = \big(M_{d_X}(\mathbb{C})\otimes\mathbb{C}I \subseteq M_{d_X}(\mathbb{C})\otimes E_{11}\mathcal{A}E_{11}\big)$. Hence we may identify $\mathcal{H}_\omega=\mathbb{C}^{d_X}\otimes\mathcal{K}$ and accordingly $(\mathcal{A}_X \subseteq \mathcal{A}) = (M_{d_X}(\mathbb{C})\otimes\mathbb{C}I \subseteq M_{d_X}(\mathbb{C})\otimes E_{11}\mathcal{A}E_{11})$. 

By irreduciblity, $M_{d_X}(\mathbb{C})\otimes(E_{11}\mathcal{A}E_{11})'' = \mathcal{A}'' = B(\mathbb{C}^{d_X}\otimes\mathcal{K}) = M_{d_X}(\mathbb{C})\otimes B(\mathcal{K})$ and hence $(\mathcal{A}_X'\cap\mathcal{A})'' = (\mathbb{C}I\otimes(E_{11}\mathcal{A}E_{11}))'' = \mathbb{C}I\otimes B(\mathcal{K}) = (M_{d_X}(\mathbb{C})\otimes\mathbb{C}I)' = \mathcal{A}_X'$.
\end{proof}

The next proposition provides a method of approximating observables by local observables, which will crucially be used later. It is probably a forklore among operator algebraists. 

\begin{proposition}\label{prop_partial_trace}
For any $X\in\mathcal{P}_0(\Gamma)$ there exists a (non-normal) conditional expectation (in particular, a positive contractive $\mathcal{A}_X$-bimodule map) $E :\mathcal{B}(\mathcal{H}_\omega)\to \mathcal{A}_X$ such that 
\[
\Vert A-E(A)\Vert\leq\sup_{U\in U(\mathcal{A}'_X)}\Vert[A,U]\Vert \leq \sup_{\substack{B\in\mathcal{A}'_X\cap\mathcal{A}\\ \Vert B\Vert\leq1}} \Vert[A,B]\Vert
\]
for any $A\in\mathcal{B}(\mathcal{H}_\omega)$. Here, $\mathcal{B}(\mathcal{H})$ denotes all the bounded linear operators on a Hilbert space $\mathcal{H}$ and $U(\mathcal{C})$ denotes the unitary group of a $C^*$-algebra $\mathcal{C}.$
\end{proposition}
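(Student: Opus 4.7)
The plan is to construct $E$ explicitly from the tensor-product identification of Lemma~\ref{lem_commutant} and to obtain the first inequality via the Kadison--Ringrose distance formula for derivations on von Neumann algebras.

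First, with the unitary $W$ arising in the proof of Lemma~\ref{lem_commutant}, identify $\mathcal{H}_\omega\cong\mathbb{C}^{d_X}\otimes\mathcal{K}$ (with $\mathcal{K}=E_{11}\mathcal{H}_\omega$) so that $\mathcal{A}_X=M_{d_X}(\mathbb{C})\otimes\mathbb{C}I$ and $\mathcal{A}_X'=\mathbb{C}I\otimes\mathcal{B}(\mathcal{K})$. Fix any state $\phi$ on $\mathcal{B}(\mathcal{K})$ (e.g.\ a vector state) and set $E:=W^*\bigl(\mathrm{id}_{M_{d_X}(\mathbb{C})}\otimes\phi\bigr)W$. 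This $E$ is unital and completely positive, ranges in $\mathcal{A}_X$, and restricts to the identity on $\mathcal{A}_X$; Tomiyama's theorem then promotes it to a (non-normal in general) conditional expectation and positive contractive $\mathcal{A}_X$-bimodule map.

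For the first inequality I would invoke the distance formula: for any von Neumann algebra $M\subseteq\mathcal{B}(\mathcal{H})$ and $A\in\mathcal{B}(\mathcal{H})$,
\[
\sup\{\|[A,X]\|:X\in M,\,\|X\|\leq 1\}=2\,\mathrm{dist}(A,M').
\]
Taking $M=\mathcal{A}_X'$ (so $M'=\mathcal{A}_X''=\mathcal{A}_X$) and invoking the Russo--Dye theorem to replace the supremum over contractions by one over unitaries, this reads $\sup_{U\in U(\mathcal{A}_X')}\|[A,U]\|=2\,\mathrm{dist}(A,\mathcal{A}_X)$. Since $E$ is a contractive projection onto $\mathcal{A}_X$, for any $B\in\mathcal{A}_X$ one has $\|A-E(A)\|=\|(\mathrm{id}-E)(A-B)\|\leq 2\|A-B\|$; taking the infimum over $B\in\mathcal{A}_X$ then gives $\|A-E(A)\|\leq 2\,\mathrm{dist}(A,\mathcal{A}_X)=\sup_{U\in U(\mathcal{A}_X')}\|[A,U]\|$.

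The second inequality is a Kaplansky-density argument. By Lemma~\ref{lem_commutant}, $\mathcal{A}_X'=(\mathcal{A}_X'\cap\mathcal{A})''$, so the unit ball of $\mathcal{A}_X'\cap\mathcal{A}$ is strong-operator dense in that of $\mathcal{A}_X'$. Given $U\in U(\mathcal{A}_X')$ pick a net $(B_\alpha)\subseteq\mathcal{A}_X'\cap\mathcal{A}$ of contractions with $B_\alpha\to U$ in the strong operator topology; then $[A,B_\alpha]\to[A,U]$ strongly (as $A$ is bounded), and strong-operator lower semicontinuity of the operator norm on bounded nets yields $\|[A,U]\|\leq\liminf_\alpha\|[A,B_\alpha]\|\leq\sup\{\|[A,B]\|:B\in\mathcal{A}_X'\cap\mathcal{A},\,\|B\|\leq 1\}$; taking the supremum over $U$ completes the proof.

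The main obstacle is the distance formula, a genuinely analytic input (Stampfli's theorem for $M'=\mathbb{C}I$, extended by Kadison--Lance--Ringrose to arbitrary $M'$) that may warrant a self-contained treatment. A more elementary alternative would be to construct $E$ as a Dixmier-type average $E(A)=m_V[V^*AV]$ over $U(\mathcal{A}_X')$ using an invariant mean, which is available because $\mathcal{A}_X'\cong\mathcal{B}(\mathcal{K})$ is injective so that its unitary group is amenable in the strong operator topology; the bound $\|A-E(A)\|\leq\sup_V\|[A,V]\|$ then falls out directly from unitality and positivity of $m$ combined with $\|A-V^*AV\|=\|[A,V]\|$, bypassing the factor-$2$ detour through $\mathrm{dist}(A,\mathcal{A}_X)$.
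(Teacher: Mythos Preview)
Your \emph{alternative} route—building $E$ as an average over $U(\mathcal{A}_X')$ via an invariant mean (equivalently, via injectivity/property~P of $\mathcal{A}_X'$)—is exactly what the paper does. The paper invokes Schwartz's property~P for $\mathcal{A}_X'$ to obtain a conditional expectation with $E(A)$ lying in the weak-operator closure of $\mathrm{conv}\{UAU^*:U\in U(\mathcal{A}_X')\}$; the bound $\|A-E(A)\|\leq\sup_U\|A-UAU^*\|=\sup_U\|[A,U]\|$ is then immediate, with no factor-of-$2$ detour. Your second-inequality argument (Kaplansky density plus lower semicontinuity of the norm) is likewise the paper's.

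Your \emph{primary} route through an explicit slice map $E=\mathrm{id}\otimes\phi$ plus the distance formula is a genuine detour. It does work in this specific situation because $\mathcal{A}_X'\cong\mathbb{C}I\otimes\mathcal{B}(\mathcal{K})$ is a properly infinite factor admitting a cyclic vector (as $\dim\mathcal{K}=\infty\geq d_X$), and for such $M$ the equality $\sup_{X\in M,\|X\|\leq1}\|[A,X]\|=2\,\mathrm{dist}(A,M')$ is indeed known. However, the attribution is off: the 1967 Kadison--Lance--Ringrose paper on derivations does not contain this formula, and for \emph{general} von Neumann algebras the equality is tied to the (open) Kadison similarity problem, so one should cite a result specific to properly infinite or cyclic-vector factors (e.g.\ Stampfli for $\mathcal{B}(\mathcal{H})$ together with a halving argument, or Zsid\'o/Christensen-type results). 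In short, this route imports a harder theorem solely to cancel a factor of $2$ introduced by choosing $E$ independently of the averaging structure; the property~P construction avoids both the factor and the hard theorem, which is why the paper (and your own alternative) prefers it.
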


\begin{proof}
Since $\mathcal{A}'_X$ has Schwartz's property P (see e.g., \cite[Theorem 2.3]{Ranard-Walter-Witteveen}), there is a conditional expectation $E :\mathcal{B}(\mathcal{H}_\omega)\to\mathcal{A}_X$ such that $E(A)$ falls into the intersection of $\mathcal{A}_X$ and the weak-operator closure of the convex hull of all $UAU^*$ with $U\in U(\mathcal{A}'_X)$ for every $A\in\mathcal{B}(\mathcal{H}_\omega)$. See e.g., \cite[Proposition 2]{Kadison04} for this folklore. Then, one can find a proof of the first inequality in \cite[page 3917]{Ranard-Walter-Witteveen}. 

By Lemma \ref{lem_commutant}, the second inequality follows thanks to the lower-semicontinuity of the norm in the weak operator topology and the Kaplansky density theorem.
\end{proof}

\section{Hastings factorization}\label{S3}
\subsection{Main statements}\label{S3.1}
\begin{theorem}[Hastings factorization]\label{thm_main}
Assume that $(\Gamma,d)$ admits an $F$-function and that $(\mathcal{A},\{\mathcal{A}_X\}_{X\in\mathcal{P}_0(\Gamma)},\Phi)$ is a quantum many-body system such that $\Phi$ is bounded and has finite range $\mathfrak{r} = \mathfrak{r}_\Phi$. For each $r\geq0$, we define the ``$r$-width boundary" of $X\in\mathcal{P}(\Gamma)$ 
\begin{equation}\label{eq_def-set}
\partial X(r) :=\{x\in X\,;\, d(x,X^c)\leq r\}\sqcup\{y\in X^c\,;\, d(y,X)\leq r\}, 
\end{equation}
where $d(x,X^c):=d(\{x\},X^c)$ and $d(y,X):=d(\{y\},X)$. 

Then, for any gapped ground state $\omega$, one can find two positive constants $C_1,C_2 >0$ that make the following hold: For any $X\in\mathcal{P}_0(\Gamma)$ and any $\ell>\mathfrak{r}$ there exist two projections $O_R(X,\ell)\in\mathcal{A}_X$, $O_L(X,\ell)\in\mathcal{A}_X'=(\mathcal{A}_X'\cap\mathcal{A})''\subseteq\mathcal{B}(\mathcal{H}_\omega)$ and a contraction $O_B(X,\ell)\in\mathcal{A}_{\partial X(3\ell+\mathfrak{r})}$ such that 
\[
\Vert O_B(X,\ell)O_L(X,\ell)O_R(X,\ell)-|\Omega\rangle\langle\Omega|\Vert
\leq C_1|\partial X(2\ell+\mathfrak{r})|^4 \exp(-C_2\ell),
\]
where $\Omega\in\mathcal{H}_\omega$ is the GNS unit vector of $\omega$ and $\mathcal{B}(\mathcal{H}_\omega)$ denotes all the bounded linear operators on $\mathcal{H}_\omega$.
\end{theorem}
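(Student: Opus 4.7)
The plan is to combine a Gaussian spectral filter in time, which exploits the spectral gap $\gamma:=\inf(\sigma(H_\omega)\setminus\{0\})>0$, with Lieb--Robinson quasi-localization of the filtered observables, in the spirit of Hastings \cite{Hastings07} and Matsui \cite{Matsui13}. Throughout I write $P_\Omega:=|\Omega\rangle\langle\Omega|$.

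\emph{Step 1 (Gaussian spectral filter).} For $\beta>0$ set $W_\beta(t):=(2\pi\beta^2)^{-1/2}e^{-t^2/(2\beta^2)}$, so that $\hat{W}_\beta(E)=e^{-\beta^2E^2/2}$, and for $A\in\mathcal{B}(\mathcal{H}_\omega)$ define the Gaussian-averaged evolution
\[
\tau_\beta(A) := \int_{\mathbb{R}} W_\beta(t)\,\alpha^t(A)\,dt,
\]
interpreted as a weak integral when $A\notin\mathcal{A}$. Since $H_\omega\Omega=0$ and $\sigma(H_\omega)\setminus\{0\}\subseteq[\gamma,\infty)$, the spectral theorem yields
\[
\|\tau_\beta(A)\Omega-\langle\Omega,A\Omega\rangle\Omega\|\leq e^{-\beta^2\gamma^2/2}\|A\|,
\]
with the adjoint estimate for $\tau_\beta(A)^*\Omega$ holding symmetrically. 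Thus $\tau_\beta$ collapses any observable to a scalar multiple of $P_\Omega$ on $\Omega$, up to a super-exponentially small spectral error.

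\emph{Step 2 (Lieb--Robinson quasi-localization of $\tau_\beta$).} Fix $A\in\mathcal{A}_Z$ with $Z\subseteq Y\in\mathcal{P}_0(\Gamma)$ and write $d:=d(Z,Y^c)$. Corollary~\ref{cor_Lieb-Robinson} gives $\|[\alpha^t(A),B]\|\leq c_\mu\|A\|\,\|B\|\,|\partial_\Phi Z|\,e^{-\mu(d-v_\mu|t|)}$ for every $B\in\mathcal{A}_Y'\cap\mathcal{A}$ with $\|B\|\leq 1$. Convolving with $W_\beta$ and splitting the time integral at $|t|=T$ (for a suitable $T$ scaling with $\ell$) yields
\[
\sup_{\|B\|\leq 1}\|[\tau_\beta(A),B]\|\leq C\,\|A\|\,|\partial_\Phi Z|\,\bigl(e^{-\mu(d-v_\mu T)}+e^{-T^2/(2\beta^2)}\bigr).
\]
Proposition~\ref{prop_partial_trace} then produces a local approximant $\mathbb{E}_Y(\tau_\beta(A))\in\mathcal{A}_Y$ satisfying the same bound, where $\mathbb{E}_Y$ is the conditional expectation onto $\mathcal{A}_Y$. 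Tuning $\beta,T,d$ to scale appropriately with $\ell$ produces a common exponential decay rate $e^{-c\ell}$.

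\emph{Step 3 (Assembling $O_R$, $O_L$, $O_B$).} Following Matsui, I would pick a local seed observable $Q$ supported near $\partial X$ with $\langle\Omega,Q\Omega\rangle=1$, and apply $\tau_\beta$ with the full dynamics replaced by the \emph{local} dynamics $\alpha_X^t$ generated by $H_X=\sum_{Y\subseteq X}\Phi(Y)$; Lemma~\ref{lem_time} controls the error introduced by this replacement. Combining the local spectral projection onto the lowest-energy subspace of $H_X$ with Proposition~\ref{prop_partial_trace} applied with target algebra $\mathcal{A}_X$ yields a projection $O_R\in\mathcal{A}_X$ whose right action on $\mathcal{H}_\omega$ approximates $P_\Omega$. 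The symmetric construction on the $X^c$-side, using $H_{X^c}$ and Lemma~\ref{lem_commutant} to identify $\mathcal{A}_X'=(\mathcal{A}_X'\cap\mathcal{A})''$, yields a projection $O_L\in\mathcal{A}_X'$ whose left action approximates $P_\Omega$. The residual defect $P_\Omega-O_LO_R$ is, by Step~2, quasi-local to the thickened boundary $\partial X(3\ell+\mathfrak{r})$, and a final application of Proposition~\ref{prop_partial_trace} packages it into the contraction $O_B\in\mathcal{A}_{\partial X(3\ell+\mathfrak{r})}$ with $O_BO_LO_R\approx P_\Omega$.

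\emph{Main obstacle.} The critical difficulty is Step~3. First, producing $O_R$ and $O_L$ as genuine projections rather than mere contractions requires a careful spectral-projection construction from the local Hamiltonians, leveraging a local gap inherited from the global one --- a subtle point since infinite-volume spectral data do not descend automatically to $\mathcal{A}_X$. Second, because $P_\Omega\notin\mathcal{A}$ in general, the boundary correction $O_B$ can only be secured in $\mathcal{A}_{\partial X(3\ell+\mathfrak{r})}\subseteq\mathcal{A}$ through the non-normal conditional expectation of Proposition~\ref{prop_partial_trace}, which in turn rests on Schwartz's property $P$ for the commutant. Third, the identification $O_L\in\mathcal{A}_X'=(\mathcal{A}_X'\cap\mathcal{A})''$ is only available via Lemma~\ref{lem_commutant}, which uses assumption (A2) and purity of $\omega$. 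The final error accounting is then forced: each of the four uses of Corollary~\ref{cor_Lieb-Robinson} (one each for $O_R$, $O_L$, the $O_LO_R$ cross-term, and $O_B$) contributes a prefactor $|\partial_\Phi Z|\leq|\partial X(2\ell+\mathfrak{r})|$, producing the claimed $|\partial X(2\ell+\mathfrak{r})|^4$. This is precisely the delicate point emphasized in Subsection~\ref{S4.1}.
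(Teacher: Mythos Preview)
Your Steps 1 and 2 are in the right spirit, but Step 3 contains two genuine gaps that are not mere details.

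\textbf{First gap: the construction of $O_R$ and $O_L$.} You propose to take $O_R$ as ``the local spectral projection onto the lowest-energy subspace of $H_X$'', relying on ``a local gap inherited from the global one''. No such inheritance exists: the global spectral gap of $H_\omega$ says nothing about the spectrum of the finite-volume $H_X$, and the local ground state of $H_X$ need not be close to $\Omega$ in any useful sense. The paper's construction is different and does not use any local gap. One first shows, using only the \emph{global} gap, that the Gaussian-averaged piece $(H'_{R,\ell})_{1/\ell}\Omega$ is small (Corollary~\ref{cor_alpha}), then localizes $(H'_{R,\ell})_{1/\ell}$ to a self-adjoint $M_R\in\mathcal{A}_X$ via Proposition~\ref{prop_partial_trace}, and finally sets $O_R$ to be the spectral projection of $M_R$ onto a shrinking interval $[-\xi(\ell),\xi(\ell)]$ around zero. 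A Chebyshev argument then gives $\|(O_R-I)\Omega\|$ small. The projection is a spectral projection of a bounded, explicitly localized observable, not of any Hamiltonian.

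\textbf{Second gap: the unbounded left piece.} You write ``the symmetric construction on the $X^c$-side, using $H_{X^c}$''. But $X^c$ is infinite, so $H_{X^c}$ does not exist as an element of $\mathcal{A}$; one is forced to work with the unbounded $H'_{L,\ell}:=H_\omega-H'_{X(\ell+\mathfrak{r})}$ on $\mathrm{Dom}(H_\omega)$. The conditional expectation of Proposition~\ref{prop_partial_trace} cannot be applied to an unbounded operator, so your symmetric route is blocked. The paper resolves this by Matsui's averaging trick: choose a finite group $G$ with an irreducible unitary representation $U:G\to\mathcal{A}_X$ generating $\mathcal{A}_X$, and set $M_L:=(H'_{L,\ell})_{1/\ell}+|G|^{-1}\sum_{g}U(g)^*[(H'_{L,\ell})_{1/\ell},U(g)]$. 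Each commutator is bounded (Corollary~\ref{cor_dom}), and the group average forces $M_L$ to be affiliated with $\mathcal{A}_X'$. The error $\|((H'_{L,\ell})_{1/\ell}-M_L)\phi\|$ is then controlled directly via Lieb--Robinson estimates on the commutators $[H_\omega,U(g)]$. This is the ``most important step'' flagged in Proposition~\ref{prop_ML}, and your outline misses it entirely.

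Finally, your description of $O_B$ as packaging the defect $P_\Omega-O_LO_R$ is not how the argument goes: $O_B$ arises by localizing the interaction-picture operator $\widehat{\mathcal{P}}=\int e^{it(M_L+M_B+M_R)}e^{-itM_R}e^{-itM_L}f_{1/\ell}(t)\,dt$, whose commutators with boundary-complement observables are controlled because the integrand is an ordered exponential of the (bounded, boundary-localized) $M_B$.
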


\begin{corollary}\label{cor_main} 
Under the assumption of Theorem \ref{thm_main}, we assume further the following condition: 
\begin{itemize}
\item[(A3)] There exist $\nu\geq1$ and $\kappa>0$ such that $|\partial X(n+\mathfrak{r})|\leq \kappa|\partial X(\mathfrak{r})|n^\nu$ holds for any $X\in\mathcal{P}_0(\Gamma)$.
\end{itemize}

Then, for any gapped ground state $\omega$, one can find two positive constants $C_1,C_2 >0$ (independent of the choice of $(X,\ell)$) that make the following hold: For any $X\in\mathcal{P}_0(\Gamma)$ and any sufficiently large $\ell>0$ (n.b., how sufficiently large $\ell$ is depends on $|\partial X(\mathfrak{r})|$), there exist two projections $O_R(X,\ell)\in\mathcal{A}_X$, $O_L(X,\ell)\in\mathcal{A}_X'=(\mathcal{A}_X'\cap\mathcal{A})''\subseteq\mathcal{B}(\mathcal{H}_\omega)$ and a positive contraction $O_B(X,\ell)\in\mathcal{A}_{\partial X(3\ell+\mathfrak{r})}$ such that 
\[
\Vert O_B(X,\ell)O_L(X,\ell)O_R(X,\ell)-|\Omega\rangle\langle\Omega|\Vert
\leq C_1|\partial X(\mathfrak{r})| \exp(-C_2\ell),
\]
where $\Omega\in\mathcal{H}_\omega$ is the GNS unit vector of $\omega$.
\end{corollary}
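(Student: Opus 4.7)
The plan is to derive the corollary directly from Theorem \ref{thm_main} through two manipulations: (a) use assumption (A3) to replace the polynomial prefactor $|\partial X(2\ell+\mathfrak{r})|^4$ by the far smaller $|\partial X(\mathfrak{r})|$, at the cost of a constant factor in the exponential decay rate; and (b) upgrade the contraction $O_B(X,\ell)$ produced by Theorem \ref{thm_main} to a \emph{positive} contraction in the same subalgebra $\mathcal{A}_{\partial X(3\ell+\mathfrak{r})}$.

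Step (a) is routine. Applying (A3) with $n=2\ell$ gives $|\partial X(2\ell+\mathfrak{r})|^4 \le \kappa^4\,|\partial X(\mathfrak{r})|^4\,(2\ell)^{4\nu}$, so Theorem \ref{thm_main} yields
\[
\Vert O_B O_L O_R - |\Omega\rangle\langle\Omega|\Vert \leq C_1\kappa^4\,|\partial X(\mathfrak{r})|\cdot\bigl[\,|\partial X(\mathfrak{r})|^3(2\ell)^{4\nu}e^{-C_2\ell/2}\,\bigr]\cdot e^{-C_2\ell/2}.
\]
Since polynomials are dominated by any exponential, the bracketed factor is at most $1$ whenever $\ell\ge\ell_0$, where $\ell_0$ depends on $|\partial X(\mathfrak{r})|$ only logarithmically (taking logs in the inequality $\kappa^4|\partial X(\mathfrak{r})|^3 (2\ell)^{4\nu}\le e^{C_2\ell/2}$ gives $\ell_0\sim \log|\partial X(\mathfrak{r})|$). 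Re-labelling the remaining constants as a fresh pair $C_1,C_2>0$ then delivers the claimed estimate $C_1\,|\partial X(\mathfrak{r})|\,e^{-C_2\ell}$, and the parenthetical clause about ``sufficiently large $\ell$'' in the statement precisely means $\ell\ge\ell_0$.

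Step (b) is the delicate point. My preferred route is to revisit the construction of $O_B$ in the proof of Theorem \ref{thm_main}: there $O_B$ arises as a spectral filter $f(K)$ applied to a self-adjoint operator $K\in\mathcal{A}_{\partial X(3\ell+\mathfrak{r})}$, and by choosing the filter $f$ to be non-negative (which the proof of Theorem \ref{thm_main} already allows) one obtains $O_B\ge 0$ automatically with no loss in the error estimate. As a purely a~posteriori alternative, one could try to replace $O_B$ by its absolute value $|O_B|=(O_B^*O_B)^{1/2}\in\mathcal{A}_{\partial X(3\ell+\mathfrak{r})}$ and estimate $\Vert|O_B|O_LO_R-|\Omega\rangle\langle\Omega|\Vert$ from $\Vert O_B O_L O_R - |\Omega\rangle\langle\Omega|\Vert$ together with its adjoint; but a na\"ive execution of this strategy yields only a square-root loss in the error, and squaring back to recover a linear estimate is awkward because $|O_B|^2 P$ need not be self-adjoint when $P=O_LO_R$ does not commute with $|O_B|$. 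This is the genuine obstacle, which is why I would commit to route (i) and build positivity into the proof of Theorem \ref{thm_main} rather than attempt to patch $O_B$ after the fact; the polynomial buffer created in step (a) would, in any case, absorb a bounded loss in the exponential rate should the a~posteriori route be preferred.
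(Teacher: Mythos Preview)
Your step (a) is correct and matches the paper's argument almost verbatim.

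The gap is in step (b). Your preferred route (i) rests on a mistaken description of how $O_B$ is built in the proof of Theorem \ref{thm_main}. Looking at Lemma \ref{lem_OB}, the operator $O_B$ is \emph{not} a spectral filter $f(K)$ of a self-adjoint $K$; it is the image under the conditional expectation of Proposition \ref{prop_partial_trace} of
\[
\widehat{\mathcal{P}}=\int_{-\infty}^{\infty} e^{it(M_L+M_B+M_R)}\,e^{-itM_R}\,e^{-itM_L}\,f_{1/\ell}(t)\,dt,
\]
which is generally not self-adjoint (the integrand is a product of non-commuting unitaries, and the conditional expectation merely preserves the contraction property). So there is no ``non-negative filter'' to choose, and route (i) does not go through as stated. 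Your route (ii) via $|O_B|$ is, as you yourself note, not completed.

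The paper's actual fix is an a~posteriori one, but cleaner than either of your alternatives: replace $O_B$ by the positive contraction $O_B^{*}O_B\in\mathcal{A}_{\partial X(3\ell+\mathfrak{r})}$ and invoke \cite[Lemma 4]{Hastings18}, which gives
\[
\Vert O_B^{*}O_B\,O_LO_R-|\Omega\rangle\langle\Omega|\Vert\;\le\;\sqrt{2\epsilon}+3\epsilon+\epsilon^{2}\;\le\;6\,\epsilon^{1/4}
\]
whenever $\epsilon:=C_1(2^{\nu}\kappa)^4|\partial X(\mathfrak{r})|^4\ell^{4\nu}e^{-C_2\ell}<1$. The fourth-root loss in the exponent is then absorbed exactly by the polynomial buffer you already set up in step (a), yielding $C_1'|\partial X(\mathfrak{r})|e^{-C_2'\ell}$. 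This avoids the non-commutation obstacle you flagged for $|O_B|$ because one works directly with $O_B^{*}O_B$ rather than its square root.
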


We will call the consequence of the above corollary \emph{the Hastings factorization} in the rest of this paper. 

Corollary \ref{cor_main} is derived from Theorem \ref{thm_main} as explained below. Under the assumption of this corollary, we have
\[
C_1|\partial X(2\ell+\mathfrak{r})|^4 \exp(-C_2\ell)
\leq C_1(2^{\nu}\kappa)^4|\partial X(\mathfrak{r})|^4 \ell^{4\nu} \exp(-C_2\ell).
\]
Consider a sufficiently large $\ell>0$ such that
\[
\epsilon:=C_1(2^{\nu}\kappa)^4|\partial X(\mathfrak{r})|^4 \ell^{4\nu} \exp(-C_2\ell)<1
\]
holds. Then, as mentioned in \cite[Lemma 4]{Hastings18}, we have
\[
\Vert O_B(X,\ell)^*O_B(X,\ell)O_L(X,\ell)O_R(X,\ell)-|\Omega\rangle\langle\Omega|\Vert 
\leq \sqrt{2\epsilon}+3\epsilon+\epsilon^2
\leq 6\sqrt[4]{\epsilon}. 
\]
Hence, we have
\[
\Vert O_B(X,\ell)^*O_B(X,\ell)O_L(X,\ell)O_R(X,\ell)-|\Omega\rangle\langle\Omega|\Vert \leq C_1'|\partial X(\mathfrak{r})|\exp(-C'_2\ell)
\]
for some constants $C_1',C_2'$ independent of the choice of $(X,\ell)$. Therefore, $O_B(X,\ell)$ in Corollary \ref{cor_main} can be taken as a positive element by considering $O_B(X,\ell)$ instead of $O_B(X,\ell)^*O_B(X,\ell)$. 

\medskip
The assumptions for $(\Gamma,d)$ in Corollary \ref{cor_main} should be regarded as a generalization of the $\nu$-dimensional lattice $\mathbb{Z}^\nu$. Indeed, recall that $(\Gamma,d)$ is said to be \emph{$\nu$-regular} with $\nu\geq1$, if there exists a $\kappa>0$ such that 
\[
\sup_{x\in\Gamma}|\mathfrak{B}_r(x)|\leq\kappa r^\nu
\]
for any $r\geq1$. It is known, see \cite[Proposition A.1]{Nachtergaele-Sims-Young19}, that the $\nu$-regularity implies the existence of $F$-function and that the $\nu$-dimensional lattice $\mathbb{Z}^\nu$ with $\ell^1$-distance is $\nu$-regular. Moreover, we have
\[
\partial X(n+\mathfrak{r})\subseteq\bigcup_{x\in\partial X(\mathfrak{r})} \mathfrak{B}_n(x). 
\]
Hence the $\nu$-dimensional lattice $\mathbb{Z}^\nu$ with $\ell^1$-distance admits an $F$-function and satisfies assumption (A3). 

\subsection{Proof of Theorem \ref{thm_main}}
The strategy is here based on \cite{Hamza-Michalakis-Nachtergaele-Sims}. First, the Hamiltonian is approximated by three local observables $(M_{\chi},M_{\chi_{bd}},M_{\chi^c})$ as in \cite[Lemma 3.1]{Hamza-Michalakis-Nachtergaele-Sims}. Then the Hastings factorization is derived in the same way as in \cite[Chapter 4]{Hamza-Michalakis-Nachtergaele-Sims}, except for some technical issues based on the infinite volume setting. However, the construction of $M_{\chi^c}$ in \cite[Proposition 5.1]{Hamza-Michalakis-Nachtergaele-Sims} does not work, since the (bulk) Hamiltonian is not necessarily bounded. Instead, we refer to \cite{Matsui13}, though it gives only a rough sketch. The constructions of $M_{\chi}$ and $M_{\chi_{bd}}$ in \cite[Proposition 5.1]{Hamza-Michalakis-Nachtergaele-Sims} do work, except for some technical issues due to the infinite volume setting.

\medskip
We use the constants: $\mathfrak{r}=\mathfrak{r}_\Phi$, $\mathfrak{j}'=\mathfrak{j}'_\Phi$, $\mathfrak{j}''=\mathfrak{j}''_\Phi$ as well as $\mu, c_\mu, v_\mu$ in Corollary \ref{cor_Lieb-Robinson} (see subsection \ref{S2.1}). 

Let $\omega$ be a gapped ground state and $\mathcal{A}\curvearrowright\mathcal{H}_\omega\ni\Omega$ be its GNS representation. Let $H_\omega$ be the Hamiltonian, i.e., a positive self-adjoint operator on $\mathcal{H}_\omega$ with $H_\omega\Omega=0$. Write $\gamma :=\inf(\sigma(H_\omega)\setminus\{0\})>0$, the specural gap, and let 
\[
H_\omega=\int_0^\infty\lambda\,E_\omega(d\lambda)=\int_\gamma^\infty\lambda\,E_\omega(d\lambda)
\]
be the spectral decomposition of $H_\omega$. Since $H_\omega$ has spectral gap (and hence $0$ is a non-degenerate eigenvalue of $H_\omega$), we have $E_\omega(\{0\})=|\Omega\rangle\langle\Omega|$, which we will denote by $P_0$ for simplicity. 
The rest of this section is devoted to proving Theorem \ref{thm_main}. 

\medskip
In what follows, we choose and fix an arbitrary $X\in\mathcal{P}_0(\Gamma)$ and an arbitrary $n>0$, and define 
\begin{align*}
X_\mathrm{int}(n)
&:=\{x\in \Gamma\,;\, d(x,X^c)>n\},\\
X(n)
&:=X_\mathrm{int}(n)\sqcup \partial X(n)
=\{x\in \Gamma\,;\, d(x,X)\leq n\}.
\end{align*}
Here is a geometric lemma.
\begin{lemma}\label{lem_set}
For any $m,n\in\mathbb{N}$ with $m\geq n$, we have 
\[
|\partial_\Phi X|\leq|\partial X(\mathfrak{r})|,\quad
|\partial X(n)|\leq |\partial X(m)|. 
\]
Moreover, for any $n\in\mathbb{N}$, we have
\[
|\partial_\Phi(X_\mathrm{int}(n))|\leq |\partial X(n+\mathfrak{r})|,\quad
|\partial_\Phi(X(n+\mathfrak{r}))|\leq |\partial X(n+\mathfrak{r})|.
\]
\end{lemma}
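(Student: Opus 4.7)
The plan is to reduce each of the four inequalities to a set inclusion and then apply monotonicity of cardinality. Throughout, the key structural input is the finite range hypothesis: whenever $\Phi(Y)\neq 0$ we have $\mathrm{diam}(Y)\leq\mathfrak{r}$, so any two points $x,y\in Y$ satisfy $d(x,y)\leq\mathfrak{r}$. Combined with the triangle inequality, this single observation settles all four claims in essentially one line each; I do not anticipate any real obstacle.

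For $\partial_\Phi X\subseteq \partial X(\mathfrak{r})$, take $x\in\partial_\Phi X$ and pick $Y\ni x$ with $Y\cap X^c\neq\emptyset$ and $\Phi(Y)\neq 0$. Choosing any $y\in Y\cap X^c$, the finite range hypothesis yields $d(x,X^c)\leq d(x,y)\leq \mathfrak{r}$, and since $x\in X$ this places $x$ in the ``interior part'' of $\partial X(\mathfrak{r})$. For $\partial X(n)\subseteq \partial X(m)$ when $m\geq n$, both halves of the disjoint union defining $\partial X(\,\cdot\,)$ manifestly grow in the parameter, so the inclusion is immediate from the definitions.

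For $\partial_\Phi(X_\mathrm{int}(n))\subseteq \partial X(n+\mathfrak{r})$, note first that $X_\mathrm{int}(n)\subseteq X$, so any $x\in\partial_\Phi(X_\mathrm{int}(n))$ lies in $X$. Pick $Y\ni x$ with $\Phi(Y)\neq 0$ and some $y\in Y\cap X_\mathrm{int}(n)^c$. Then $d(y,X^c)\leq n$, and the triangle inequality gives
\[
d(x,X^c)\leq d(x,y)+d(y,X^c)\leq \mathfrak{r}+n,
\]
so $x$ lies in the interior part of $\partial X(n+\mathfrak{r})$. For $\partial_\Phi(X(n+\mathfrak{r}))\subseteq \partial X(n+\mathfrak{r})$, take $x\in\partial_\Phi(X(n+\mathfrak{r}))$ and pick $Y\ni x$ with $\Phi(Y)\neq 0$ and $y\in Y\cap X(n+\mathfrak{r})^c$, so $d(y,X)>n+\mathfrak{r}$. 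The triangle inequality now gives $d(x,X)\geq d(y,X)-d(x,y)>n\geq 0$, forcing $x\in X^c$; on the other hand $x\in X(n+\mathfrak{r})$ means $d(x,X)\leq n+\mathfrak{r}$, so $x$ lies in the ``exterior part'' of $\partial X(n+\mathfrak{r})$.

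Taking cardinalities in each of the four inclusions yields the stated inequalities, completing the proof.
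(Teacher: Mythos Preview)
Your proof is correct and follows essentially the same approach as the paper's: each inequality is reduced to a set inclusion via the finite-range hypothesis and the triangle inequality, exactly as the paper does. Your presentation is slightly more explicit in places (e.g., noting the strict inequality $d(x,X)>n$ in the fourth case), but the argument is the same.
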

\begin{proof}
By their definition, we have $\partial_\Phi\Lambda\subseteq \partial\Lambda(\mathfrak{r})$ for any $\Lambda\subseteq\Gamma$ and $\partial X(n)\subseteq\partial X(m)$ if $n\leq m$. Hence we have the first and second desired inequality. For any $x\in\partial_\Phi(X_\mathrm{int}(n))$, we have $x\in X$. Moreover, there exists a $y\in X_\mathrm{int}(n)^c$ such that $d(x,y)\leq\mathfrak{r}$ holds. We then observe that
\[
d(x,X^c)\leq d(x,y)+ d(y,X^c)\leq n+\mathfrak{r}.
\]
Hence we have $x\in\partial X(n+\mathfrak{r})$ and the third desired inequality holds. For any $y\in\partial_\Phi(X(n+\mathfrak{r}))$, we have $d(y,X)\leq n+\mathfrak{r}$. Moreover, there exists a $z\in X(n+\mathfrak{r})^c$ such that $d(y,z)\leq\mathfrak{r}$ holds. We then observe that
\[
d(y,X)\geq d(z,X)-d(y,z)\geq n.
\]
Hence we have $y\in X^c$ and the last desired inequality holds.
\end{proof}

With the notations above, we define
\begin{align*}
&H_{R,n} :=\sum_{\substack{Y\in\mathcal{P}_0(\Gamma)\\ Y\cap X_\mathrm{int}(n+\mathfrak{r})\neq\emptyset}}\Phi(Y), 
\quad H'_{R,n} :=H_{R,n} -\langle\Omega,H_{R,n}\Omega\rangle I,\\
&H_{B,n} :=\sum_{Y\subseteq \partial X(n+\mathfrak{r})}\Phi(Y), 
\quad H'_{B,n} :=H_{B,n} -\langle\Omega,H_{B,n}\Omega\rangle I.
\end{align*}
We then observe that
\begin{equation}\label{ineq_HB}
\Vert H'_{B,n}\Vert \leq 2\Vert H_{B,n}\Vert 
\leq 2\mathfrak{j}|\partial X(n+\mathfrak{r})|,
\end{equation}
and
\[
H_{R,n} + H_{B,n} = H_{X(n+\mathfrak{r})},\quad \langle\Omega,H'_{R,n}\Omega\rangle =\langle\Omega,H'_{B,n}\Omega\rangle =0, 
\]
where $H_{X(n+\mathfrak{r})}$ is the local Hamiltonian on $X(n+\mathfrak{r})$ constructed out of $\Phi$. 
Moreover, we define
\begin{align*}
&H'_{X(n+\mathfrak{r})}:= H'_{R,n}+H'_{B,n}, \\
&H'_{L,n}:=H_\omega -H'_{X(n+\mathfrak{r})}
\quad\text{with}\quad\mathrm{Dom}(H'_{L,n}):=\mathrm{Dom}(H_\omega).
\end{align*}
By their construction, we have
\[
\langle\Omega,H'_{X(n+\mathfrak{r})}\Omega\rangle 
=\langle\Omega,H'_{L,n}\Omega\rangle =0.
\]
 
\begin{lemma}\label{lem_norm}
We have
\begin{align*}
\Vert[H_\Lambda,H_{R,n}]\Vert
&\leq 2\mathfrak{j}'\mathfrak{j}''|\partial_\Phi (X_\mathrm{int}(n+\mathfrak{r}))|
\quad\text{for any}\quad \Lambda\supseteq X_\mathrm{int}(n+\mathfrak{r}),\\
\Vert[H_\Lambda,H_{X(n+\mathfrak{r})}] \Vert
&\leq 2\mathfrak{j}'\mathfrak{j}''|\partial_\Phi (X(n+\mathfrak{r}))|
\quad\quad\text{for any}\quad \Lambda\supseteq X(n+\mathfrak{r}).
\end{align*}
\end{lemma}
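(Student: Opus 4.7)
Both inequalities follow by the same strategy: expand the commutator as a double sum over pairs of interaction terms, use assumption (A1) to discard pairs with disjoint supports, and exploit the trivial cancellation $[H_M,H_M]=0$ together with the counting constants $\mathfrak{j}'$ and $\mathfrak{j}''$ to localize the remaining contribution to the $\Phi$-boundary.

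The second inequality is the direct case. Writing $\Lambda'' := X(n+\mathfrak{r})$ and using $[H_{\Lambda''},H_{\Lambda''}]=0$, I decompose
\[
[H_\Lambda,H_{\Lambda''}]
= [H_\Lambda - H_{\Lambda''},H_{\Lambda''}]
= \sum_{\substack{Z\subseteq\Lambda\\ Z\not\subseteq\Lambda''}} [\Phi(Z),H_{\Lambda''}].
\]
By (A1), a term is nonzero only if $Z\cap Y\neq\emptyset$ for some $Y\subseteq\Lambda''$, i.e.\ $Z$ straddles $\Lambda''$; combined with $\Phi(Z)\neq 0$ this forces $Z$ to contain a point of $\partial_\Phi\Lambda''$. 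The estimate $\Vert[\Phi(Z),H_{\Lambda''}]\Vert \leq 2\Vert\Phi(Z)\Vert\,|Z|\,\mathfrak{j}'$ then reduces the task to bounding $\sum_{Z \text{ straddling}}|Z|\Vert\Phi(Z)\Vert$, which by switching the order of summation to $\sum_{x\in\partial_\Phi\Lambda''}\sum_{Z\ni x}$ is at most $|\partial_\Phi\Lambda''|\,\mathfrak{j}''$. This yields the stated $2\mathfrak{j}'\mathfrak{j}''|\partial_\Phi X(n+\mathfrak{r})|$.

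For the first inequality, set $\Lambda' := X_\mathrm{int}(n+\mathfrak{r})$ and write $H_{R,n} = H_{\Lambda'} + \tilde H$ with $\tilde H := \sum_{Y\cap\Lambda'\neq\emptyset,\,Y\cap(\Lambda')^c\neq\emptyset}\Phi(Y)$; every $Y$ contributing to $\tilde H$ with $\Phi(Y)\neq 0$ meets $\partial_\Phi\Lambda'$ by the very definition of the $\Phi$-boundary. Then
\[
[H_\Lambda,H_{R,n}] = [H_\Lambda - H_{\Lambda'},H_{\Lambda'}] + [H_\Lambda,\tilde H].
\]
The first piece is exactly the situation of the previous paragraph with $\Lambda''$ replaced by $\Lambda'$. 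For the second, each $Y$ in the sum supplies a point $y\in\partial_\Phi\Lambda'$, (A1) restricts $Z\subseteq\Lambda$ to those intersecting $Y$, and the same counting---now summing in the order $\sum_{y\in\partial_\Phi\Lambda'}\sum_{Y\ni y}\sum_{u\in Y}\sum_{Z\ni u}$ with bounds $\mathfrak{j}'$ and $\mathfrak{j}''$---produces the same type of boundary bound. The only conceptual subtlety is that $H_{R,n}$ is not itself a local Hamiltonian on a finite subset of $\Gamma$, so the direct $[H_M,H_M]=0$ trick cannot be applied to it; the preliminary split $H_{R,n}=H_{\Lambda'}+\tilde H$ sidesteps this, and beyond that the whole argument is routine book-keeping using (A1) together with the finite-range counting constants $\mathfrak{j}',\mathfrak{j}''$.
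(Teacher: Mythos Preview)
Your argument for the second inequality is correct and essentially the same as the paper's. The issue is with the first inequality.

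Your claimed ``conceptual subtlety'' is not real: $[H_{R,n},H_{R,n}]=0$ holds tautologically for any operator, regardless of whether $H_{R,n}$ is a local Hamiltonian of the form $\sum_{Y\subseteq M}\Phi(Y)$. The paper exploits exactly this, writing $[H_\Lambda,H_{R,n}]=[H_\Lambda-H_{R,n},H_{R,n}]$ directly. Since each term $\Phi(Z)$ appearing in $H_\Lambda-H_{R,n}$ has $Z\subseteq X_{\mathrm{int}}(n+\mathfrak r)^c$, assumption (A1) kills the commutator with every $\Phi(Y)$ having $Y\subseteq X_{\mathrm{int}}(n+\mathfrak r)$; what remains is a \emph{single} sum over $Y$ straddling $X_{\mathrm{int}}(n+\mathfrak r)$, and the counting $\sum_{y\in\partial_\Phi\Lambda'}\sum_{Y\ni y}\sum_{z\in Y}\sum_{Z\ni z}2\Vert\Phi(Z)\Vert\Vert\Phi(Y)\Vert$ yields $2\mathfrak j'\mathfrak j''|\partial_\Phi X_{\mathrm{int}}(n+\mathfrak r)|$ on the nose.

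Your decomposition $[H_\Lambda,H_{R,n}]=[H_\Lambda-H_{\Lambda'},H_{\Lambda'}]+[H_\Lambda,\tilde H]$ is valid, but bounding the two pieces separately via the triangle inequality produces $2\mathfrak j'\mathfrak j''|\partial_\Phi\Lambda'|$ for \emph{each}, hence $4\mathfrak j'\mathfrak j''|\partial_\Phi\Lambda'|$ in total. So your route proves only a weaker version of the lemma, off by a factor of $2$. This is harmless for the downstream applications (all constants are absorbed into generic $C_1,C_2$), but it does not establish the inequality as stated. The fix is simply to drop the split and use $[H_\Lambda-H_{R,n},H_{R,n}]$ directly, as the paper does.
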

\begin{proof}
Let $\Lambda\supseteq X_\mathrm{int}(n+\mathfrak{r})$ be arbitrary chosen and fixed. Since $H_\Lambda-H_{R,n}\in\mathcal{A}_{X_\mathrm{int}(n+\mathfrak{r})^c}$ and by assumption (A1), we have
\[
[H_\Lambda,H_{R,n}]
=[H_\Lambda-H_{R,n},H_{R,n}]
=\sum_{\substack{Y\subseteq\Lambda\\ Y\cap X_\mathrm{int}(n+\mathfrak{r})\neq\emptyset\\ Y\cap X_\mathrm{int}(n+\mathfrak{r})^c\neq\emptyset}}[H_\Lambda-H_{R,n},\Phi(Y)].
\]
Following \cite[(5.8)]{Hamza-Michalakis-Nachtergaele-Sims}, we obtain that
\[
\Vert[H_\Lambda,H_{R,n}]\Vert
\leq \sum_{y\in\partial_\Phi (X_\mathrm{int}(n+\mathfrak{r}))} \sum_{\substack{Y\in\mathcal{P}_0(\Gamma)\\ Y\ni y}} \sum_{z\in Y} \sum_{\substack{Z\in\mathcal{P}_0(\Gamma)\\ Z\ni z}} \Vert[\Phi(Z),\Phi(Y)]\Vert.
\]
Thus we have
\[
\Vert[H_\Lambda,H_{R,n}]\Vert
\leq 2\mathfrak{j}'\sum_{y\in\partial_\Phi (X_\mathrm{int}(n+\mathfrak{r}))}\sum_{\substack{Y\in\mathcal{P}_0(\Gamma)\\ Y\ni y}}|Y|\Vert\Phi(Y)\Vert
\leq 2\mathfrak{j}'\mathfrak{j}''|\partial_\Phi (X_\mathrm{int}(n+\mathfrak{r}))|.
\]
Hence the first inequality has been shown.

On the other hand, we also have
\[
\left[H_\Lambda,H_{X(n+\mathfrak{r})}\right]= \left[H_\Lambda-H_{X(n+\mathfrak{r})}, H_{X(n+\mathfrak{r})}\right]
=\sum_{\substack{Y\in\Lambda\\ Y\cap X(n+\mathfrak{r})\neq\emptyset\\ Y\cap X(n+\mathfrak{r})^c\neq\emptyset}}[\Phi(Y),H_{X(n+\mathfrak{r})}].
\]
Hence the above argument also works for showing the second inequality.
\end{proof}

\medskip
For each $\alpha>0$ we define
\[
f_\alpha(t):=\sqrt{\frac{\alpha}{\pi}} e^{-\alpha t^2},\quad t\in\mathbb{R}.
\]
Then for any $c\geq0$, we observe that
\begin{equation}\label{eq_f-alpha}
\int_c^\infty tf_\alpha(t)\,dt
=\frac{1}{2}\sqrt{\frac{\alpha}{\pi}}\int_{c^2}^\infty e^{-\alpha s}\,ds
=\frac{f_\alpha(c)}{2\alpha}
\end{equation}
with substitution $s=t^2$. Then, following \cite[(4.1)-(4.3)]{Hamza-Michalakis-Nachtergaele-Sims}, we obtain the next lemma. 
\begin{lemma}\label{lem_convolusion}
We have
\[
\left\Vert\int_{-\infty}^\infty e^{itH_\omega} f_\alpha(t)\,dt-P_0\right\Vert\leq e^{-\gamma^2/4\alpha}.
\]
\end{lemma}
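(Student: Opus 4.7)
The plan is to compute the operator-valued integral by spectral calculus and reduce it to the classical Fourier transform of a Gaussian. Since $H_\omega$ is positive self-adjoint on $\mathcal{H}_\omega$ with spectral resolution $E_\omega$, the bounded Borel functional calculus gives $e^{itH_\omega}=\int_0^\infty e^{it\lambda}\,E_\omega(d\lambda)$. I would first justify interchanging the $t$-integral and the spectral integral: because $|e^{it\lambda}f_\alpha(t)|=f_\alpha(t)$ is in $L^1(\mathbb{R},dt)$ independently of $\lambda\in[0,\infty)$, a standard Fubini-type argument (equivalently, an approximation by Riemann sums together with strong continuity of $t\mapsto e^{itH_\omega}$ and dominated convergence applied to $\langle\phi,\cdot\psi\rangle$) yields
\[
\int_{-\infty}^\infty e^{itH_\omega} f_\alpha(t)\,dt = \int_0^\infty \Bigl(\int_{-\infty}^\infty e^{it\lambda} f_\alpha(t)\,dt\Bigr)\,E_\omega(d\lambda).
\]

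Next I would recall the classical Gaussian Fourier transform identity
\[
\int_{-\infty}^\infty e^{it\lambda}f_\alpha(t)\,dt = e^{-\lambda^2/4\alpha}\qquad(\lambda\in\mathbb{R}),
\]
which is obtained by completing the square in the exponent and evaluating the resulting Gaussian integral on a horizontal contour. Combining this with the previous display yields
\[
\int_{-\infty}^\infty e^{itH_\omega} f_\alpha(t)\,dt = \int_0^\infty e^{-\lambda^2/4\alpha}\,E_\omega(d\lambda).
\]

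Finally I would use the spectral gap. Since $\sigma(H_\omega)\subseteq\{0\}\cup[\gamma,\infty)$ and $E_\omega(\{0\})=P_0$, the above spectral integral splits as $P_0+\int_\gamma^\infty e^{-\lambda^2/4\alpha}\,E_\omega(d\lambda)$. Subtracting $P_0$ and using that the spectral calculus is norm-contractive on bounded Borel functions gives
\[
\Bigl\Vert\int_{-\infty}^\infty e^{itH_\omega}f_\alpha(t)\,dt-P_0\Bigr\Vert = \Bigl\Vert\int_\gamma^\infty e^{-\lambda^2/4\alpha}\,E_\omega(d\lambda)\Bigr\Vert \leq \sup_{\lambda\geq\gamma}e^{-\lambda^2/4\alpha} = e^{-\gamma^2/4\alpha},
\]
which is the claimed bound. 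The only mildly delicate point is the Fubini step interchanging the $t$-integral with the spectral integral for an unbounded $H_\omega$, but the $L^1$-integrability of $f_\alpha$ together with unitarity of $e^{itH_\omega}$ makes this routine, so there is no real obstacle.
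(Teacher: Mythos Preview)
Your argument is correct and is exactly the standard spectral-calculus computation that the paper invokes by citing \cite[(4.1)--(4.3)]{Hamza-Michalakis-Nachtergaele-Sims}: compute the Gaussian Fourier transform $\widehat{f_\alpha}(\lambda)=e^{-\lambda^2/4\alpha}$, apply it via the functional calculus for $H_\omega$, and bound on $[\gamma,\infty)$ by the supremum. There is nothing to add.
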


Note that $\alpha^t$ naturally extends to the whole $\mathcal{B}(\mathcal{H}_\omega)$ by $\alpha^t=\mathrm{Ad}e^{itH_\omega}$. Then we define
\begin{align*}
(A)_\alpha 
&:= \int_{-\infty}^\infty\alpha^t(A)f_{\alpha}(t)\,dt,
\quad A\in\mathcal{B}(\mathcal{H}_\omega),\\
\left(H'_{L,n}\right)_\alpha 
&:= H_\omega-\left(H'_{X(n+\mathfrak{r})}\right)_\alpha
\quad\text{with}\quad\mathrm{Dom}\left(\left(H'_{L,n}\right)_\alpha\right):=\mathrm{Dom}(H_\omega)
\end{align*}
for each $\alpha>0$.

\begin{lemma}\label{lem_alpha}
For any $A\in\mathcal{A}_\mathrm{loc}$ with $\langle\Omega,A\Omega\rangle=0$, we have
\[
\Vert(A)_\alpha\Omega\Vert
\leq \frac{e^{-\gamma^2/4\alpha}}{\gamma}\left\Vert\delta(A)\Omega\right\Vert,
\]
where $\delta$ is defined in Proposition \ref{prop_derivation}(i).
\end{lemma}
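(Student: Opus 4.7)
The plan is to compute $(A)_\alpha\Omega$ explicitly via the spectral calculus of $H_\omega$ and then use the spectral gap together with an elementary pointwise inequality to compare $\Vert (A)_\alpha \Omega\Vert$ to $\Vert H_\omega A\Omega\Vert$, the latter of which equals $\Vert \delta(A)\Omega\Vert$ by Corollary \ref{cor_dom}.

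First, I would use the fact that $\alpha^t=\mathrm{Ad}\,e^{itH_\omega}$ with $H_\omega\Omega=0$ to write
\[
\alpha^t(A)\Omega=e^{itH_\omega}Ae^{-itH_\omega}\Omega=e^{itH_\omega}A\Omega,
\]
so that, after exchanging integral and spectral measure (justified by dominated convergence, since $f_\alpha$ is integrable and $\lambda\mapsto 1$ lies in $L^\infty$ of the spectral measure of $A\Omega$),
\[
(A)_\alpha\Omega=\int_{-\infty}^\infty e^{itH_\omega}A\Omega\,f_\alpha(t)\,dt=\int_0^\infty \hat{f}_\alpha(\lambda)\,E_\omega(d\lambda)A\Omega,
\]
where the Fourier transform of the Gaussian gives $\hat{f}_\alpha(\lambda)=\int_{\mathbb{R}}e^{it\lambda}f_\alpha(t)\,dt=e^{-\lambda^2/4\alpha}$.

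Next, the hypothesis $\langle\Omega,A\Omega\rangle=0$ yields $E_\omega(\{0\})A\Omega=P_0A\Omega=\langle\Omega,A\Omega\rangle\Omega=0$, and together with the spectral gap $\sigma(H_\omega)\setminus\{0\}\subseteq[\gamma,\infty)$ this restricts the integral to $[\gamma,\infty)$:
\[
\Vert (A)_\alpha\Omega\Vert^2=\int_\gamma^\infty e^{-\lambda^2/2\alpha}\,d\mu(\lambda),\qquad d\mu(\lambda):=\langle A\Omega,E_\omega(d\lambda)A\Omega\rangle.
\]
I would then apply the pointwise estimate $e^{-\lambda^2/2\alpha}\leq (\lambda/\gamma)^2\,e^{-\gamma^2/2\alpha}$, valid for all $\lambda\geq\gamma$ (since $e^{-\lambda^2/2\alpha}\leq e^{-\gamma^2/2\alpha}$ and $\lambda^2/\gamma^2\geq 1$), to obtain
\[
\Vert (A)_\alpha\Omega\Vert^2\leq \frac{e^{-\gamma^2/2\alpha}}{\gamma^2}\int_\gamma^\infty\lambda^2\,d\mu(\lambda)=\frac{e^{-\gamma^2/2\alpha}}{\gamma^2}\Vert H_\omega A\Omega\Vert^2.
\]

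Finally, Corollary \ref{cor_dom} shows that $A\Omega\in\mathrm{Dom}(H_\omega)$ (since $\Omega\in\mathrm{Dom}(H_\omega)$ and $A\in\mathcal{A}_{\mathrm{loc}}$) and that $[H_\omega,A]\Omega=-i\delta(A)\Omega$; combined with $H_\omega\Omega=0$, this gives $H_\omega A\Omega=-i\delta(A)\Omega$, so $\Vert H_\omega A\Omega\Vert=\Vert\delta(A)\Omega\Vert$. Taking square roots yields the claimed inequality. I do not foresee a serious obstacle: the only delicate point is checking the Fubini-type interchange between the time integral and the spectral resolution, which is harmless here because $f_\alpha$ decays rapidly and the integrand is bounded uniformly in $\lambda$.
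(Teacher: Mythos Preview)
Your proof is correct. The ingredients are the same as the paper's---spectral calculus for $H_\omega$, the Gaussian Fourier transform $\hat f_\alpha(\lambda)=e^{-\lambda^2/4\alpha}$, the spectral gap, and Corollary~\ref{cor_dom}---but the organization differs. The paper proceeds in two stages: first it uses the gap on $(A)_\alpha\Omega$ to obtain $\Vert(A)_\alpha\Omega\Vert\le\gamma^{-1}\Vert H_\omega(A)_\alpha\Omega\Vert=\gamma^{-1}\Vert(\delta(A))_\alpha\Omega\Vert$, and then applies Lemma~\ref{lem_convolusion} to $\delta(A)\Omega$ (using $\langle\Omega,\delta(A)\Omega\rangle=0$) to extract the factor $e^{-\gamma^2/4\alpha}$. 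You instead compute $(A)_\alpha\Omega=\hat f_\alpha(H_\omega)A\Omega$ directly and apply the single pointwise bound $e^{-\lambda^2/2\alpha}\le(\lambda/\gamma)^2 e^{-\gamma^2/2\alpha}$ on $[\gamma,\infty)$, which yields both factors at once. Your route is slightly more streamlined and bypasses Lemma~\ref{lem_convolusion}; the paper's route has the mild advantage of isolating that lemma for reuse elsewhere.
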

\begin{proof}
Since $\langle\Omega,(A)_\alpha\Omega\rangle =\langle\Omega,A\Omega\rangle =0$, we have
\[
\Vert H_\omega(A)_\alpha\Omega\Vert^2
= \int_0^\infty \lambda^2\Vert E(d\lambda)(A)_\alpha\Omega\Vert^2 
\geq \gamma^2 \Vert(I-P_0)(A)_\alpha\Omega\Vert^2
= \gamma^2 \Vert(A)_\alpha\Omega\Vert^2.
\]
By Corollary \ref{cor_dom}, we have
\[
H_\omega(A)_\alpha\Omega
=\int_{-\infty}^\infty e^{itH_\omega}[H_\omega,A]e^{-itH_\omega}\Omega\,f_\alpha(t)\,dt 
= -i(\delta(A))_\alpha\Omega,
\]
where we note that $\delta(A)\in\mathcal{A}\subseteq\mathcal{B}(\mathcal{H}_\omega)$. Hence, we observe that
\begin{equation}\label{eq_alpha1}
\Vert(A)_\alpha\Omega\Vert
\leq \frac{1}{\gamma}\Vert H_\omega(A)_\alpha\Omega\Vert 
= \frac{1}{\gamma} \Vert(\delta(A))_\alpha\Omega\Vert. 
\end{equation}

Moreover, by Corollary \ref{cor_dom} again, we have $\langle\Omega,\delta(A)\Omega\rangle= \langle\Omega,i[H_\omega,A]\Omega\rangle=0$. Thus we have
\[
(\delta(A))_\alpha\Omega 
=\int_{-\infty}^\infty e^{itH_\omega}\delta(A)\Omega f_\alpha(t)\,dt
=\left(\int_{-\infty}^\infty e^{itH_\omega} f_\alpha(t)\,dt-P_0\right)\delta(A)\Omega.
\]
By Lemma \ref{lem_convolusion}, we obtain that
\begin{equation}\label{ineq_alpha2}
\Vert(\delta(A))_\alpha\Omega\Vert 
\leq\left\Vert\int_{-\infty}^\infty e^{itH_\omega} f_\alpha(t)\,dt-P_0\right\Vert \Vert\delta(A)\Omega\Vert
\leq e^{-\gamma^2/4\alpha} \Vert \delta(A)\Omega\Vert.
\end{equation}
The desired assertion follows from equations \eqref{eq_alpha1}, \eqref{ineq_alpha2}.
\end{proof}

\begin{corollary}\label{cor_alpha}
We have
\[
\left\Vert\left(H'_{R,n}\right)_\alpha\Omega\right\Vert 
\leq \frac{2\mathfrak{j}'\mathfrak{j}''}{\gamma}e^{-\gamma^2/4\alpha} |\partial_\Phi (X_\mathrm{int}(n+\mathfrak{r}))|,\ 
\left\Vert\left(H'_{L,n}\right)_\alpha\Omega\right\Vert 
\leq \frac{2\mathfrak{j}'\mathfrak{j}''}{\gamma}e^{-\gamma^2/4\alpha} |\partial_\Phi(X(n+\mathfrak{r}))|.
\]
\end{corollary}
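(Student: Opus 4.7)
The strategy is to apply Lemma \ref{lem_alpha} separately to the two local observables $H'_{R,n}$ and $H'_{X(n+\mathfrak{r})}$, and then to bound the resulting $\Vert\delta(\cdot)\Omega\Vert$ using the commutator estimates of Lemma \ref{lem_norm}. Lemma \ref{lem_alpha} requires its input to be a local observable with zero expectation against $\Omega$, so I first check hypotheses. Both $H'_{R,n}$ and $H'_{X(n+\mathfrak{r})}$ satisfy $\langle\Omega,\cdot\,\Omega\rangle=0$ by construction. Moreover, the existence of an $F$-function forces local finiteness of balls in $(\Gamma,d)$, so since $X$ is finite, $X(n+\mathfrak{r})$ is a finite set, and the finite-range assumption on $\Phi$ then implies that only finitely many $\Phi(Y)$ contribute to either sum. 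Hence $H'_{R,n},H'_{X(n+\mathfrak{r})}\in\mathcal{A}_{\mathrm{loc}}$.

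For the first inequality, applying Lemma \ref{lem_alpha} to $A=H'_{R,n}$ gives
\[
\Vert(H'_{R,n})_\alpha\Omega\Vert \leq \frac{e^{-\gamma^2/4\alpha}}{\gamma}\Vert\delta(H'_{R,n})\Omega\Vert.
\]
Since $\delta$ kills scalar multiples of $I$, one has $\delta(H'_{R,n})=\delta(H_{R,n})$. Proposition \ref{prop_derivation}(i) identifies $\delta(H_{R,n})$ as the norm limit of $i[H_\Lambda,H_{R,n}]$ as $\Lambda\nearrow\Gamma$, and the first inequality in Lemma \ref{lem_norm} bounds these commutators uniformly by $2\mathfrak{j}'\mathfrak{j}''|\partial_\Phi(X_{\mathrm{int}}(n+\mathfrak{r}))|$ for all $\Lambda\supseteq X_{\mathrm{int}}(n+\mathfrak{r})$. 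Passing to the limit and using $\Vert\cdot\,\Omega\Vert\leq\Vert\cdot\Vert$ yields the first claim.

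For the second inequality, the key observation is the cancellation
\[
(H'_{L,n})_\alpha\Omega = H_\omega\Omega - (H'_{X(n+\mathfrak{r})})_\alpha\Omega = -(H'_{X(n+\mathfrak{r})})_\alpha\Omega,
\]
since $\Omega\in\mathrm{Dom}(H_\omega)$ with $H_\omega\Omega=0$. This reduces matters to bounding $\Vert(H'_{X(n+\mathfrak{r})})_\alpha\Omega\Vert$, which is handled exactly as in the previous paragraph, now applying Lemma \ref{lem_alpha} to $A=H'_{X(n+\mathfrak{r})}$ and invoking the \emph{second} inequality in Lemma \ref{lem_norm} to get $\Vert\delta(H'_{X(n+\mathfrak{r})})\Omega\Vert \leq 2\mathfrak{j}'\mathfrak{j}''|\partial_\Phi(X(n+\mathfrak{r}))|$. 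The only point requiring any care is the last one: making sure that the unboundedness of $H_\omega$ causes no trouble because it is annihilated at $\Omega$. Apart from this, the argument is entirely mechanical and I do not anticipate any real obstacle.
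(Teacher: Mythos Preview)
Your proposal is correct and follows essentially the same approach as the paper: apply Lemma~\ref{lem_alpha} to $H'_{R,n}$ (resp.\ $H'_{X(n+\mathfrak{r})}$), bound $\Vert\delta(\cdot)\Omega\Vert$ via Lemma~\ref{lem_norm}, and for the second inequality use the cancellation $(H'_{L,n})_\alpha\Omega = -(H'_{X(n+\mathfrak{r})})_\alpha\Omega$ coming from $H_\omega\Omega=0$. Your explicit verification that $H'_{R,n}$ and $H'_{X(n+\mathfrak{r})}$ are genuinely local (finite sums, thanks to local finiteness of balls and finite range of $\Phi$) is a useful sanity check that the paper leaves implicit.
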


\begin{proof}
The first desired inequality follows by Lemmas \ref{lem_norm}, \ref{lem_alpha}. Since we have  $\Vert(H'_{L,n})_\alpha\Omega\Vert = \Vert(H'_{X(n+\mathfrak{r})})_\alpha \Omega\Vert$, the above argument also works for showing the second desired inequality.
\end{proof}

\medskip
\begin{proposition}\label{prop_MR}
For each $\ell>\mathfrak{r}$ and each $\alpha>0$ there exists a self-adjoint element $M_{R,\ell,\alpha}\in\mathcal{A}_X$ such that
\begin{align*}
&\left\Vert\left(H'_{R,\ell}\right)_\alpha-M_{R,\ell,\alpha}\right\Vert \\
&\quad\leq \frac{2\mathfrak{j}'\mathfrak{j}''c_\mu e^{\mu\mathfrak{r}}}{\sqrt{\pi\alpha}} |\partial_\Phi (X_\mathrm{int}(\ell+\mathfrak{r}))|\, |\partial_\Phi (X_\mathrm{int}(\ell-\mathfrak{r}))| e^{-\mu\ell/2} 
+
\frac{4\mathfrak{j}'\mathfrak{j}''}{\sqrt{\pi\alpha}} |\partial_\Phi (X_\mathrm{int}(\ell+\mathfrak{r}))| e^{-\alpha\ell^2/4v_\mu^2}.
\end{align*}
When $\alpha=1/\ell$, this estimate turns out to be
\[
\left\Vert\left(H'_{R,\ell}\right)_{1/\ell}-M_{R,\ell,1/\ell}\right\Vert
\leq C_1|\partial X(\ell+2\mathfrak{r})|^2\,e^{-C_2\ell}
\]
with some constants $C_1,C_2>0$ independent of the choice of $(X,\ell)$, where we used Lemma \ref{lem_set}.
\end{proposition}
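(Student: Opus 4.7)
My plan is to take $M_{R,\ell,\alpha} := E_X\bigl((H'_{R,\ell})_\alpha\bigr)$, where $E_X : \mathcal{B}(\mathcal{H}_\omega)\to \mathcal{A}_X$ is the conditional expectation from Proposition \ref{prop_partial_trace}. Self-adjointness of $(H'_{R,\ell})_\alpha$ (since $\alpha^t$ preserves self-adjointness and $f_\alpha$ is a real density) passes to $M_{R,\ell,\alpha}\in\mathcal{A}_X$, and by Proposition \ref{prop_partial_trace} it suffices to bound $\|[(H'_{R,\ell})_\alpha,B]\|$ uniformly over $B\in\mathcal{A}_X'\cap\mathcal{A}$ with $\|B\|\leq 1$.

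The crucial observation is that $H_{R,\ell}$ already lies inside $\mathcal{A}_X$: every $Y$ in the defining sum satisfies $\mathrm{diam}(Y)\leq\mathfrak{r}$ and meets $X_\mathrm{int}(\ell+\mathfrak{r})$, so $Y\subseteq X_\mathrm{int}(\ell)\subseteq X$. Hence $[H_{R,\ell},B]=0$ for $B\in\mathcal{A}_X'\cap\mathcal{A}$, and the Duhamel identity for the $C_0$-group $\alpha^t$ applied to $H_{R,\ell}\in\mathcal{A}_\mathrm{loc}\subseteq\mathrm{Dom}(\delta)$ (Proposition \ref{prop_derivation}) yields
\[
[\alpha^t(H_{R,\ell}),B]\;=\;\int_0^t[\alpha^s(K),B]\,ds,\qquad K:=\delta(H_{R,\ell}).
\]
Now, Proposition \ref{prop_derivation}(i) combined with Lemma \ref{lem_norm} (via $\Lambda\nearrow\Gamma$) gives $\|K\|\leq 2\mathfrak{j}'\mathfrak{j}''|\partial_\Phi X_\mathrm{int}(\ell+\mathfrak{r})|$; and tracking the nonvanishing brackets $[\Phi(Z),\Phi(Y)]$ in $\lim_\Lambda i[H_\Lambda,H_{R,\ell}]$ (which by assumption (A1) require $Z\cap Y\neq\emptyset$) shows that both $Y,Z\subseteq X_\mathrm{int}(\ell-\mathfrak{r})$, so $K\in\mathcal{A}_{X_\mathrm{int}(\ell-\mathfrak{r})}$. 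I would then apply Corollary \ref{cor_Lieb-Robinson} to $K$ against $B$ (passing $\alpha_\Lambda^s\to\alpha^s$ via Proposition \ref{prop_time}) using $d(X_\mathrm{int}(\ell-\mathfrak{r}),X^c)>\ell-\mathfrak{r}$, to obtain
\[
\|[\alpha^s(K),B]\|\;\leq\;2\mathfrak{j}'\mathfrak{j}''c_\mu e^{\mu\mathfrak{r}}|\partial_\Phi X_\mathrm{int}(\ell+\mathfrak{r})|\,|\partial_\Phi X_\mathrm{int}(\ell-\mathfrak{r})|\,e^{-\mu\ell}e^{\mu v_\mu|s|}\|B\|.
\]

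I would set $T:=\ell/(2v_\mu)$ and split $\int_\mathbb{R}[\alpha^t(H_{R,\ell}),B]f_\alpha(t)\,dt$ at $|t|=T$. For $|t|\leq T$, $|s|\leq T$ gives $e^{\mu v_\mu|s|}e^{-\mu\ell}\leq e^{-\mu\ell/2}$, so the Duhamel formula produces $\|[\alpha^t(H_{R,\ell}),B]\|\leq|t|\cdot 2\mathfrak{j}'\mathfrak{j}''c_\mu e^{\mu\mathfrak{r}}|\partial_\Phi X_\mathrm{int}(\ell+\mathfrak{r})||\partial_\Phi X_\mathrm{int}(\ell-\mathfrak{r})|e^{-\mu\ell/2}\|B\|$; for $|t|>T$ I instead use the trivial bound $\|[\alpha^s(K),B]\|\leq 2\|K\|\|B\|$, giving $\|[\alpha^t(H_{R,\ell}),B]\|\leq 4|t|\mathfrak{j}'\mathfrak{j}''|\partial_\Phi X_\mathrm{int}(\ell+\mathfrak{r})|\|B\|$. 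Using \eqref{eq_f-alpha} with $c=0$ and $c=T$ respectively, $\int_\mathbb{R}|t|f_\alpha(t)\,dt=1/\sqrt{\pi\alpha}$ and $\int_{|t|>T}|t|f_\alpha(t)\,dt=e^{-\alpha\ell^2/(4v_\mu^2)}/\sqrt{\pi\alpha}$, and summing the two contributions reproduces the claimed estimate line by line. The second displayed estimate (with $\alpha=1/\ell$) then follows from Lemma \ref{lem_set}, which bounds both $|\partial_\Phi X_\mathrm{int}(\ell\pm\mathfrak{r})|$ by $|\partial X(\ell+2\mathfrak{r})|$, after absorbing the polynomial prefactor $\sqrt{\ell}$ into the exponential. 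The one technical point requiring real care is identifying $K=\delta(H_{R,\ell})$ as a bounded, strictly-interior-supported element of $\mathcal{A}_{X_\mathrm{int}(\ell-\mathfrak{r})}$; once that is available, applying Lieb-Robinson to $K$ (rather than to the unbounded $H_\omega$) and the Gaussian optimization proceed cleanly.
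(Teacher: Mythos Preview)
Your proposal is correct and follows essentially the same approach as the paper: reduce via Proposition~\ref{prop_partial_trace} to bounding $\|[(H'_{R,\ell})_\alpha,B]\|$, use the Duhamel-type identity to replace $H_{R,\ell}$ by the commutator $K=[H_\omega,H_{R,\ell}]$ (which is bounded and supported in $\mathcal{A}_{X_{\mathrm{int}}(\ell-\mathfrak{r})}$), apply the Lieb--Robinson bound to $K$, and split the Gaussian integral at $T=\ell/(2v_\mu)$. The only cosmetic difference is that the paper carries out the Duhamel step with the finite-volume dynamics $\alpha_\Lambda^t$ and $[H_\Lambda,H_{R,\ell}]$ before passing $\Lambda\nearrow\Gamma$, whereas you invoke the infinite-volume generator $\delta$ directly; both routes yield identical estimates.
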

\begin{proof}
Since $(H'_{R,\ell})_\alpha$ is self-adjoint, Proposition \ref{prop_partial_trace} enables us to find a self-adjoint element $M_{R,\ell,\alpha}\in\mathcal{A}_X$ such that 
\[ 
\left\Vert\left(H'_{R,\ell}\right)_\alpha-M_{R,\ell,\alpha}\right\Vert \leq\sup_{\substack{B\in\mathcal{A}_X'\cap\mathcal{A}\\\Vert B\Vert\leq1}} \Vert[(H'_{R,\ell})_\alpha, B]\Vert.
\]

Let $B\in\mathcal{A}_X'\cap\mathcal{A}$ with $\Vert B\Vert\leq1$ be arbitrarily chosen and fixed. For any $t\in\mathbb{R}$ and for any $\Lambda\in \mathcal{P}_0(\Gamma)$ with $X_\mathrm{int}(\ell+\mathfrak{r}) \subseteq\Lambda$, we have
\[
\frac{d}{dt} \left[\alpha_\Lambda^t\left(H'_{R,\ell}\right),B\right]
= \left[\alpha_\Lambda^t \left(i\left[H_\Lambda, H'_{R,\ell}\right]\right) ,B\right]
= \left[\alpha_\Lambda^t \left(i\left[H_\Lambda, H_{R,\ell}\right]\right) ,B\right], 
\]
and hence
\[
\left\Vert\left[\alpha_\Lambda^t\left(H'_{R,\ell}\right),B\right]\right\Vert 
= \left\Vert\int_0^t \left[\alpha_\Lambda^s\left(i\left[H_\Lambda,H_{R,\ell}\right]\right),B\right]\,ds\right\Vert
\leq \int_{\min\{0,t\}}^{\max\{0,t\}} \left\Vert\left[\alpha_\Lambda^s\left(\left[H_\Lambda,H_{R,\ell}\right]\right),B\right]\right\Vert\,ds.
\]

By assumption (A1) and since $\ell>\mathfrak{r}$, we have 
\[
[H_\Lambda,H_{R,\ell}]
= \sum_{\substack{Y\in\mathcal{P}_0(\Gamma) \\Y\cap X_\mathrm{int}(\ell+\mathfrak{r}) \neq\emptyset}} \sum_{\substack{Z\subseteq\Lambda\\ Z\cap Y\neq\emptyset}} [\Phi(Z),\Phi(Y)] \in \mathcal{A}_{X_\mathrm{int}(\ell-\mathfrak{r})}.
\]
In fact, we observe that $Y\subseteq X_\mathrm{int}(\ell)$ and $Z\subseteq X_\mathrm{int}(\ell-\mathfrak{r})$ for every $Y$ and every $Z$ in the above sum. Thus each summand $[\Phi(Y),\Phi(Z)]$ in the above sum belong to $\mathcal{A}_{X_\mathrm{int}(\ell-\mathfrak{r})}$. Since $B\in \mathcal{A}_X' \cap \mathcal{A}$ and $X_\mathrm{int}(\ell-\mathfrak{r}) \subseteq X$, Corollary \ref{cor_Lieb-Robinson} and Lemma \ref{lem_norm} show that 
\begin{align*}
\left\Vert\left[\alpha_\Lambda^t\left(\left[H_\Lambda, H_{R,\ell}\right]\right),B\right]\right\Vert
&\leq c_\mu\Vert[H_\Lambda,H_{R,\ell}]\Vert\, |\partial_\Phi (X_\mathrm{int}(\ell-\mathfrak{r}))| \,e^{-\mu(d(X_\mathrm{int}(\ell-\mathfrak{r}),X^c)-v_\mu|t|)} \\
&\leq 2\mathfrak{j}'\mathfrak{j}''c_\mu |\partial_\Phi (X_\mathrm{int} (\ell+\mathfrak{r}))|\, |\partial_\Phi (X_\mathrm{int}(\ell-\mathfrak{r}))| e^{\mu\mathfrak{r}}e^{-\mu\ell/2}
\end{align*}
for any $t\in [-T,T]$ with $T=\ell/2v_\mu$ and for any $\Lambda\supseteq X_\mathrm{int}(\ell-\mathfrak{r})$. Hence, we have
\[
\left\Vert\left[\alpha_\Lambda^t\left(H'_{R,\ell}\right),B\right]\right\Vert \leq 2\mathfrak{j}'\mathfrak{j}''c_\mu |\partial_\Phi (X_\mathrm{int}(\ell+\mathfrak{r}))|\, |\partial_\Phi (X_\mathrm{int}(\ell-\mathfrak{r}))| e^{\mu\mathfrak{r}}e^{-\mu\ell/2}\,|t|
\]
and taking $\Lambda\nearrow\Gamma$, we have
\[
\left\Vert\left[\alpha^t\left(H'_{R,\ell}\right),B\right]\right\Vert
\leq 2\mathfrak{j}'\mathfrak{j}''c_\mu|\partial_\Phi (X_\mathrm{int}(\ell+\mathfrak{r}))| \,|\partial_\Phi (X_\mathrm{int}(\ell-\mathfrak{r}))| e^{\mu\mathfrak{r}}e^{-\mu\ell/2}\,|t|
\]
for any $t\in [-T,T]$.

On the other hand, by Lemma \ref{lem_norm}, we have
\[
\left\Vert\left[\alpha_\Lambda^t \left(\left[H_\Lambda,H_{R,\ell}\right]\right),B\right]\right\Vert 
\leq 2\Vert[H_\Lambda,H_{R,\ell}]\Vert
\leq 4\mathfrak{j}'\mathfrak{j}'' |\partial_\Phi (X_\mathrm{int}(\ell+\mathfrak{r}))|
\]
for any $t\in\mathbb{R}$ and for any $\Lambda\supseteq X_\mathrm{int}(\ell+\mathfrak{r})$. Hence, we have
\[
\left\Vert\left[\alpha_\Lambda^t\left(H'_{R,\ell}\right),B\right]\right\Vert
\leq 4\mathfrak{j}'\mathfrak{j}''|\partial_\Phi (X_\mathrm{int}(\ell+\mathfrak{r}))|\,|t|
\]
and taking $\Lambda\nearrow\Gamma$, we have
\[
\left\Vert\left[\alpha^t\left(H'_{R,\ell}\right),B\right]\right\Vert 
\leq 4\mathfrak{j}'\mathfrak{j}''|\partial_\Phi (X_\mathrm{int}(\ell+\mathfrak{r}))|\,|t|
\]
for any $t\in\mathbb{R}$.

Therefore, we have 
\begin{align*}
\left\Vert\left[\alpha^t\left(H'_{R,\ell}\right),B\right]\right\Vert
&\leq 
\begin{cases}
2\mathfrak{j}'\mathfrak{j}''c_\mu|\partial_\Phi (X_\mathrm{int}(\ell+\mathfrak{r}))| \,|\partial_\Phi (X_\mathrm{int}(\ell-\mathfrak{r}))| e^{\mu\mathfrak{r}} e^{-\mu\ell/2}\,|t| & \text{if $|t|<T$,} \\
4\mathfrak{j}'\mathfrak{j}'' |\partial_\Phi (X_\mathrm{int}(\ell+\mathfrak{r}))|\,|t| & \text{if $|t|\geq T$,}
\end{cases}
\end{align*}
and
\begin{align*}
\left\Vert\left[\left(H'_{R,\ell}\right)_\alpha,B\right]\right\Vert
&\leq \int_{-\infty}^\infty \left\Vert\left[\alpha^t\left(H'_{R,\ell}\right),B\right]\right\Vert f_\alpha(t)\,dt \\
&\leq 4\mathfrak{j}'\mathfrak{j}'' c_\mu|\partial_\Phi (X_\mathrm{int}(\ell+\mathfrak{r}))|\,|\partial_\Phi (X_\mathrm{int}(\ell-\mathfrak{r}))| e^{\mu\mathfrak{r}}e^{-\mu\ell/2} \left(\frac{1}{2\sqrt{\pi\alpha}}\right)\\
&\quad+ 8\mathfrak{j}'\mathfrak{j}'' |\partial_\Phi (X_\mathrm{int}(\ell+\mathfrak{r}))| \left(\frac{e^{-\alpha T^2}}{2\sqrt{\pi\alpha}}\right),
\end{align*}
by equation \eqref{eq_f-alpha}, for any $B\in \mathcal{A}_X'\cap\mathcal{A}$ with $\Vert B\Vert\leq1$. Hence we are done. 
\end{proof}

\begin{corollary}\label{cor_PR} 
For each $\ell>\mathfrak{r}$, let $E_{R,\ell}$ be the spectral measure of $M_{R,\ell,1/\ell}$ and define 
\[
O_R(X,\ell)
:= E_{R,\ell}\big([-\xi(\ell),\xi(\ell)]\big)\in\mathcal{A}_X\quad 
\text{with}
\quad\xi(\ell)
:=\max\big\{ e^{-\gamma^2\ell/16}, e^{-\mu\ell/4}, e^{-\ell/16 v_\mu^2}\big\}.
\]
Then one can find two positive constants $C_1,C_2>0$ (independent of the choice of $(X,\ell)$) such that
\[
\big\Vert\big(O_R(X,\ell)-I\big)\Omega\big\Vert
\leq C_1|\partial X(\ell+2\mathfrak{r})|^2\,e^{-C_2\ell}.
\]
\end{corollary}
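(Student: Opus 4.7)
The plan is to use the fact that $O_R(X,\ell)$ is the spectral projection of the self-adjoint operator $M_{R,\ell,1/\ell} \in \mathcal{A}_X$ onto the window $[-\xi(\ell),\xi(\ell)]$, and to control the tail $\|(I - O_R(X,\ell))\Omega\|$ via a Chebyshev-type spectral inequality. Since $O_R(X,\ell)$ is a projection, $(O_R(X,\ell) - I)^2 = I - O_R(X,\ell)$, so
\[
\|(O_R(X,\ell) - I)\Omega\|^2 = \langle \Omega, (I - O_R(X,\ell))\Omega\rangle = \mu_\Omega\bigl(\{\lambda : |\lambda| > \xi(\ell)\}\bigr),
\]
where $\mu_\Omega$ is the spectral measure of $M_{R,\ell,1/\ell}$ in the vector state $\Omega$. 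By the standard Chebyshev estimate,
\[
\mu_\Omega\bigl(\{|\lambda| > \xi(\ell)\}\bigr) \leq \frac{1}{\xi(\ell)^2}\int \lambda^2\, d\mu_\Omega(\lambda) = \frac{\|M_{R,\ell,1/\ell}\Omega\|^2}{\xi(\ell)^2}.
\]
Thus it suffices to bound $\|M_{R,\ell,1/\ell}\Omega\|$ from above by something of order $\xi(\ell)\cdot|\partial X(\ell+2\mathfrak{r})|^2 e^{-C_2\ell}$ (after taking square roots).

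Next, I would split $\|M_{R,\ell,1/\ell}\Omega\|$ via the triangle inequality as
\[
\|M_{R,\ell,1/\ell}\Omega\| \leq \bigl\|M_{R,\ell,1/\ell} - (H'_{R,\ell})_{1/\ell}\bigr\| + \bigl\|(H'_{R,\ell})_{1/\ell}\Omega\bigr\|.
\]
The first term is controlled by Proposition \ref{prop_MR} (with $\alpha = 1/\ell$), which gives a bound of the form $C|\partial X(\ell+2\mathfrak{r})|^2 e^{-C'\ell}$. For the second term, I apply Corollary \ref{cor_alpha} with $\alpha = 1/\ell$, obtaining
\[
\bigl\|(H'_{R,\ell})_{1/\ell}\Omega\bigr\| \leq \frac{2\mathfrak{j}'\mathfrak{j}''}{\gamma}\, e^{-\gamma^2\ell/4}\, |\partial_\Phi X_{\mathrm{int}}(\ell+\mathfrak{r})|,
\]
and then use Lemma \ref{lem_set} to replace $|\partial_\Phi X_{\mathrm{int}}(\ell+\mathfrak{r})|$ by $|\partial X(\ell+2\mathfrak{r})|$ (up to irrelevant adjustments of constants).

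Finally, I would combine these estimates and divide by $\xi(\ell)$. The point of the specific choice
\[
\xi(\ell) = \max\bigl\{ e^{-\gamma^2\ell/16},\, e^{-\mu\ell/4},\, e^{-\ell/16 v_\mu^2}\bigr\}
\]
is exactly that each of the three decaying exponentials appearing in the numerator — $e^{-\gamma^2\ell/4}$ from Corollary \ref{cor_alpha}, and $e^{-\mu\ell/2}$, $e^{-\ell/4v_\mu^2}$ from Proposition \ref{prop_MR} — decays strictly faster than $\xi(\ell)$, so the quotient of each term by $\xi(\ell)$ still decays exponentially in $\ell$ (the residual polynomial factors $\sqrt{\ell}$ from $\sqrt{\alpha}^{-1} = \sqrt{\ell}$ are absorbed into a slightly worsened exponential rate). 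Taking square roots yields the stated bound.

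The main obstacle is purely bookkeeping: tracking which of the three exponential rates dominates in each summand and verifying that $\xi(\ell)^{-1}$ never overwhelms them. Once the maxima in the definition of $\xi(\ell)$ are matched against the three decay rates, the conclusion is immediate with appropriately renamed constants $C_1, C_2 > 0$ independent of $(X,\ell)$.
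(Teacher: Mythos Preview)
Your proposal is correct and follows essentially the same approach as the paper: apply the Chebyshev spectral inequality to bound $\|(O_R(X,\ell)-I)\Omega\|$ by $\xi(\ell)^{-1}\|M_{R,\ell,1/\ell}\Omega\|$, split the latter via the triangle inequality, and invoke Corollary~\ref{cor_alpha}, Proposition~\ref{prop_MR}, and Lemma~\ref{lem_set}. Your additional discussion of why the specific choice of $\xi(\ell)$ ensures the quotient still decays exponentially is helpful detail that the paper leaves implicit.
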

\begin{proof}
By the Chebyshev inequality,
\[
\Vert M_{R,\ell,1/\ell}\Omega\Vert^2 
\geq\xi(\ell)^2 \big\Vert E_{R,\ell}\big([-\xi(\ell),\xi(\ell)]^c\big)\Omega\big\Vert^2 
=\xi(\ell)^2\big\Vert\big(O_R(X,\ell)-I\big)\Omega\big\Vert^2.
\]
Thus we have
\[
\Vert(O_R(X,\ell)-I)\Omega\Vert 
\leq \frac{1}{\xi(\ell)}\left\{\left\Vert\left(H'_{R,\ell}\right)_{1/\ell}\Omega\right\Vert +\left\Vert\left(H'_{R,\ell}\right)_{1/\ell}-M_{R,\ell,1/\ell}\right\Vert\right\}.
\]
The desired assertion follows by Corollary \ref{cor_alpha}, Proposition \ref{prop_MR} and Lemma \ref{lem_set}.
\end{proof}

\medskip
By Lemma \ref{lem_norm}, we have
\begin{align*}
\Vert[H_\Lambda,H_{B,\ell}]\Vert
&= \Vert[H_\Lambda,H_{X(\ell+\mathfrak{r})}-H_{R,\ell}]\Vert
\leq \Vert[H_\Lambda,H_{X(\ell+\mathfrak{r})}]\Vert+\Vert[H_\Lambda,H_{R,\ell}]\Vert\\
&\leq 2\mathfrak{j}'\mathfrak{j}''\big(|\partial_\Phi (X(\ell+\mathfrak{r}))|+ |\partial_\Phi(X_\mathrm{int}(\ell+\mathfrak{r}))|\big)
\end{align*}
By inequality \eqref{ineq_HB}, we also have 
\[
\left\Vert\left(H'_{B,\ell}\right)_\alpha\right\Vert 
\leq \Vert H'_{B,\ell}\Vert
\leq 2\mathfrak{j}|\partial X(\ell+\mathfrak{r})|.
\]
The same proof as Proposition \ref{prop_MR} with $H_{B,\ell}\in \mathcal{A}_{\partial X(\ell+\mathfrak{r})}$ shows the following: 

\begin{proposition}\label{prop_MB}
For each $\ell>\mathfrak{r}$ and each $\alpha>0$, there exists a self-adjoint element $M_{B,\ell,\alpha}\in \mathcal{A}_{\partial X(2\ell+\mathfrak{r})}$ such that $\Vert M_{B,\ell,\alpha}\Vert \leq 2\mathfrak{j}|\partial X(\ell+\mathfrak{r})|$ and
\begin{align*}
\left\Vert\left(H'_{B,\ell}\right)_\alpha-M_{B,\ell,\alpha}\right\Vert 
&\leq \frac{2\mathfrak{j}'\mathfrak{j}''c_\mu e^{\mu\mathfrak{r}}}{\sqrt{\pi\alpha}} \big(|\partial_\Phi (X(\ell+\mathfrak{r}))|+ |\partial_\Phi(X_\mathrm{int}(\ell+\mathfrak{r}))|\big) |\partial_\Phi (\partial X(\ell+2\mathfrak{r}))|\,e^{-\mu\ell/2} \\
&\quad+ \frac{4\mathfrak{j}'\mathfrak{j}''}{\sqrt{\pi\alpha}} \big(|\partial_\Phi (X(\ell+\mathfrak{r}))|+ |\partial_\Phi(X_\mathrm{int}(\ell+\mathfrak{r}))|\big) e^{-\alpha\ell^2/4v_\mu^2}.
\end{align*}
When $\alpha=1/\ell$, this estimate turns out to be
\[
\left\Vert\left(H'_{B,\ell}\right)_{1/\ell}-M_{B,\ell,1/\ell}\right\Vert \leq C_1|\partial X(\ell+2\mathfrak{r})|^2\, e^{-C_2\ell}
\]
with some constants $C_1,C_2>0$ independent of the choice of $(X,\ell)$, where we used Lemma \ref{lem_set}.
\end{proposition}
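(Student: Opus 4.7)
The plan is to mimic the proof of Proposition \ref{prop_MR} step-by-step, with $H_{B,\ell}$ playing the role of $H'_{R,\ell}$ and the region $\partial X(2\ell+\mathfrak{r})$ playing the role of $X$. First, since $(H'_{B,\ell})_\alpha$ is self-adjoint, applying Proposition \ref{prop_partial_trace} with $\partial X(2\ell+\mathfrak{r})$ in place of $X$ yields a self-adjoint $M_{B,\ell,\alpha}\in\mathcal{A}_{\partial X(2\ell+\mathfrak{r})}$ satisfying
\[
\left\Vert\left(H'_{B,\ell}\right)_\alpha-M_{B,\ell,\alpha}\right\Vert \leq \sup_{\substack{B\in\mathcal{A}_{\partial X(2\ell+\mathfrak{r})}'\cap\mathcal{A}\\\Vert B\Vert\leq1}} \left\Vert\left[\left(H'_{B,\ell}\right)_\alpha,B\right]\right\Vert.
\]
The norm bound $\Vert M_{B,\ell,\alpha}\Vert\leq 2\mathfrak{j}|\partial X(\ell+\mathfrak{r})|$ then follows because the conditional expectation is a contraction and $\Vert(H'_{B,\ell})_\alpha\Vert\leq\Vert H'_{B,\ell}\Vert\leq 2\mathfrak{j}|\partial X(\ell+\mathfrak{r})|$ by \eqref{ineq_HB}.

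For the commutator supremum, I fix $B\in\mathcal{A}_{\partial X(2\ell+\mathfrak{r})}'\cap\mathcal{A}$ with $\Vert B\Vert\leq 1$ and estimate $\Vert[\alpha^t(H'_{B,\ell}),B]\Vert$ by integrating the identity $\tfrac{d}{dt}[\alpha_\Lambda^t(H'_{B,\ell}),B]=[\alpha_\Lambda^t(i[H_\Lambda,H_{B,\ell}]),B]$ exactly as in Proposition \ref{prop_MR}. The key geometric fact, analogous to the inclusion $[H_\Lambda,H_{R,\ell}]\in\mathcal{A}_{X_\mathrm{int}(\ell-\mathfrak{r})}$ used there, is that $[H_\Lambda,H_{B,\ell}]\in\mathcal{A}_{\partial X(\ell+2\mathfrak{r})}$, because every $Z$ meeting $\partial X(\ell+\mathfrak{r})$ with $\mathrm{diam}(Z)\leq\mathfrak{r}$ is contained in $\partial X(\ell+2\mathfrak{r})$. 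Since $\ell>\mathfrak{r}$, this gives $\partial X(\ell+2\mathfrak{r})\subseteq\partial X(2\ell+\mathfrak{r})$ and $d(\partial X(\ell+2\mathfrak{r}),\partial X(2\ell+\mathfrak{r})^c)\geq\ell-\mathfrak{r}$, so Corollary \ref{cor_Lieb-Robinson} applies with $A=[H_\Lambda,H_{B,\ell}]$ and $Y=\partial X(2\ell+\mathfrak{r})$. Combined with the commutator estimate $\Vert[H_\Lambda,H_{B,\ell}]\Vert\leq 2\mathfrak{j}'\mathfrak{j}''(|\partial_\Phi(X(\ell+\mathfrak{r}))|+|\partial_\Phi(X_\mathrm{int}(\ell+\mathfrak{r}))|)$ derived just before the proposition, this produces the ``sharp'' bound with exponential factor $e^{\mu\mathfrak{r}}e^{-\mu\ell/2}$ valid for $|t|\leq T:=\ell/(2v_\mu)$. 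For $|t|\geq T$ I fall back on the ``crude'' bound coming from $\Vert[\,\cdot\,,B]\Vert\leq 2\Vert[H_\Lambda,H_{B,\ell}]\Vert$.

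Convolving these two estimates against $f_\alpha$ using \eqref{eq_f-alpha}, and passing to the thermodynamic limit $\Lambda\nearrow\Gamma$ via Proposition \ref{prop_time}, produces the two summands in the stated inequality term-by-term. The specialization $\alpha=1/\ell$ together with Lemma \ref{lem_set} then absorbs $|\partial_\Phi(\partial X(\ell+2\mathfrak{r}))|$, $|\partial_\Phi(X(\ell+\mathfrak{r}))|$ and $|\partial_\Phi(X_\mathrm{int}(\ell+\mathfrak{r}))|$ into a single factor $|\partial X(\ell+2\mathfrak{r})|^2$, yielding the simplified form. The only genuinely nontrivial---though essentially bookkeeping---step I foresee is verifying the two geometric statements $[H_\Lambda,H_{B,\ell}]\in\mathcal{A}_{\partial X(\ell+2\mathfrak{r})}$ and $d(\partial X(\ell+2\mathfrak{r}),\partial X(2\ell+\mathfrak{r})^c)\geq\ell-\mathfrak{r}$, since the choice of $\partial X(2\ell+\mathfrak{r})$ as the localization region has to be tight enough to deliver exactly the advertised $e^{-\mu\ell/2}$ decay but loose enough to contain the support of $[H_\Lambda,H_{B,\ell}]$.
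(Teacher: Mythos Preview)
Your proposal is correct and follows exactly the approach the paper intends: the paper's entire ``proof'' of this proposition is the remark that ``The same proof as Proposition \ref{prop_MR} with $H_{B,\ell}\in \mathcal{A}_{\partial X(\ell+\mathfrak{r})}$ shows the following,'' and your outline carries this out with the correct localization $[H_\Lambda,H_{B,\ell}]\in\mathcal{A}_{\partial X(\ell+2\mathfrak{r})}$, the correct distance estimate $d(\partial X(\ell+2\mathfrak{r}),\partial X(2\ell+\mathfrak{r})^c)\geq\ell-\mathfrak{r}$, and the commutator bound already recorded just before the proposition.
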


\medskip
Thanks to assumption (A2) in subsection \ref{S2.3}, we can choose an (irreducible) unitary representation $U:G\to\mathcal{A}_X\cong M_{d_X}(\mathbb{C})$ of a finite group $G$ so that $\{U(g)\,;\,g\in G\}''= \mathcal{A}_X$. ({\it n.b.}, for each $k\geq2$ the Specht (irreducible) representation of the symmetric group of rank $k$ corresponding to the partition $(k-1,1)$ must be of dimension $k-1$.) We then define an operator
\[
M_{L,n.\alpha} 
:= \left(H'_{L,n}\right)_\alpha+\frac{1}{|G|} \sum_{g\in G}U(g)^*\left[\left(H'_{L,n}\right)_\alpha,U(g)\right]
\]
with $\mathrm{Dom}(M_{L,n,\alpha})= \mathrm{Dom}((H'_{L,n})_\alpha)= \mathrm{Dom}(H_\omega)$. This construction is indeed due to Matsui \cite{Matsui13}. Since $U(g)\in\mathcal{A}_X$, i.e., a local observable and since $U(g)\mathrm{Dom}(H_\omega)\subseteq\mathrm{Dom}(H_\omega)$ by Corollary \ref{cor_dom}, we have
\[
U(g)^*\left[\left(H'_{L,n}\right)_\alpha,U(g)\right]\phi
=-\left( iU(g)^*\delta(U(g))+U(g)^*\left[\left(H'_{X(n+\mathfrak{r})}\right)_\alpha,U(g)\right]\right)\phi
\] 
for every $\phi\in\mathrm{Dom}(H_\omega)$. Hence $U(g)^*[(H'_{L,n})_\alpha,U(g)]$ extends to the whole $\mathcal{H}_\omega$ as a bounded self-adjoint observable living in $\mathcal{A}$. In particular, $M_{L,n,\alpha}$ must be self-adjoint. 

\begin{proposition}\label{prop_affiliated}
The self-adjoint operator $M_{L,n,\alpha}$ is affiliated with $\mathcal{A}_X'=(\mathcal{A}_X'\cap \mathcal{A})''$ on $\mathcal{H}_\omega$. In partcular, its spectral measure lives in $(\mathcal{A}_X'\cap\mathcal{A})''$. 
\end{proposition}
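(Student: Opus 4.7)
The plan is to first rewrite $M_{L,n,\alpha}$ in a manifestly $G$-symmetric form. Because each $U(g)\in\mathcal{A}_X\subseteq\mathcal{A}_\mathrm{loc}$, Corollary \ref{cor_dom} ensures $U(g)\mathrm{Dom}(H_\omega)\subseteq\mathrm{Dom}(H_\omega)$, so for $\phi\in\mathrm{Dom}(H_\omega)$ we may expand
\[
U(g)^*\bigl[(H'_{L,n})_\alpha,U(g)\bigr]\phi
= U(g)^*(H'_{L,n})_\alpha U(g)\phi - (H'_{L,n})_\alpha\phi.
\]
Averaging over $g\in G$ and adding $(H'_{L,n})_\alpha\phi$, the ``$-(H'_{L,n})_\alpha\phi$'' terms combine to cancel that summand and yield the telescoped identity
\[
M_{L,n,\alpha}\,\phi = \frac{1}{|G|}\sum_{g\in G} U(g)^*(H'_{L,n})_\alpha U(g)\,\phi \qquad (\phi\in\mathrm{Dom}(H_\omega)).
\]
Each $U(g)^*(H'_{L,n})_\alpha U(g)$ has domain $\mathrm{Dom}(H_\omega)$ because $U(g)$ is unitary and preserves $\mathrm{Dom}(H_\omega)$, so the right-hand side is unambiguously defined.

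Next I would establish $G$-invariance of $M_{L,n,\alpha}$. For any $h\in G$ and $\phi\in\mathrm{Dom}(H_\omega)$, using $U(g)U(h)=U(gh)$ and the averaged form above,
\[
U(h)^* M_{L,n,\alpha} U(h)\,\phi
= \frac{1}{|G|}\sum_{g\in G} U(gh)^*(H'_{L,n})_\alpha U(gh)\,\phi
= M_{L,n,\alpha}\,\phi,
\]
after the reindexing $g\mapsto gh^{-1}$. The domain equality $U(h)\mathrm{Dom}(M_{L,n,\alpha})=\mathrm{Dom}(M_{L,n,\alpha})$ follows from $\mathrm{Dom}(M_{L,n,\alpha})=\mathrm{Dom}(H_\omega)$ together with Corollary \ref{cor_dom} applied to both $U(h)$ and $U(h)^*$. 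Thus $U(h)^* M_{L,n,\alpha} U(h) = M_{L,n,\alpha}$ as an identity of self-adjoint operators for every $h\in G$.

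Finally I would convert this into affiliation. Since $M_{L,n,\alpha}$ is self-adjoint, the invariance $U(h)^* M_{L,n,\alpha} U(h)=M_{L,n,\alpha}$ for each $h$ is equivalent, through the functional calculus, to the statement that every spectral projection $E_{M_{L,n,\alpha}}(B)$ (for Borel $B\subseteq\mathbb{R}$) commutes with each $U(h)$. By the double commutant theorem the commutant of $\{U(g):g\in G\}$ equals the commutant of $\{U(g):g\in G\}''=\mathcal{A}_X$, so every spectral projection lies in $\mathcal{A}_X'$. Combined with Lemma \ref{lem_commutant}, this gives $E_{M_{L,n,\alpha}}(B)\in\mathcal{A}_X'=(\mathcal{A}_X'\cap\mathcal{A})''$, which is precisely affiliation of $M_{L,n,\alpha}$ with $\mathcal{A}_X'$. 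The main delicate point is purely a domain issue --- that the symmetrization manipulations make sense for the unbounded operator $(H'_{L,n})_\alpha$ --- and this is handled once one invokes Corollary \ref{cor_dom}; the rest reduces to the standard double commutant principle for self-adjoint operators.
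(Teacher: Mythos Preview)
Your proof is correct and follows essentially the same strategy as the paper: both arrive at the averaged identity $M_{L,n,\alpha}\phi=\frac{1}{|G|}\sum_{g\in G}U(g)^*(H'_{L,n})_\alpha U(g)\phi$ on $\mathrm{Dom}(H_\omega)$ via Corollary \ref{cor_dom}, and then exploit that $\{U(g):g\in G\}''=\mathcal{A}_X$ to conclude affiliation with $\mathcal{A}_X'$ and invoke Lemma \ref{lem_commutant}. The only cosmetic difference is that the paper inserts a sequence of bounded self-adjoint approximants $A_k\to(H'_{L,n})_\alpha$ (strongly on $\mathrm{Dom}(H_\omega)$) and uses them to verify $M_{L,n,\alpha}V\phi=VM_{L,n,\alpha}\phi$ for every unitary $V\in\mathcal{A}_X$, whereas you check invariance only under the generating family $\{U(h)\}$ and then pass to spectral projections; your route is slightly more streamlined and avoids the approximation step entirely.
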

\begin{proof}
By spectral theory, we can choose a sequence $A_k=A_k^*\in\mathcal{B}(\mathcal{H}_\omega)$ in such a way that $A_k\phi\to\left(H'_{L,n}\right)_\alpha\phi$ as $k\to\infty$ for every $\phi\in\mathrm{Dom}(H_\omega)$. By Corollary \ref{cor_dom}, we have 
\begin{align*}
M_{L,n,\alpha}\phi
&= \left(H'_{L,n}\right)_\alpha\phi 
+\frac{1}{|G|}\sum_{g\in G}U(g)^* \left[\left(H'_{L,n}\right)_\alpha, U(g)\right]\phi\\
&= \frac{1}{|G|}\sum_{g\in G} U(g)^*\left(H'_{L,n}\right)_\alpha U(g)\phi= \lim_{k\to\infty}\frac{1}{|G|}\sum_{g\in G} U(g)^*A_k U(g)\phi
\end{align*}
for every $\phi\in\mathrm{Dom}(H_\omega)$. 
Since all the $(1/|G|)\sum_{g\in G} U(g)^*A_k U(g)$ fall into $\{U(g)\,;\,g\in G\}'=\mathcal{A}_X'$, we obtain that 
\begin{align*} 
M_{L,n,\alpha} V\phi 
&=\lim_{k\to\infty}\frac{1}{|G|}\sum_{g\in G}U(g)^*A_k U(g)V\phi\\
&=V\lim_{k\to\infty}\frac{1}{|G|}\sum_{g\in G}U(g)^*A_k U(g)\phi =V M_{L,n,\alpha}\phi
\end{align*}
for every $(V,\phi)\in U(\mathcal{A}_X)\times\mathrm{Dom}(H_\omega)$, where we used $V\phi, U(g)\phi\in\mathrm{Dom}(H_\omega)$ by Corollary \ref{cor_dom}. The desired assertion follows by Lemma \ref{lem_commutant}. 
\end{proof}

The next proposition may be regarded as a generalization (or a reproduction) of Matsui's claim \cite[equation (4.6)]{Matsui13} and the most important step in the proof. 

\begin{proposition}\label{prop_ML}
With $n:=\ell>\mathfrak{r}$, we have
\begin{align*}
&\sup_{\substack{\phi\in\mathrm{Dom}(H_\omega)\\\Vert\phi\Vert=1}} \left\Vert\left(\left(H'_{L,\ell}\right)_\alpha-M_{L,\ell,\alpha}\right)\phi\right\Vert 
\leq 
\frac{8\mathfrak{j}'\mathfrak{j}''}{\sqrt{\pi\alpha}}|\partial_\Phi(X(\ell+\mathfrak{r}))| e^{-\alpha\ell^2/4v_\mu^2}\\
&\qquad\qquad+\frac{2c_\mu\mathfrak{j}'\mathfrak{j}'' e^{\mu\mathfrak{r}}}{\sqrt{\pi\alpha}} |\partial_\Phi(X(\ell+\mathfrak{r}))| \left(|\partial_\Phi X|+\frac{\mathfrak{j}'e^{\mu\mathfrak{r}}}{\mu v_\mu} |\partial_\Phi(X(\ell+2\mathfrak{r}))|\, |\partial_\Phi(X(2\ell+\mathfrak{r}))| \right)e^{-\mu\ell/2}.
\end{align*}
When $\alpha=1/\ell$, this estimate turns out to be
\[
\sup_{\substack{\phi\in\mathrm{Dom}(H_\omega)\\\Vert\phi\Vert=1}} \left\Vert\left( \left(H'_{L,\ell}\right)_{1/\ell}-M_{L,\ell,1/\ell} \right)\phi\right\Vert
\leq C_1|\partial X(2\ell+\mathfrak{r})|^3\,e^{-C_2\ell}
\]
with some constants $C_1,C_2>0$ independent of the choice of $(X,\ell)$, where we used Lemma \ref{lem_set}.
\end{proposition}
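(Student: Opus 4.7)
The plan is to reduce the claim to estimating a commutator with $U(g)$ of a bounded, boundary-localized observable, and then apply a Lieb--Robinson bound together with Lemma~\ref{lem_time} under a careful split of the time integral. By the group-averaging definition of $M_{L,\ell,\alpha}$,
\[
((H'_{L,\ell})_\alpha - M_{L,\ell,\alpha})\phi = -\frac{1}{|G|}\sum_{g\in G}U(g)^*\bigl[(H'_{L,\ell})_\alpha, U(g)\bigr]\phi
\]
for $\phi\in\mathrm{Dom}(H_\omega)$, so it suffices to control $\sup_g\|[(H'_{L,\ell})_\alpha, U(g)]\phi\|$. Corollary~\ref{cor_dom} gives $[H_\omega, U(g)]\phi=-i\delta(U(g))\phi$, and since $U(g)\in\mathcal{A}_X$, assumption~(A1) forces only $\Phi(Y)$ with $Y\cap X\neq\emptyset$ to contribute to $\delta(U(g))$, each such $Y$ lying inside $X(\mathfrak{r})\subseteq X(\ell+\mathfrak{r})$. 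Consequently $[H_\omega,U(g)]\phi=[H'_{X(\ell+\mathfrak{r})},U(g)]\phi$, so that $[H'_{L,\ell},U(g)]\phi=0$, and writing $h_K:=H_{X(\ell+\mathfrak{r})}$ one obtains
\[
[(H'_{L,\ell})_\alpha, U(g)]\phi = -\int_{-\infty}^{\infty}[\alpha^t(h_K)-h_K, U(g)]\,f_\alpha(t)\,dt\,\phi,
\]
a commutator of bounded operators; the task is to bound $\int_{\mathbb{R}}f_\alpha(t)\|[\alpha^t(h_K)-h_K,U(g)]\|\,dt$.

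Next, since $h_K\in\mathcal{A}_{\mathrm{loc}}$, Proposition~\ref{prop_derivation}(iii) gives $\alpha^t(h_K)-h_K=\int_0^t\alpha^s(\delta(h_K))\,ds$, and a calculation parallel to Lemma~\ref{lem_norm} (with $h_K$ in place of $H_{R,\ell}$) shows $\|\delta(h_K)\|\leq 2\mathfrak{j}'\mathfrak{j}''|\partial_\Phi X(\ell+\mathfrak{r})|$; using assumption~(A1) and the finite-range condition one also checks that $\delta(h_K)\in\mathcal{A}_W$ for a finite $W\subseteq X(\ell+2\mathfrak{r})$ whose every point has distance $>\ell-\mathfrak{r}$ from $X$. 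I would then bound $\|[\alpha^s(\delta(h_K)),U(g)]\|$ by splitting at $T:=\ell/(2v_\mu)$. For $|s|\leq T$, decompose $\alpha^s=\alpha^s_\Lambda+(\alpha^s-\alpha^s_\Lambda)$ with $\Lambda:=X(2\ell+\mathfrak{r})$: the $\alpha^s_\Lambda$-piece is handled through the automorphism identity $[\alpha^s_\Lambda(A),B]=\alpha^s_\Lambda([A,\alpha^{-s}_\Lambda(B)])$ and Corollary~\ref{cor_Lieb-Robinson} in the swapped form, taking the localized operand to be $U(g)\in\mathcal{A}_X$ and $Y$ any finite set with $X\subseteq Y$, $Y\cap W=\emptyset$, and $d(X,Y^c)\geq\ell-\mathfrak{r}$ (so that $\delta(h_K)\in\mathcal{A}_Y'\cap\mathcal{A}$ by evenness of the commutators of $\Phi$'s granted by (A1)), producing the factor $c_\mu|\partial_\Phi X|\|\delta(h_K)\|e^{-\mu(\ell-\mathfrak{r}-v_\mu|s|)}$; the $(\alpha^s-\alpha^s_\Lambda)$-piece is controlled by $\|[\cdot,U(g)]\|\leq 2\|\cdot\|$ and Lemma~\ref{lem_time} applied to $\delta(h_K)$ viewed as an element of $\mathcal{A}_{X(\ell+2\mathfrak{r})}$ (noting $d(\Lambda^c,X(\ell+2\mathfrak{r}))\geq\ell-\mathfrak{r}$), producing $\frac{2c_\mu\mathfrak{j}'}{\mu v_\mu}|\partial_\Phi X(\ell+2\mathfrak{r})||\partial_\Phi X(2\ell+\mathfrak{r})|\|\delta(h_K)\|e^{-\mu(\ell-2\mathfrak{r})}(e^{\mu v_\mu|s|}-1)$. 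For $|s|>T$, the trivial bound $\|[\alpha^s(\delta(h_K)),U(g)]\|\leq 2\|\delta(h_K)\|$ suffices.

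Finally, integrating these estimates over $s\in[0,|t|]$ gives bounds on $\|[\alpha^t(h_K)-h_K,U(g)]\|$ linear in $|t|$ in each regime, where linearity of the Lemma~\ref{lem_time} contribution uses $\int_0^{|t|}(e^{\mu v_\mu s}-1)\,ds\leq|t|e^{\mu v_\mu|t|}$, and the factor $e^{\mu v_\mu|t|}\leq e^{\mu\ell/2}$ on $|t|\leq T$ combines with $e^{-\mu(\ell-2\mathfrak{r})}$ to give $e^{2\mu\mathfrak{r}}e^{-\mu\ell/2}$. Integration against $f_\alpha(t)$ then uses the Gaussian identities $\int_{\mathbb{R}}|t|f_\alpha(t)\,dt=1/\sqrt{\pi\alpha}$ and $\int_{|t|>T}|t|f_\alpha(t)\,dt=e^{-\alpha T^2}/\sqrt{\pi\alpha}$, both direct consequences of~\eqref{eq_f-alpha}. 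Substituting $\|\delta(h_K)\|\leq 2\mathfrak{j}'\mathfrak{j}''|\partial_\Phi X(\ell+\mathfrak{r})|$ and $T^2=\ell^2/(4v_\mu^2)$, the $|s|\leq T$ contribution supplies the second displayed term (the two sub-expressions inside the parenthesis coming respectively from the Lieb--Robinson part and the Lemma~\ref{lem_time} error), while the $|s|>T$ contribution supplies the first displayed term with $e^{-\alpha\ell^2/(4v_\mu^2)}$. The $\alpha=1/\ell$ specialization then follows by direct substitution combined with Lemma~\ref{lem_set}. The main technical difficulty I anticipate is verifying $\delta(h_K)\in\mathcal{A}_Y'\cap\mathcal{A}$ in the fermionic case, which forces tracking evenness through (A1) and justifies the specific choice $\Lambda=X(2\ell+\mathfrak{r})$ that yields the triple-boundary factor in the second term.
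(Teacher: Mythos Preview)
Your proposal is correct and follows essentially the same route as the paper. The only organizational difference is the order of two steps: the paper first splits $[H_{X(2\ell+\mathfrak{r})}-(H'_{X(\ell+\mathfrak{r})})_\alpha,U(g)]$ into a local-dynamics piece $(i)$ and a dynamics-error piece $(ii)$ and then differentiates each in $t$, whereas you first pass to $\int_0^t\alpha^s(\delta(h_K))\,ds$ and then split $\alpha^s=\alpha^s_{X(2\ell+\mathfrak{r})}+(\alpha^s-\alpha^s_{X(2\ell+\mathfrak{r})})$; the resulting Lieb--Robinson application (with $U(g)\in\mathcal{A}_X$ as the localized operand and $Y=X(\ell-\mathfrak{r})$), the invocation of Lemma~\ref{lem_time} on $\delta(h_K)\in\mathcal{A}_{X(\ell+2\mathfrak{r})}$, and the time cut at $T=\ell/(2v_\mu)$ are identical to the paper's. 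Your ``evenness'' concern is exactly what the paper handles by observing that each summand $[\Phi(Z),\Phi(Y)]$ lies in $\mathcal{A}_{X(\ell-\mathfrak{r})^c}'\cap\mathcal{A}$ via (A1) and the Jacobi identity, so there is no gap there.
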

\begin{proof}
Let $\phi$ be a unit vector in $\mathrm{Dom}(H_\omega)$ and $\ell>\mathfrak{r}$ be arbitrarily given. Any $Y\in\mathcal{P}_0(\Gamma)$ with $X\cap Y\neq\emptyset$ and $\Phi(Y)\neq0$ must be included in $X(\mathfrak{r})\subseteq X(2\ell+\mathfrak{r})$. By Corollary \ref{cor_dom},
\[
[H_\omega,U(g)]\phi
=\lim_{\Lambda\nearrow\Gamma} [H_\Lambda,U(g)]\phi
=[H_{X(2\ell+\mathfrak{r})},U(g)]\phi
\]
holds for every $g\in G$. Hence we have
\begin{align*}
\left\Vert\left(\left(H'_{L,\ell}\right)_\alpha-M_{L,\ell,\alpha}\right)\phi\right\Vert
&= \left\Vert\frac{1}{|G|}\sum_{g\in G} U(g)^* \left[H_{X(2\ell+\mathfrak{r})}-\left(H'_{X(\ell+\mathfrak{r})}\right)_\alpha,U(g)\right]\phi\right\Vert\\
&\leq \frac{1}{|G|} \sum_{g\in G}\int_{-\infty}^\infty \left\Vert\left[\alpha_{X(2\ell+\mathfrak{r})}^t \left(H_{X(2\ell+\mathfrak{r})}-H'_{X(\ell+\mathfrak{r})}\right),U(g)\right] \right\Vert\,f_\alpha(t)\,dt\\
&\qquad+ 2\int_{-\infty}^\infty \left\Vert \left(\alpha_{X(2\ell+\mathfrak{r})}^t-\alpha^t\right)\left(H'_{X(\ell+\mathfrak{r})}\right)\right\Vert \,f_\alpha(t)\,dt
\qquad=:(i)+(ii).
\end{align*}

\medskip
We first find an explicit bound for the integral $(i)$. For any $t\in\mathbb{R}$ and for any $g\in G$, we have
\begin{align*}
&\frac{d}{dt}\left[\alpha_{X(2\ell+\mathfrak{r})}^t \left(H_{X(2\ell+\mathfrak{r})}-H'_{X(\ell+\mathfrak{r})}\right),U(g)\right]\\
&\quad= \left[\alpha_{X(2\ell+\mathfrak{r})}^t \left(i\left[H_{X(2\ell+\mathfrak{r})}, H_{X(2\ell+\mathfrak{r})}-H'_{X(\ell+\mathfrak{r})}\right]\right),U(g)\right]\\
&\quad= \left[\alpha_{X(2\ell+\mathfrak{r})}^t \left(i\left[H_{X(2\ell+\mathfrak{r})}, H_{X(\ell+\mathfrak{r})}\right]\right),U(g)\right]
\end{align*}
and hence
\begin{align*}
&\left\Vert\left[\alpha_{X(2\ell+\mathfrak{r})}^t \left(H_{X(2\ell+\mathfrak{r})}-H'_{X(\ell+\mathfrak{r})}\right), U(g)\right]\right\Vert\\
&\quad\leq \int_{\min\{0,t\}}^{\max\{0,t\}} \left\Vert\left[\alpha_{X(2\ell+\mathfrak{r})}^s \left(\left[H_{X(2\ell+\mathfrak{r})}, H_{X(\ell+\mathfrak{r})}\right]\right),U(g)\right]\right\Vert\,ds.
\end{align*}
By assumption (A1), we have
\begin{align*}
\left[H_{X(2\ell+\mathfrak{r})}, H_{X(\ell+\mathfrak{r})}\right]
&=\left[H_{X(2\ell+\mathfrak{r})}, H_{X(2\ell+\mathfrak{r})}-H_{X(\ell+\mathfrak{r})}\right]\\
&=\sum_{\substack{Y\subseteq X(2\ell+\mathfrak{r})\\Y\cap X(\ell+\mathfrak{r})^c\neq\emptyset}} \sum_{\substack{Z\subseteq X(2\ell+\mathfrak{r})\\Z\cap Y\neq\emptyset}}[\Phi(Z),\Phi(Y)] \in\mathcal{A}_{X(\ell-\mathfrak{r})^c}'\cap\mathcal{A}.
\end{align*}
In fact, we observe that $Y\subseteq X(\ell)^c$ and $Z\subseteq X(\ell-\mathfrak{r})^c$ for every $Y$ and every $Z$ in the above sum. Thus each summand $[\Phi(Z),\Phi(Y)]$ in the above sum belong to $\mathcal{A}_{X(\ell-\mathfrak{r})^c}' \cap\mathcal{A}$. Since $U(g)\in\mathcal{A}_X$, $X\subseteq X(\ell-\mathfrak{r})$ and $d(X,X(\ell-\mathfrak{r})^c)\geq \ell-\mathfrak{r}$, Corollary \ref{cor_Lieb-Robinson} and Lemma \ref{lem_norm} show that 
\begin{align*}
\left\Vert\left[\alpha_{X(2\ell+\mathfrak{r})}^t \left(\left[H_{X(2\ell+\mathfrak{r})}, H_{X(\ell+\mathfrak{r})}\right]\right),U(g)\right]\right\Vert
&=\left\Vert\left[\left[H_{X(2\ell+\mathfrak{r})}, H_{X(\ell+\mathfrak{r})}\right], \alpha_{X(2\ell+\mathfrak{r})}^{-t}\big(U(g)\big)\right]\right\Vert\\
&\leq c_\mu\left\Vert\left[H_{X(2\ell+\mathfrak{r})}, H_{X(\ell+\mathfrak{r})}\right]\right\Vert\, |\partial_\Phi X|e^{-\mu(\ell-\mathfrak{r}-v_\mu|t|)} \\
&\leq 2\mathfrak{j}'\mathfrak{j}''c_\mu |\partial_\Phi(X(\ell+\mathfrak{r}))|\,|\partial_\Phi X|e^{\mu\mathfrak{r}}e^{-\mu\ell/2}
\end{align*}
for any $g\in G$ and for any $t\in[-T,T]$ with $T=\ell/2v_\mu$. Hence we have
\[
\left\Vert\left[\alpha_{X(2\ell+\mathfrak{r})}^t \left(\left[H_{X(2\ell+\mathfrak{r})}, H_{X(\ell+\mathfrak{r})}\right]\right),U(g)\right]\right\Vert
\leq 2\mathfrak{j}'\mathfrak{j}''c_\mu |\partial_\Phi(X(\ell+\mathfrak{r}))|\,|\partial_\Phi X|e^{\mu\mathfrak{r}}e^{-\mu\ell/2}|t|
\]
for any $g\in G$ and for any $t\in[-T,T]$.

On the other hand, by Lemma \ref{lem_norm}, we have
\[
\left\Vert\left[\alpha_{X(2\ell+\mathfrak{r})}^t \left(\left[H_{X(2\ell+\mathfrak{r})}, H_{X(\ell+\mathfrak{r})}\right]\right),U(g)\right]\right\Vert
\leq 2\left \Vert\left[H_{X(2\ell+\mathfrak{r})}, H_{X(\ell+\mathfrak{r})}\right]\right\Vert
\leq4\mathfrak{j}'\mathfrak{j}''|\partial_\Phi(X(\ell+\mathfrak{r}))|
\]
and hence 
\[
\left\Vert\left[\alpha_{X(2\ell+\mathfrak{r})}^t \left(H_{X(2\ell+\mathfrak{r})}-H'_{X(\ell+\mathfrak{r})}\right),U(g)\right]\right\Vert
\leq 4\mathfrak{j}'\mathfrak{j}''|\partial_\Phi(X(\ell+\mathfrak{r}))||t|
\]
for any $t\in\mathbb{R}$ and for any $g\in G$.

Consequently, we have for any $g\in G$
\begin{align*}
&\left\Vert\left[\alpha_{X(2\ell+\mathfrak{r})}^t \left(H_{X(2\ell+\mathfrak{r})}-H'_{X(\ell+\mathfrak{r})}\right),U(g)\right]\right\Vert\\
&\quad\leq 
\begin{cases}
2\mathfrak{j}'\mathfrak{j}''c_\mu |\partial_\Phi(X(\ell+\mathfrak{r}))|\,|\partial_\Phi X|e^{\mu\mathfrak{r}}e^{-\mu\ell/2}|t| & \text{if $|t|<T$,} \\
4\mathfrak{j}'\mathfrak{j}''|\partial_\Phi(X(\ell+\mathfrak{r}))||t| & \text{if $|t|\geq T$,}
\end{cases}
\end{align*}
and
\begin{align*}
(i)
\leq 4\mathfrak{j}'\mathfrak{j}''c_\mu |\partial_\Phi(X(\ell+\mathfrak{r}))|\,|\partial_\Phi X|e^{\mu\mathfrak{r}}e^{-\mu\ell/2}\left(\frac{1}{2\sqrt{\pi\alpha}}\right)
+8\mathfrak{j}'\mathfrak{j}''|\partial_\Phi(X(\ell+\mathfrak{r}))|\left(\frac{e^{-\alpha T^2}}{2\sqrt{\pi\alpha}}\right)
\end{align*}
by equation \eqref{eq_f-alpha}.

\medskip
We then find an explicit bound for the integral $(ii)$. By Corollary \ref{cor_dom}, for each $t\in\mathbb{R}$ and each unit vector $\phi\in\mathrm{Dom}(H_\omega)$, we have
\[
\frac{d}{dt} \left(\alpha_{X(2\ell+\mathfrak{r})}^t-\alpha^t\right) \left(H'_{X(\ell+\mathfrak{r})}\right)\phi
= \left(\alpha_{X(2\ell+\mathfrak{r})}^t-\alpha^t\right) \left(i\left[H_{X(2\ell+\mathfrak{r})}, H_{X(\ell+\mathfrak{r})}\right]\right)\phi
\]
and hence
\[
\left\Vert\left(\alpha_{X(2\ell+\mathfrak{r})}^t-\alpha^t\right) \left(H'_{X(\ell+\mathfrak{r})}\right)\phi\right\Vert
\leq \int_{\min\{0,t\}}^{\max\{0,t\}} \left\Vert\left(\alpha_{X(2\ell+\mathfrak{r})}^t-\alpha^t\right) \left(\left[H_{X(2\ell+\mathfrak{r})}, H_{X(\ell+\mathfrak{r})}\right]\right)\right\Vert\,ds. 
\]
Since $\mathrm{Dom}(H_\omega)$ is dense subspace of $\mathcal{H}_\omega$, we then observe that
\[
\left\Vert\left(\alpha_{X(2\ell+\mathfrak{r})}^t-\alpha^t\right) \left(H'_{X(\ell+\mathfrak{r})}\right)\right\Vert
\leq \int_{\min\{0,t\}}^{\max\{0,t\}} \left\Vert\left(\alpha_{X(2\ell+\mathfrak{r})}^t-\alpha^t\right) \left(\left[H_{X(2\ell+\mathfrak{r})}, H_{X(\ell+\mathfrak{r})}\right]\right)\right\Vert\,ds. 
\]
Here we have 
\[
[H_{X(2\ell+\mathfrak{r})}, H_{X(\ell+\mathfrak{r})}]
=\sum_{\substack{Y\subseteq X(2\ell+\mathfrak{r})\\Y\cap X(\ell+\mathfrak{r})\neq\emptyset}}[\Phi(Y),H_{X(\ell+\mathfrak{r})}]
\in\mathcal{A}_{X(\ell+2\mathfrak{r})}
\]
by assumption (A1). In fact, we observe that $Y\subseteq X(\ell+2\mathfrak{r})$ for every $Y$ in the above sum. Thus each summand $[\Phi(Y),H_{X(\ell+\mathfrak{r})}]$ in the above sum belong to $\mathcal{A}_{X(\ell+2\mathfrak{r})}$. By Lemmas \ref{lem_time}, \ref{lem_norm},
\begin{align*}
&\left\Vert\left(\alpha_{X(2\ell+\mathfrak{r})}^t-\alpha^t\right) \left(\left[H_{X(2\ell+\mathfrak{r})}, H_{X(\ell+\mathfrak{r})}\right]\right)\right\Vert\\
&\quad\leq
\frac{c_{\mu}\mathfrak{j}'}{\mu v_\mu}
|\partial_\Phi(X(\ell+2\mathfrak{r}))|\, |\partial_\Phi(X(2\ell+\mathfrak{r}))|\, \left\Vert \left[H_{X(2\ell+\mathfrak{r})}, H_{X(\ell+\mathfrak{r})}\right]\right\Vert e^{-\mu(\ell-2\mathfrak{r})} (e^{\mu v_\mu|t|} -1) \\
&\quad\leq
\frac{2c_{\mu}\mathfrak{j}'^2\mathfrak{j}''e^{2\mu\mathfrak{r}}}{\mu v_\mu} |\partial_\Phi(X(\ell+2\mathfrak{r}))| \,|\partial_\Phi(X(2\ell+\mathfrak{r}))|\, |\partial_\Phi(X(\ell+\mathfrak{r}))| e^{-\mu\ell/2}
\end{align*}
holds for any $t\in [-T,T]$ with $T :=\ell/2 v_\mu$. Therefore, we have
\begin{align*}
&\left\Vert\left(\alpha_{X(2\ell+\mathfrak{r})}^t-\alpha^t\right) \left(H'_{X(\ell+\mathfrak{r})}\right)\right\Vert\\
&\quad\leq \frac{2c_{\mu}\mathfrak{j}'^2\mathfrak{j}''e^{2\mu\mathfrak{r}}}{\mu v_\mu} |\partial_\Phi(X(\ell+2\mathfrak{r}))|\, |\partial_\Phi(X(2\ell+\mathfrak{r}))|\, |\partial_\Phi(X(\ell+\mathfrak{r}))| e^{-\mu\ell/2}|t|
\end{align*}
for any $t\in [-T,T]$.

On the other hand, by Lemma \ref{lem_norm}, we have
\[
\left\Vert\left(\alpha_{X(2\ell+\mathfrak{r})}^t-\alpha^t\right) \left(\left[H_{X(2\ell+\mathfrak{r})}, H_{X(\ell+\mathfrak{r})}\right]\right)\right\Vert
\leq 2 \Vert[H_{X(2\ell+\mathfrak{r})}, H_{X(\ell+\mathfrak{r})}]\Vert \leq 4\mathfrak{j}'\mathfrak{j}'' |\partial_\Phi(X(\ell+\mathfrak{r}))|
\]
and hence
\[
\left\Vert\left(\alpha_{X(2\ell+\mathfrak{r})}^t-\alpha^t\right) \left(H'_{X(\ell+\mathfrak{r})}\right)\right\Vert\leq 4\mathfrak{j}'\mathfrak{j}'' |\partial_\Phi(X(\ell+\mathfrak{r}))||t|.
\]

Consequently, we have
\begin{align*}
&\left\Vert\left(\alpha_{X(2\ell+\mathfrak{r})}^t-\alpha^t\right) \left(H'_{X(\ell+\mathfrak{r})}\right)\right\Vert \\
&\quad\leq 
\begin{dcases}
\frac{2c_{\mu}\mathfrak{j}'^2\mathfrak{j}''e^{2\mu\mathfrak{r}}}{\mu v_\mu}
|\partial_\Phi(X(\ell+2\mathfrak{r}))|\, |\partial_\Phi(X(2\ell+\mathfrak{r}))|\, |\partial_\Phi(X(\ell+\mathfrak{r}))|\, e^{-\mu\ell/2}|t| & \text{if $|t|<T$,} \\
4\mathfrak{j}'\mathfrak{j}'' |\partial_\Phi(X(\ell+\mathfrak{r}))| \,|t| & \text{if $|t|\geq T$,}
\end{dcases}
\end{align*}
and
\begin{align*}
(ii)
&\leq
\frac{4c_{\mu}\mathfrak{j}'^2\mathfrak{j}''e^{2\mu\mathfrak{r}}}{\mu v_\mu}
|\partial_\Phi(X(\ell+2\mathfrak{r}))|\, |\partial_\Phi(X(2\ell+\mathfrak{r}))|\, |\partial_\Phi(X(\ell+\mathfrak{r}))| e^{-\mu\ell/2} \left(\frac{1}{2\sqrt{\pi\alpha}}\right)\\
&\qquad+
8\mathfrak{j}'\mathfrak{j}'' |\partial_\Phi(X(\ell+\mathfrak{r}))|\left(\frac{e^{-\alpha T^2}}{2\sqrt{\pi\alpha}}\right). 
\end{align*}
by equation \eqref{eq_f-alpha}. Hence we are done. 
\end{proof}

In the same way as Corollary \ref{cor_PR}, we have the next corollary. 

\begin{corollary}\label{cor_PL}
For each $\ell>\mathfrak{r}$, let $E_{L,\ell}$ be the spectral measure of $M_{L,\ell,1/\ell}$ and define 
\[
O_L(X,\ell)
:=E_{L,\ell}\big([-\eta(\ell),\eta(\ell)]\big) \quad\text{with}\quad\eta(\ell):=\max\{e^{-\gamma^2\ell/16},e^{-\mu\ell/8},e^{-\ell/16v_\mu^2}\}.
\]
By Lemma \ref{lem_set}, this estimate turns out to be
\[
\big\Vert\big(O_L(X,\ell)-I\big) \Omega\big\Vert \leq C_1|\partial X(2\ell+\mathfrak{r})|^3\,e^{-C_2\ell}
\]
with constants $C_1,C_2>0$ independent of the choice of $(X,\ell)$. Moreover by Proposition \ref{prop_affiliated}, we have
$O_L(X,\ell)\in(\mathcal{A}_X'\cap\mathcal{A})''=\mathcal{A}_X'$.
\end{corollary}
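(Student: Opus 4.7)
The plan is to adapt the proof of Corollary \ref{cor_PR} essentially line by line, replacing $M_{R,\ell,1/\ell}$ by $M_{L,\ell,1/\ell}$, invoking the second inequality of Corollary \ref{cor_alpha} in place of the first, and using Proposition \ref{prop_ML} in place of Proposition \ref{prop_MR}. Since $H_\omega\Omega=0$, the GNS vector satisfies $\Omega\in\mathrm{Dom}(H_\omega)=\mathrm{Dom}((H'_{L,\ell})_{1/\ell})=\mathrm{Dom}(M_{L,\ell,1/\ell})$, so all operators to be manipulated are defined at $\Omega$.

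The first step is to apply Chebyshev's inequality to the self-adjoint operator $M_{L,\ell,1/\ell}$ via its spectral measure $E_{L,\ell}$:
\[
\|M_{L,\ell,1/\ell}\Omega\|^2 =\int_{-\infty}^\infty \lambda^2\,d\|E_{L,\ell}(\lambda)\Omega\|^2 \geq \eta(\ell)^2\,\|(I-O_L(X,\ell))\Omega\|^2,
\]
so that the triangle inequality gives
\[
\|(O_L(X,\ell)-I)\Omega\| \leq \frac{1}{\eta(\ell)}\Bigl\{\|(H'_{L,\ell})_{1/\ell}\Omega\| +\|((H'_{L,\ell})_{1/\ell}-M_{L,\ell,1/\ell})\Omega\|\Bigr\}.
\]
The first summand on the right is estimated by the second inequality of Corollary \ref{cor_alpha}, and the second summand by Proposition \ref{prop_ML} applied at the unit vector $\phi=\Omega\in\mathrm{Dom}(H_\omega)$. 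The choice $\eta(\ell)=\max\{e^{-\gamma^2\ell/16},e^{-\mu\ell/8},e^{-\ell/16v_\mu^2}\}$ is tuned so that each of the three exponents $e^{-\gamma^2\ell/4}$, $e^{-\mu\ell/2}$, $e^{-\ell/4v_\mu^2}$ appearing in those bounds, when divided by $\eta(\ell)$, still decays exponentially in $\ell$ (yielding $e^{-3\gamma^2\ell/16}$, $e^{-3\mu\ell/8}$, $e^{-3\ell/16v_\mu^2}$ respectively). Lemma \ref{lem_set} then consolidates the various boundary cardinalities $|\partial_\Phi(X(\ell+\mathfrak{r}))|$, $|\partial_\Phi(X(\ell+2\mathfrak{r}))|$, $|\partial_\Phi(X(2\ell+\mathfrak{r}))|$ into the single expression $|\partial X(2\ell+\mathfrak{r})|$, raised to the highest power (three) occurring in Proposition \ref{prop_ML}.

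The algebraic statement $O_L(X,\ell)\in(\mathcal{A}_X'\cap\mathcal{A})''=\mathcal{A}_X'$ is immediate from Proposition \ref{prop_affiliated}, since spectral projections of an operator affiliated with a von Neumann algebra lie in that von Neumann algebra. The only subtlety to watch out for is the unboundedness of both $M_{L,\ell,1/\ell}$ and $(H'_{L,\ell})_{1/\ell}$, which forces the Chebyshev step and the triangle inequality above to be interpreted pointwise at a vector in $\mathrm{Dom}(H_\omega)$ rather than in operator norm. This is precisely why Proposition \ref{prop_ML} was phrased as a pointwise supremum over unit vectors in $\mathrm{Dom}(H_\omega)$, and the present argument just evaluates that supremum at the distinguished vector $\phi=\Omega$.
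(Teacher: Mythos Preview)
Your proof is correct and follows essentially the same approach as the paper, which simply says ``In the same way as Corollary \ref{cor_PR}, we have the next corollary.'' You have spelled out the adaptation in full detail, correctly noting the additional care needed with the unbounded operators $M_{L,\ell,1/\ell}$ and $(H'_{L,\ell})_{1/\ell}$: the Chebyshev and triangle steps must be read pointwise at $\Omega\in\mathrm{Dom}(H_\omega)$, which is exactly why Proposition \ref{prop_ML} is stated as a supremum over $\mathrm{Dom}(H_\omega)$ rather than in operator norm.
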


\medskip
Let us fix an arbitrary $\ell>\mathfrak{r}$ in what follows. We write 
\[
M_L=M_{L,\ell,1/\ell},\quad 
M_B=M_{B,\ell,1/\ell},\quad 
M_R=M_{R,\ell,1/\ell},\quad 
O_L=O_L(X,\ell),\quad 
O_R=O_R(X,\ell)
\]
for the ease of notation below. Remark that Propositions \ref{prop_MR}, \ref{prop_MB}, \ref{prop_ML} altogether imply that 
\begin{equation}\label{ineq_altogether}
\sup_{\substack{\phi\in\mathrm{Dom}(H_\omega)\\\Vert\phi\Vert=1}} \left\Vert\big(H_\omega-(M_L+M_B+M_R)\big)\phi\right\Vert
\leq C_1|\partial X(2\ell+\mathfrak{r})|^3 e^{-C_2\ell}
\end{equation}
with some constants $C_1,C_2>0$ independent of the choice of $(X,\ell)$. We also define
\begin{align*}
\mathcal{P} 
&:=\int_{-\infty}^\infty e^{itH_\omega} f_{1/\ell}(t)\,dt\\
\widetilde{\mathcal{P}} 
&:=\int_{-\infty}^\infty e^{it(M_L+M_B+M_R)} f_{1/\ell}(t)\,dt\\
\widehat{\mathcal{P}} 
&:=\int_{-\infty}^\infty e^{it(M_L+M_B+M_R)} e^{-itM_R}e^{-itM_L} f_{1/\ell}(t)\,dt.
\end{align*}
Then we observe that
\begin{equation}\label{ineq_phat}
\left\Vert\widehat{\mathcal{P}}\,O_L O_R-P_0\right\Vert\leq C_1|\partial X(2\ell+\mathfrak{r})|^3 e^{-C_2\ell}
\end{equation}
with some constants $C_1,C_2>0$ independent of the choice of $(X,\ell)$, in the same way as \cite{Hamza-Michalakis-Nachtergaele-Sims}. 

We will give a detailed proof of inequality \eqref{ineq_phat}. Following \cite[(4.5)-(4.8)]{Hamza-Michalakis-Nachtergaele-Sims} and by inequality \eqref{ineq_altogether}, we have
\begin{equation}\label{ineq_proof1}
\begin{aligned}
\left\Vert\widetilde{\mathcal{P}} -\mathcal{P}\right\Vert
&\leq \sup_{\substack{\phi\in\mathrm{Dom}(H_\omega)\\\Vert\phi\Vert=1}} \left\Vert\big(H_\omega-(M_L+M_B+M_R)\big)\phi\right\Vert \times\int_{-\infty}^\infty |t|f_{1/\ell}(t)\,dt\\
&\leq C_3|\partial X(2\ell+\mathfrak{r})|^3 e^{-C_4\ell}
\end{aligned}
\end{equation}
with some constants $C_3,C_4>0$ independent of the choice of $(X,\ell)$. On the other hand, following \cite[(4.12)-(4.14)]{Hamza-Michalakis-Nachtergaele-Sims} and by Corollaries \ref{cor_PR}, \ref{cor_PL}, we observe that
\begin{equation}\label{ineq_proof2}
\Vert P_0\,O_L O_R -P_0\Vert 
\leq \Vert(O_L-I)\Omega\Vert +\Vert(O_R-I)\Omega\Vert
\leq C_5|\partial X(2\ell+\mathfrak{r})|^3 e^{-C_6\ell}
\end{equation}
with some constants $C_5,C_6>0$ independent of the choice of $(X,\ell)$, where we note that $O_RO_L=O_LO_R$ by Proposition \ref{prop_affiliated}. Following \cite[(4.17)-(4.19)]{Hamza-Michalakis-Nachtergaele-Sims} and by the definitions of $O_L$ and $O_R$, we observe that 
\[
\left\Vert e^{itM_L}e^{itM_R}O_LO_R-O_LO_R\right\Vert \leq2(\xi(\ell)+\eta(\ell))|t|. 
\]
Hence one can find some constants $C_7,C_8>0$ independent of the choice of $(X,\ell)$ such that
\begin{equation}\label{ineq_proof3}
\left\Vert\left(\widehat{\mathcal{P}} -\widetilde{\mathcal{P}}\right)O_L O_R\right\Vert
\leq \int_{-\infty}^\infty\Vert e^{-it M_L} e^{-it M_R} O_L O_R-O_L O_R\Vert f_\alpha(t)\,dt
\leq C_7 e^{-C_8\ell}.
\end{equation}
Consequently, by inequalities \eqref{ineq_proof1}, \eqref{ineq_proof2}, \eqref{ineq_proof3} and Lemma \ref{lem_convolusion}, one can find some constants $C_1,C_2>0$ independent of the choice of $(X,\ell)$ that make inequality \eqref{ineq_phat} hold. 

\medskip
Therefore, the rest of the proof of Theorem \ref{thm_main} devotes approximating $\widehat{\mathcal{P}}$ by an element of $\mathcal{A}_{\partial X(3\ell+\mathfrak{r})}$. It will be done by proving the next lemma.

\begin{lemma}\label{lem_OB}
There exists a contraction $O_B=O_B(X,\ell)\in\mathcal{A}_{\partial X(3\ell+\mathfrak{r})}$ such that
\[
\left\Vert\widehat{\mathcal{P}}-O_B\right\Vert\leq C_1|\partial X(2\ell+\mathfrak{r})|^4 e^{-C_2\ell}
\]
with some constants $C_1,C_2>0$ independent of the choice of $(X,\ell)$. 
\end{lemma}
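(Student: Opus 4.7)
The plan is to write $\widehat{\mathcal{P}}=\int f_{1/\ell}(t)W(t)\,dt$ with $W(t):=e^{itH}e^{-itK}$, where $H:=M_L+M_B+M_R$ and $K:=M_L+M_R$. Since $M_L\in\mathcal{A}_X'$ and $M_R\in\mathcal{A}_X$ commute, $K$ is self-adjoint on $\mathrm{Dom}(H_\omega)$ and $e^{-itM_R}e^{-itM_L}=e^{-itK}$. Differentiating gives $W'(t)=iB(t)W(t)$ with the self-adjoint $B(s):=e^{isH}M_B e^{-isH}$, so $W(t)=\mathcal{T}\exp\bigl(i\int_0^t B(s)\,ds\bigr)$ is unitary. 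I would construct $O_B$ by replacing $B(s)$ by a local self-adjoint approximant in $\mathcal{A}_{\partial X(3\ell+\mathfrak{r})}$ and Gaussian-averaging the resulting local unitary.

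First, I would apply Proposition \ref{prop_partial_trace} with $X$ replaced by $\partial X(3\ell+\mathfrak{r})\in\mathcal{P}_0(\Gamma)$ (finite, by the $F$-function hypothesis), producing a conditional expectation $E\colon\mathcal{B}(\mathcal{H}_\omega)\to\mathcal{A}_{\partial X(3\ell+\mathfrak{r})}$. Set $\widetilde B(s):=E(B(s))$, which is self-adjoint and satisfies $\|\widetilde B(s)\|\leq\|M_B\|$ together with
\[
\|B(s)-\widetilde B(s)\|\leq\sup_{\substack{C\in\mathcal{A}_{\partial X(3\ell+\mathfrak{r})}'\cap\mathcal{A}\\\|C\|\leq1}}\|[B(s),C]\|.
\]
To bound the right-hand side, inequality \eqref{ineq_altogether} shows that on $\mathrm{Dom}(H_\omega)$ the operator $H-H_\omega$ has norm $\leq C|\partial X(2\ell+\mathfrak{r})|^3 e^{-C\ell}$, so by density it extends to a bounded operator and Duhamel yields $\|e^{isH}-e^{isH_\omega}\|\leq|s|\,C|\partial X(2\ell+\mathfrak{r})|^3 e^{-C\ell}$, hence $\|B(s)-\alpha^s(M_B)\|\leq 2|s|\,C|\partial X(2\ell+\mathfrak{r})|^3\|M_B\|e^{-C\ell}$. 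Since $M_B\in\mathcal{A}_{\partial X(2\ell+\mathfrak{r})}$ and $d(\partial X(2\ell+\mathfrak{r}),\partial X(3\ell+\mathfrak{r})^c)\geq\ell$, Corollary \ref{cor_Lieb-Robinson} combined with the thermodynamic limit $\Lambda\nearrow\Gamma$ gives $\|[\alpha^s(M_B),C]\|\leq c_\mu\|M_B\|\,|\partial_\Phi\partial X(2\ell+\mathfrak{r})|\,e^{-\mu(\ell-v_\mu|s|)}\|C\|$. For $|s|\leq T:=\ell/(2v_\mu)$ both contributions are of the required form $|\partial X(2\ell+\mathfrak{r})|^{\mathrm{poly}}e^{-C\ell}$.

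Next, let $\widetilde W(t)$ be the unique solution of $\widetilde W'(t)=i\widetilde B(t)\widetilde W(t)$, $\widetilde W(0)=I$. Self-adjointness of $\widetilde B$ forces $\widetilde W(t)$ to be unitary (standard uniqueness argument for $\widetilde W\widetilde W^*$), and the norm-convergent Dyson expansion shows $\widetilde W(t)\in\mathcal{A}_{\partial X(3\ell+\mathfrak{r})}$. Define
\[
O_B:=\int_{-\infty}^\infty f_{1/\ell}(t)\,\widetilde W(t)\,dt\in\mathcal{A}_{\partial X(3\ell+\mathfrak{r})},\qquad\|O_B\|\leq 1.
\]
To compare $\widehat{\mathcal{P}}$ and $O_B$, introduce the gauge transformation $R(t):=\widetilde W(t)^*W(t)$; the product rule together with the self-adjointness of $\widetilde B$ gives
\[
R'(t)=iL(t)R(t),\qquad L(t):=\widetilde W(t)^*(B(t)-\widetilde B(t))\widetilde W(t)=L(t)^*,
\]
so $R(t)=\mathcal{T}\exp\bigl(i\int_0^t L(s)\,ds\bigr)$ is unitary and the Dyson estimate yields
\[
\|W(t)-\widetilde W(t)\|=\|R(t)-I\|\leq\exp\!\Bigl(\int_0^{|t|}\|B(s)-\widetilde B(s)\|\,ds\Bigr)-1.
\]
Splitting the integral $\int f_{1/\ell}(t)\|W(t)-\widetilde W(t)\|\,dt$ into $|t|\leq T$ (where the exponent is $O(\ell\cdot|\partial X(2\ell+\mathfrak{r})|^{\mathrm{poly}}e^{-C\ell})$) and $|t|>T$ (where the trivial bound $\|W-\widetilde W\|\leq 2$ is swallowed by the Gaussian tail $\int_{|t|>T}f_{1/\ell}(t)\,dt\leq e^{-T^2/\ell}=e^{-\ell/(4v_\mu^2)}$) produces the required estimate $\|\widehat{\mathcal{P}}-O_B\|\leq C_1|\partial X(2\ell+\mathfrak{r})|^4 e^{-C_2\ell}$.

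The principal obstacle is that $H=M_L+M_B+M_R$ is only self-adjoint on $\mathrm{Dom}(H_\omega)$ and generates no local dynamics on its own, so standard Lieb--Robinson cannot be applied to $B(s)$ directly; this is overcome by first passing from $e^{isH}$ to $e^{isH_\omega}$ via \eqref{ineq_altogether} and then invoking Corollary \ref{cor_Lieb-Robinson} for $\alpha^s(M_B)$. A secondary but essential technical point is to avoid the prohibitive Gronwall factor $e^{|t|\|M_B\|}$ that appears in a naive comparison of time-ordered exponentials: the unitary gauge $R(t)=\widetilde W(t)^*W(t)$ is introduced precisely to replace this Gronwall blow-up by the clean Dyson bound above, which is then compatible with the Gaussian weight $f_{1/\ell}$.
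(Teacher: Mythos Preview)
Your argument is correct but organizes the construction differently from the paper. The paper applies the conditional expectation $E$ of Proposition~\ref{prop_partial_trace} \emph{once}, directly to $\widehat{\mathcal{P}}$, and then bounds $\sup_{C}\|[\widehat{\mathcal{P}},C]\|$ by differentiating $G_C(t):=[W(t),C]$; self-adjointness of $\beta^t(M_B)$ kills the dangerous Gronwall term in $G_C'(t)=i\beta^t(M_B)G_C(t)+i[\beta^t(M_B),C]W(t)$, leaving $\|G_C(t)\|\le\int_0^{|t|}\|[\beta^s(M_B),C]\|\,ds$, which is then split via $\alpha^s$ exactly as you do. You instead apply $E$ pointwise to $B(s)=\beta^s(M_B)$, rebuild a local time-ordered exponential $\widetilde W(t)$, and compare via the gauge $R(t)=\widetilde W(t)^*W(t)$. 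The core analytic inputs --- inequality \eqref{ineq_altogether} to pass from $\beta^s$ to $\alpha^s$, Corollary~\ref{cor_Lieb-Robinson} for $[\alpha^s(M_B),C]$, and the Gaussian tail for $|t|>T$ --- are identical in both routes, and your gauge trick plays the same role as the paper's self-adjointness cancellation. The paper's route is slightly more economical (no Dyson series, no separate local unitary to construct), whereas yours produces a more explicit $O_B$; one small point you should make explicit is that $s\mapsto B(s)$ is norm-continuous (needed so that $E(B(s))$ is norm-continuous and the time-ordered exponential is well defined), which follows because $M_B\in\mathcal{A}_{\mathrm{loc}}$ implies $[H_\omega,M_B]$ is bounded (Corollary~\ref{cor_dom}) and $H-H_\omega$ is bounded by \eqref{ineq_altogether}, so $[H,M_B]$ is bounded. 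Also note that the simpler linear bound $\|R(t)-I\|\le\int_0^{|t|}\|L(s)\|\,ds$ (from $R'(t)=iL(t)R(t)$ and $\|R\|=1$) already suffices; the $\exp(\cdots)-1$ bound is correct but unnecessary.
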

\begin{proof}
By Proposition \ref{prop_partial_trace} and since $\widehat{\mathcal{P}}$ is a contraction, there exists a contraction $O_B\in\mathcal{A}_{\partial X(3\ell+\mathfrak{r})}$ such that
\[
\left\Vert\widehat{\mathcal{P}}-O_B\right\Vert \leq \sup_{\substack{B\in \mathcal{A}_{\partial X(3\ell+\mathfrak{r})}'\cap \mathcal{A}\\ \Vert B\Vert\leq1}} \left\Vert\left[\widehat{\mathcal{P}},B\right]\right\Vert.
\]
It suffices to find an explicit upper bound of the following:
\[
\left\Vert\left[\widehat{\mathcal{P}}, B\right]\right\Vert \leq \int_{-\infty}^\infty \left\Vert\left[ e^{it(M_L+M_B+M_R)}e^{-itM_L}e^{-itM_R},B \right]\right\Vert f_{1/\ell}(t)\,dt,
\]
for any $B\in\mathcal{A}_{\partial X(3\ell+\mathfrak{r})}'\cap\mathcal{A}$ with $\Vert B\Vert\leq1$. Thus we choose and fix such an arbitrary $B$ in what follows.

\medskip
Let us introduce a $1$-parameter automorphism group $\beta^t :=\mathrm{Ad}e^{it(M_L+M_B+M_R)}$ on $\mathcal{B}(\mathcal{H}_\omega)$, and consider
\[
G_B(t):=\left[e^{it(M_L+M_B+M_R)}e^{-itM_L}e^{-itM_R},B\right].
\]
Following \cite[(4.29)-(4.38)]{Hamza-Michalakis-Nachtergaele-Sims}, we observe that
\[
\Vert G_B(t)\phi\Vert 
\leq \int_{\min\{0,t\}}^{\max\{0,t\}} \Vert[\beta^s(M_B),B]\Vert\,ds
\]
for each unit vector $\phi\in\mathrm{Dom}(H_\omega)$. Here, we note that $e^{it(M_L+M_B+M_R)}\mathrm{Dom}(H_\omega)\subseteq\mathrm{Dom}(H_\omega)$ and $e^{itM_L}\mathrm{Dom}(H_\omega) \subseteq\mathrm{Dom}(H_\omega)$ by the Stone theorem and $e^{itM_R}\mathrm{Dom}(H_\omega) \subseteq\mathrm{Dom}(H_\omega)$ and $B\mathrm{Dom}(H_\omega) \subseteq\mathrm{Dom}(H_\omega)$ by Corollary \ref{cor_dom}. Since $\mathrm{Dom}(H_\omega)$ is dense subspace of $\mathcal{H}_\omega$, we have 
\begin{align*}
\Vert G_B(t)\Vert
&\leq \int_{\min\{0,t\}}^{\max\{0,t\}} \Vert[\beta^s(M_B),B]\Vert\,ds\\
&\leq \int_{\min\{0,t\}}^{\max\{0,t\}} \Vert[\beta^s(M_B)-\alpha^s(M_B),B]\Vert \,ds + \int_{\min\{0,t\}}^{\max\{0,t\}} \Vert[\alpha^s(M_B),B]\Vert\,ds \\
&=: (i)+(ii).
\end{align*}

\medskip
We first find an explicit bound for the integral $(i)$. For each unit vector $\phi\in\mathrm{Dom}(H_\omega)$, we have
\[
\frac{d}{dr} \left(\alpha^r\left(\beta^{s-r}(M_B)\right)\right)\phi= \alpha^r\left(i\left[H_\omega-(M_L+M_B+M_R),\beta^{s-r}(M_B)\right]\right)\phi
\]
and hence 
\begin{align*}
\left\Vert\big(\alpha^s(M_B)-\beta^s(M_B)\big)\phi\right\Vert
&\leq 2\Vert M_B\Vert |s| \sup_{\substack{\phi\in\mathrm{Dom}(H_\omega)\\\Vert\phi\Vert=1}} \left\Vert\big(H_\omega-(M_L+M_B+M_R)\big)\phi\right\Vert\\
&\leq C_1|\partial X(2\ell+\mathfrak{r})|^4 e^{-C_2\ell} |s|
\end{align*}
with some constants $C_1, C_2 > 0$ independent of the choice of $(X,\ell)$, by inequality \eqref{ineq_altogether}. Since $\mathrm{Dom}(H_\omega)$ is dense subspace of $\mathcal{H}_\omega$, we have
\[
\Vert\alpha^s(M_B)-\beta^s(M_B)\Vert\leq C_1|\partial X(2\ell+\mathfrak{r})|^4 e^{-C_2\ell}\,|s|, 
\]
and hence for any $t\in\mathbb{R}$
\[
(i)\leq C_1|\partial X(2\ell+\mathfrak{r})|^4 e^{-C_2\ell}\,t^2. 
\]

\medskip
We then find an explicit bound for the integral $(ii)$. We have $M_B\in \mathcal{A}_{\partial X(2\ell+\mathfrak{r})}$, $B\in \mathcal{A}'_{\partial X(3\ell+\mathfrak{r})}\cap\mathcal{A}$ and $\partial X(2\ell+\mathfrak{r})\subseteq \partial X(3\ell+\mathfrak{r})$. Thus by Corollary \ref{cor_Lieb-Robinson}, we have
\[
\Vert[\alpha^s(M_B),B]\Vert \leq c_\mu\Vert M_B\Vert\, |\partial_\Phi (\partial X(2\ell+\mathfrak{r}))|e^{-\mu\ell/2} 
\]
for any $s\in [-T,T]$ with $T :=\ell/2v_\mu$. On the other hand, we observe that 
\[
\Vert[\alpha^s(M_B),B]\Vert\leq 2\Vert M_B\Vert.
\]
By Propositions \ref{prop_MB}, one can find constants $C_3,C_4,C_5>0$ independent of the choice of $(X,\ell)$ 
\begin{align*}
(ii)
&\leq 
\begin{cases}
C_3|\partial X(2\ell+\mathfrak{r})|^2 e^{-C_4\ell}|t| & \text{if $|t|<T$,} \\
C_5|\partial X(2\ell+\mathfrak{r})|\,|t| & \text{if $|t|\geq T$.}
\end{cases}
\end{align*}

\medskip
Summing up the above discussions and by equation \eqref{eq_f-alpha}, we obtain the desired assertion. 
\end{proof}

\section{Area law}\label{S4}
Let $(\mathcal{A},\{\mathcal{A}_X\}_{X\in\mathcal{P}_0(\Gamma)},\Phi)$ be the triple associated with a quantum many-body system over a countable discrete metric space $(\Gamma,d)$. Let $\omega$ be a state on $\mathcal{A}$. Throughout this section, we assume that \emph{the Hastings factorization, i.e., the consequence of Corollary \ref{cor_main}, holds for the state $\omega$}. In addition, assumptions (A2) and (A3) have been assumed to hold after subsections \ref{S2.3} and \ref{S3.1}, respectively, and let assumption (A4) below be assumed to hold. 
\begin{itemize} 
\item[(A4)] $d_\infty:=\sup_{x\in\Gamma} d_x<\infty$ with $d_x:=d_{\{x\}},\ x\in\Gamma$. 
\end{itemize}
These assumptions are fulfilled when the triple $(\mathcal{A},\{\mathcal{A}_X\}_{X \in \mathcal{P}_0(\Gamma)}, \Phi)$ is associated with a quantum spin or a fermion system (i.e., all dimensions of spins are uniformly bounded.).

We use the constants $C_1,C_2$ in the consequence of Corollary \ref{cor_main}. Without loss of generality, we may and do assume $C_1>1$. For simplicity, we introduce the following notations:
\[
P_0:=|\Omega\rangle\langle\Omega|,\quad
\epsilon_X(\ell):= C_1|\partial X(\mathfrak{r})|\exp(-C_2\ell)
\]
for each $X\in\mathcal{P}_0(\Gamma)$ and each $\ell>0$. 

For each $X \in\mathcal{P}_0(\Gamma)$, let $\rho_X$ be the density operator of $\omega|_{\mathcal{A}_X}$. We consider the von Neumann entropy 
\begin{equation}\label{eq_vNe}
s(\omega|_{\mathcal{A}_X}) :=-\mathrm{Tr}_X(\rho_X\log \rho_X),
\end{equation}
where $\mathrm{Tr}_X$ is the non-normalized trace on $\mathcal{A}_X$, that is, $\mathrm{Tr}_X(I)=d_X$ with the identity element $I$ of $\mathcal{A}_X$ ({\it n.b.}, $\mathcal{A}_X$ is $*$-isomorphic to a full matrix algebra $M_{d_X}(\mathbb{C})$ by assumpton (A2)). 

\subsection{Evaluation of entanglement entropy}\label{S4.1}
Let $\mathcal{A}\curvearrowright\mathcal{H}_\omega\ni\Omega$ be the GNS representation associated with the state $\omega$. For any $X\in\mathcal{P}_0(\Gamma)$, as in Lemma \ref{lem_commutant}, one can find a unitary operator $W:\mathcal{H}_\omega\to\mathbb{C}^{d_X}\otimes\mathcal{K}$ with some Hilbert space $\mathcal{K}$. Then we have the Schmidt decomposition of $W\Omega$ in $\mathbb{C}^{d_X}\otimes\mathcal{K}$
\[
W\Omega= \sum_{j}\sqrt{\lambda_j} \Psi_j\otimes\Upsilon_j,\quad
0\leq\lambda_{j+1} \leq\lambda_j \leq\cdots \leq\lambda_1\leq1,\quad \sum_{j}\lambda_j=1,
\]
where $\{\Psi_j\}$ and $\{\Upsilon_j\}$ are orthogonal normalized basis in $\mathbb{C}^{d_X}$ and $\mathcal{K}$, respectively. Then we have 
\[
\rho_X= \sum_j \lambda_j W^* \big(
|\Psi_j\rangle\langle\Psi_j|\otimes I
\big) W.
\]
We define 
\[
\rho'_X:= \sum_j \lambda_j W^* \big( I\otimes|\Upsilon_j\rangle\langle\Upsilon_j| \big)W \in\mathcal{A}_X'. 
\]
Then we have
\begin{equation}\label{eq_rhorho}
\rho_X\rho'_X= \sum_{ij} \lambda_i\lambda_j W^* |\Psi_i\otimes\Upsilon_j\rangle \langle\Psi_i\otimes\Upsilon_j| W,
\end{equation}
which is a trace-class operator of $\mathcal{H}_\omega$. We denote the trace on $\mathcal{H}_\omega$ by $\mathrm{Tr}_{\mathcal{H}_\omega}$ and define 
\begin{equation}\label{eq_fidelity}
p_X :=\mathrm{Tr}_{\mathcal{H}_\omega}(\rho_X\rho'_XP_0) =\langle\Omega,\rho_X\rho'_X\Omega\rangle
=\sum_j \lambda_j^3>0.
\end{equation}

\medskip
The Hastings factorization (see the beginning of this section) leads to the next theorem, which is a generalization of \cite[Lemma 2]{Hastings07} and \cite[Lemma 4.3]{Matsui13}. The proof of \cite[Lemma 4.3]{Matsui13} also works in this setting. However, inequality (4.17) of \cite{Matsui13} seems a flaw, by looking at the case when $x_1=x_2=1/2$, $x_3=x_4=\cdots=0$ and $a_1=1$, $a_2=2/3$, $a_3=a_4=\cdots=0$, a counter example; hence we will use inequality \eqref{ineq_prob} below instead.

\begin{theorem}\label{thm_entropy}
There exist constants $C_3,C_4>0$ (depending only on the graph $(\Gamma,d)$ and $d_\infty$ in assumption (A4)) such that
\[
s(\omega|_{\mathcal{A}_X}) 
\leq C_3|\partial X(\mathfrak{r})|\big(\log|\partial X(\mathfrak{r})|\big)^\nu
+ C_4|\partial X(\mathfrak{r})|\left(\log\frac{1}{p_X}\right)^\nu
\]
holds for any $X\in\mathcal{P}_0(\Gamma)$. 
\end{theorem}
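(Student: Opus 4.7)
The plan is to combine the Hastings factorization (Corollary \ref{cor_main}) with a rearrangement inequality for the Schmidt spectrum $\{\lambda_j\}$ of $\Omega$, via a tail bound on the $\lambda_j$'s and a suitable head/tail decomposition of the entropy.

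First, for a scale $\ell > \mathfrak{r}$ to be optimized, apply Corollary \ref{cor_main} to obtain $O_R \in \mathcal{A}_X$, $O_L \in \mathcal{A}_X'$, and $O_B \in \mathcal{A}_{\partial X(3\ell+\mathfrak{r})}$ satisfying $\|O_B O_L O_R - P_0\| \leq \epsilon_X(\ell)$. Since $O_B$ is localized in $\partial X(3\ell+\mathfrak{r})$ while $O_L, O_R$ act diagonally in the $\mathcal{A}_X/\mathcal{A}_X'$ bipartition of $\mathcal{H}_\omega$, assumption (A4) bounds the Schmidt rank of $O_B O_L O_R$ across this bipartition by $d_\infty^{2|\partial X(3\ell+\mathfrak{r})|}$. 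Thus $\Omega$ is $\epsilon_X(\ell)$-close in norm to a vector of Schmidt rank at most $D(\ell):=d_\infty^{2|\partial X(3\ell+\mathfrak{r})|}$, and since the optimal rank-$D(\ell)$ approximant is the truncation of its Schmidt decomposition, we obtain
\[
\sum_{j>D(\ell)} \lambda_j \leq \epsilon_X(\ell)^2,
\]
while assumption (A3) yields $\log D(\ell) \leq C|\partial X(\mathfrak{r})|\,\ell^\nu$ for a constant $C$ depending only on $(\Gamma,d)$ and $d_\infty$.

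Next, decompose $s(\omega|_{\mathcal{A}_X}) = -\sum_{j\leq D(\ell)} \lambda_j \log\lambda_j - \sum_{j>D(\ell)} \lambda_j \log\lambda_j$. Jensen's inequality bounds the head by $\log D(\ell) \lesssim |\partial X(\mathfrak{r})|\,\ell^\nu$, while the tail is controlled by combining the previous display with the third-moment identity $p_X = \sum_j \lambda_j^3$ via the inequality \eqref{ineq_prob} that replaces the flawed (4.17) of \cite{Matsui13}; morally, the result is a tail entropy bounded by $\log(1/p_X)$ plus a $|\partial X(\mathfrak{r})|\,\ell^\nu$-type correction. Optimizing by choosing $\ell$ so that $\epsilon_X(\ell)^2$ is comparable to $p_X$, namely $\ell = O(\log|\partial X(\mathfrak{r})| + \log(1/p_X))$, one concludes via the elementary inequality $(a+b)^\nu \leq 2^{\nu-1}(a^\nu+b^\nu)$.

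The main obstacle is the rearrangement inequality \eqref{ineq_prob}. The counterexample given in the excerpt---the probability vector $(1/2,1/2,0,\ldots)$ paired with the sequence $(1,2/3,0,\ldots)$---shows that one cannot simply deduce Shannon entropy bounds from $\sum_j \lambda_j^3$, so the corrected inequality must carefully track how the sorting of the $\lambda_j$'s interacts with the head/tail split imposed by the Hastings factorization. Once it is in hand, the remaining steps reduce to routine manipulation of explicit error terms.
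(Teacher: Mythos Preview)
Your argument has a genuine gap at the very first step, and it propagates. You assert that $O_B O_L O_R\Omega$ has Schmidt rank at most $D(\ell)$ across the $\mathcal{A}_X/\mathcal{A}_X'$ cut, on the grounds that $O_L,O_R$ ``act diagonally'' and $O_B$ is localized in $\partial X(3\ell+\mathfrak{r})$. But $O_L,O_R$ are merely projections in $\mathcal{A}_X'$ and $\mathcal{A}_X$ respectively; they do \emph{not} collapse the Schmidt decomposition of $\Omega$. The vector $O_LO_R\Omega=\sum_j\sqrt{\lambda_j}\,O_R\Psi_j\otimes O_L\Upsilon_j$ can have Schmidt rank as large as $\operatorname{rank}O_R$, which is uncontrolled. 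Applying $O_B$ then multiplies this by the operator Schmidt rank of $O_B$, not bounds it. Consequently your tail estimate $\sum_{j>D(\ell)}\lambda_j\le\epsilon_X(\ell)^2$ is unsupported. Note that if it were true the whole theorem would follow with no reference to $p_X$ at all---a red flag, since $p_X$ is precisely what produces the $\log(1/p_X)$ term in the statement.

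The paper's route is genuinely different: it never approximates $\Omega$ by a low-rank vector. Instead it forms the density operator $\sigma(X,\ell)\propto O\rho_X\rho_X'O^*$, writes it as a mixture $\sum_{ij}\mu_{ij}|\Xi_{ij}\rangle\langle\Xi_{ij}|$ with $\Xi_{ij}\propto O\,W^*(\Psi_i\otimes\Upsilon_j)$, and uses that each $\Psi_i\otimes\Upsilon_j$ is a \emph{product} vector, so each $\Xi_{ij}$ has Schmidt rank $\le d_{\partial X(3\ell+\mathfrak{r})}$. Comparing $\langle\Omega,\sigma\Omega\rangle$ with $1$ via the Hastings factorization yields the correct tail bound $\sum_{j>D(\ell)}\lambda_j\le 2\epsilon_X(\ell)/p_X$; the factor $1/p_X$ enters through the normalization $\operatorname{Tr}\tilde\sigma$. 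Finally, inequality \eqref{ineq_prob} is not a rearrangement inequality but simply Gibbs' inequality $-\sum\lambda_j\log\lambda_j\le-\sum\lambda_j\log q(j)$ applied to an explicitly constructed reference distribution $q$ built from the tail bounds at dyadic scales $\ell=m_0+n_0m$; this is what replaces Matsui's flawed (4.17), and the counterexample you cite bears on the latter, not on \eqref{ineq_prob} itself.
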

\begin{proof}
We sometimes write $O:=O_B(X,\ell)O_L(X,\ell)O_R(X,\ell)$ below for simplicity. We define
\[
\Tilde{\sigma}(X,\ell):= O\rho_X\rho'_X O^*,\quad
\sigma(X,\ell) :=\frac{\Tilde{\sigma}(X,\ell)}{\mathrm{Tr}_{\mathcal{H}_\omega}(\Tilde{\sigma}(X,\ell))}.
\]
Thanks to the Hastings factorization, i.e., the consequence of Corollary \ref{cor_main} (see the beginning of this section), we have 
\begin{align*}
|\mathrm{Tr}_{\mathcal{H}_\omega}\big(P_0\Tilde{\sigma}(X,\ell)\big)-p_X|
&=\left|\langle\Omega,\big(\Tilde{\sigma}(X,\ell)-\rho_X\rho'_X\big)\Omega\rangle\right|\\
&\leq\left|\langle\Omega,(O-P_0)\rho_X\rho'_XO^*\Omega\rangle\right|
+\left|\langle\Omega,\rho_X\rho'_X\big(O^*-P_0\big)\Omega\rangle\right|\\
&\leq\mathrm{Tr}_{\mathcal{H}_\omega}(\rho_X\rho'_X)\Vert O-P_0\Vert\Vert O^*P_0\Vert
+\mathrm{Tr}_{\mathcal{H}_\omega}(\rho_X\rho'_X)\Vert O^*-P_0\Vert\Vert P_0\Vert\\
&\leq 2\epsilon_X(\ell),
\end{align*}
and hence
\[
0<p_X\leq \mathrm{Tr}_{\mathcal{H}_\omega} \big(P_0\Tilde{\sigma}(X,\ell)\big)+ 2\epsilon_X(\ell).
\]
Moreover, we have
\begin{align*}
\mathrm{Tr}_{\mathcal{H}_\omega} \big((1-P_0) \Tilde{\sigma}(X,\ell)\big)
&\leq\mathrm{Tr}_{\mathcal{H}_\omega}(\rho_X\rho'_X) \Vert(1-P_0)O\Vert\Vert O^*\Vert\\
&\leq\Vert(1-P_0)O\Vert
\leq\Vert O-P_0\Vert+\Vert P_0(P_0-O)\Vert
\leq 2\epsilon_X(\ell).
\end{align*}
Hence we obtain that
\begin{equation}\label{ineq_sigma_1}
\begin{aligned}
1-\langle\Omega, \sigma(X,\ell)\Omega\rangle
&=\frac{\mathrm{Tr}_{\mathcal{H}_\omega} \big((1-P_0) \Tilde{\sigma}(X,\ell)\big)}{\mathrm{Tr}_{\mathcal{H}_\omega} \big((1-P_0) \Tilde{\sigma}(X,\ell)\big) + \mathrm{Tr}_{\mathcal{H}_\omega} \big(P_0 \Tilde{\sigma}(X,\ell)\big)}\\
&\leq \frac{2\epsilon_X(\ell)}{2\epsilon_X(\ell)+ \mathrm{Tr}_{\mathcal{H}_\omega} \big(P_0 \Tilde{\sigma}(X,\ell)\big)}
\leq\frac{2\epsilon_X(\ell)}{p_X},
\end{aligned}
\end{equation}
since the function $x\mapsto x/(x+a)$ is non-decreasing for each $a>0$.

By equation \eqref{eq_rhorho}, we have
\[
\sigma(X,\ell)=\sum_{ij}\mu_{ij} |\Xi_{ij}\rangle\langle\Xi_{ij}|,
\]
where we write
\begin{align*}
\Tilde{\Xi}_{ij}:=OW^* (\Psi_i\otimes\Upsilon_j),\quad
\Xi_{ij}:=\frac{\Tilde{\Xi}_{ij}}{\Vert\Tilde{\Xi}_{ij}\Vert},\quad
\mu_{ij}:=\frac{\lambda_i\lambda_j}{\mathrm{Tr}_{\mathcal{H}_\omega}(\Tilde{\sigma}(X,\ell))}\Vert\Tilde{\Xi}_{ij}\Vert^2.
\end{align*}
Then one can find a matrix $C^{ij}$ with indices $k$ and $l$ such that 
\[
\Xi_{ij}=\sum_{kl}C^{ij}(k,l) W^*\Psi_k\otimes\Upsilon_l.
\]
The rank of matrix $C^{ij}$ is the same as the the Schmidt rank of $W\Xi_{ij}$ with respect to $\mathbb{C}^{d_X}\otimes\mathcal{K}$. Since the Schmidt rank of $O_R(X,\ell)O_L(X,\ell) W^*(\Psi_i\otimes\Upsilon_j)$ is $1$ and $O_B(X,\ell)\in\mathcal{A}_{\partial X(3\ell+\mathfrak{r})}$, the rank of matrix $C^{ij}$ is at most $d_{\partial X(3\ell+\mathfrak{r})}$, where $\partial X(3\ell+\mathfrak{r})$ is defined by equation \eqref{eq_def-set}. Then we have 
\begin{align*}
\langle\Omega,\sigma(X,\ell)\Omega\rangle
&=\sum_{ij}\mu_{ij} |\langle\Xi_{ij},\Omega\rangle|^2
=\sum_{ij}\mu_{ij} |\mathrm{Tr}_{\ell^2(\mathbb{N})} (\Lambda^{1/2}E^{ij}C^{ij})|^2\\
&\leq\sum_{ij}\mu_{ij} \mathrm{Tr}_{\ell^2(\mathbb{N})} (\Lambda^{1/2}E^{ij}\Lambda^{1/2}) \mathrm{Tr}_{\ell^2(\mathbb{N})} \big((C^{ij})^*C^{ij}\big)
=\sum_{ij}\mu_{ij} \mathrm{Tr}_{\ell^2(\mathbb{N})} (\Lambda E^{ij}),
\end{align*}
where $E^{ij}$ is the support projection of $C^{ij}$ and $\Lambda$ is a diagonal matrix with $\Lambda(k,l):=\delta_{kl}\lambda_k$. Thanks to \cite[Lemma 4.4]{Matsui13}, we have 
\[
\mathrm{Tr}(\Lambda E^{ij})
\leq\sum_{k=1}^{d_{\partial X(3\ell+\mathfrak{r})}}\lambda_k,
\]
and hence
\begin{equation}\label{ineq_sigma_2}
\langle\Omega,\sigma(X,\ell)\Omega\rangle
\leq \sum_{k=1}^{d_{\partial X(3\ell+\mathfrak{r})}}\lambda_k.
\end{equation}

Therefore, by inequalities \eqref{ineq_sigma_1}, \eqref{ineq_sigma_2} and assumptions (A3), (A4), we have
\begin{equation}\label{ineq_sigma}
\frac{2\epsilon_X(\ell)}{p_X}
\geq 1-\langle\Omega,\sigma(X,\ell)\Omega\rangle
\geq \sum_{j\geq d_{\partial X(3\ell+\mathfrak{r})}+1} \lambda_j
\geq \sum_{j\geq D(\ell)+1} \lambda_j,
\end{equation}
for any $\ell>0$ with $\epsilon_X(\ell)<1$, where we write $D(\ell):=(d_\infty)^{\kappa|\partial X(\mathfrak{r})|(3\ell)^\nu}$ for simplicity. 

One can find an $m_0\in\mathbb{N}$ such that
\begin{equation}\label{ineq_m0d}
\frac{2\epsilon_X(m_0)}{p_X}<1,\quad \frac{2\epsilon_X(m_0-1)}{p_X}\geq1,
\end{equation}
since we have assumed that $C_1>1$, that is,
\begin{equation}\label{ineq_m0}
m_0\leq1+\frac{1}{C_2}\log\frac{2C_1|\partial X(\mathfrak{r})|}{p_X}.
\end{equation}

Moreover, we have
\[
\sum_{j=1}^{K}-x_j\log x_j
\leq \left(\sum_{j=1}^{K}x_j\right)\log K-\left(\sum_{j=1}^{K}x_j\right)\log\left(\sum_{j=1}^{K}x_j\right),
\]
for any $K\in\mathbb{N}$ and for any $x_1,\,x_2,\,\ldots,\,x_K\in[0,1]$, since the function $f(t)=-t\log t\ (t\in[0,1])$ is convex. Thus we have
\begin{equation}\label{eq1}
\sum_{j=1}^{D(m_0)}-\lambda_j\log\lambda_j
\leq\log(D(m_0))+\frac{1}{e}
=\kappa|\partial X(\mathfrak{r})|(3m_0)^\nu\log(d_\infty)+\frac{1}{e}.
\end{equation}

Let us introduce a probability distribution $q$ on $\mathbb{N}$ in such a way that
\[
q(j)=
\begin{dcases}
\dfrac{1}{D(m_0)} \left(1-\dfrac{2\epsilon_X(m_0)}{p_X}\right) 
\phantom{aaaaaaaaaaaaa} \big(1\leq j\leq D(m_0)\big),\\
\dfrac{2}{p_X}\dfrac{\epsilon_X(m_0+n_0m)-\epsilon_X(m_0+n_0(m+1))}{D(m_0+n_0m)-D(m_0+n_0(m+1))}\\
\phantom{aaaaaaaaaaaaaaaaaaaa}
\big(D(m_0+n_0m)+1\leq j\leq D(m_0+n_0(m+1))\big),
\end{dcases}
\]
where we take an $n_0\in\mathbb{N}$ with $e^{-C_2n_0}<1$. Notice that $\log t\leq t-1$ ($t>0$). Letting $t = x/y$ ($x,y>0$) we have $y-y\log y\leq x-y\log x$ for any $x,y>0$ (and the resulting inequality still holds when $y=0$ because of $0\log0 = 0$). Hence we have 
\begin{equation}\label{ineq_prob}
\sum_j-\lambda_j\log\lambda_j
\leq \sum_j-\lambda_j\log q(j). 
\end{equation}
Moreover, by inequality \eqref{ineq_m0d}, we have
\[
q(j)
\geq \frac{2}{p_X} \frac{(1-e^{-C_2n_0})\epsilon_X(m_0+n_0m)}{D(m_0+n_0m)}
\geq \frac{e^{-C_2(n_0m+1)}(1-e^{-C_2n_0})}{D(m_0+n_0m)}
\]
if $j$ is in the interval corresponding to $m$. Hence for any $m\in\mathbb{N}$, we have
\begin{align*}
\sum_{j=D(m_0+n_0m)+1}^{D(m_0+n_0(m+1))} -\lambda_j\log q(j)
&\leq \left(
\sum_{j=D(m_0+n_0m)+1}^{D(m_0+n_0(m+1))} \lambda_j
\right)
\left(
-\log \frac{e^{-C_2(n_0m+1)}(1-e^{-C_2n_0})}{D(m_0+n_0m)}
\right)\\
&\leq \frac{2\epsilon_X(m_0+n_0m)}{p_X} \left( \log\big(D(m_0+n_0m)\big)+G_1(m)\right)\\
&\leq e^{-C_2n_0m} \left( \kappa|\partial X(\mathfrak{r})|(3m_0+3n_0m)^\nu \log(d_\infty)+G_1(m)\right),
\end{align*}
where we write 
\[
G_1(m):=C_2(n_0m+1)-\log(1-e^{-C_2n_0})
\]
and note that the last inequality holds by inequality \eqref{ineq_m0d}. Thus we have
\begin{equation}\label{eq2}
\begin{aligned}
\sum_{j=D(m_0+n_0m)+1}^{D(m_0+n_0(m+1))} -\lambda_j\log q(j)
&\leq
\left(6^\nu\kappa\log(d_\infty)|\partial X(\mathfrak{r})| \frac{m_0^\nu+(n_0m)^\nu}{2}+G_1(m)\right) e^{-C_2n_0m}\\ 
&\leq \Big(C_3|\partial X(\mathfrak{r})|m_0^\nu+G_2(m)|\partial X(\mathfrak{r})|+G_1(m)\Big) e^{-C_2n_0m},
\end{aligned}
\end{equation}
since the function $f(x)=x^\nu\,(\nu\geq1)$ is convex, where we write 
\[
C_3:=\frac{6^\nu\kappa\log(d_\infty)}{2},\quad
G_2(m):=\frac{(6n_0)^\nu\kappa\log(d_\infty)}{2}m^\nu.
\]

By inequalities \eqref{eq1}, \eqref{eq2}, 
\begin{align*}
&s(\omega|_{\mathcal{A}_X})
=\sum_{j=1}^{\infty}-\lambda_j\log\lambda_j\\
&\quad\leq \kappa|\partial X(\mathfrak{r})|(3m_0)^\nu\log(d_\infty)+\frac{1}{e}
+\sum_{m=0}^\infty \Big(C_3|\partial X(\mathfrak{r})|m_0^\nu+G_2(m)|\partial X(\mathfrak{r})|+G_1(m)\Big)e^{-C_2n_0m}\\
&\quad\leq C'_3|\partial X(\mathfrak{r})|m_0^\nu+ C'_4|\partial X(\mathfrak{r})|+C'_5,
\end{align*}
where we write 
\[
C'_3:= 3^\nu\kappa\log(d_\infty)+C_3\sum_{m=0}^\infty e^{-C_2n_0m},\ 
C'_4:=\sum_{m=0}^\infty G_2(m)e^{-C_2n_0m},\ 
C'_5:=\frac{1}{e}+\sum_{m=0}^\infty G_1(m)e^{-C_2n_0m}.
\]
Here, the above infinite sums converge, since $G_1$ and $G_2$ are polynomials and $n_0$ was taken so that $e^{-C_2n_0}<1$. By inequality \eqref{ineq_m0}, we have
\begin{align*}
s(\omega|_{\mathcal{A}_X}) 
&\leq C'_3|\partial X(\mathfrak{r})|\left(1+\frac{1}{C_2}\log\frac{2C_1|\partial X(\mathfrak{r})|}{p_X}\right)^\nu+C'_4|\partial X(\mathfrak{r})|+C'_5\\
&\leq \frac{3^{\nu-1}}{C_2^\nu}
C'_3|\partial X(\mathfrak{r})|
\left(
(C_2+\log2C_1)^\nu
+\left(\log\frac{1}{p_X}\right)^\nu
+(\log|\partial X(\mathfrak{r})|)^\nu
\right)
+C'_4|\partial X(\mathfrak{r})|+C'_5
\end{align*}
since the function $f(x)=x^\nu\,(\nu\geq1)$ is convex. With letting
\[
C_4:=\frac{3^{\nu-1}}{C_2^\nu}C'_3
\quad\text{and}\quad
C_3:=C_4+C_4(C_2+\log2C_1)^\nu+C'_4+C'_5,
\]
we are done.
\end{proof}

\subsection{$1$-dimensional area law}
In this subsection, we obtain a generalization of \cite[Theorem 1]{Hastings07} and \cite[Proposition 4.2]{Matsui13}. The idea of the proof of \cite[Proposition 4.2]{Matsui13} basically works for more general $1$-dimensional quantum many-body systems too and we find that an area law holds for $1$-dimensional quantum many-body systems beyond quantum spin chains (see Theorem \ref{thm_arealaw} below). Here, we split the discussion of \cite[Proposition 4.2]{Matsui13} into Lemma \ref{lem_division} and Proposition \ref{prop_division_interval}. 

Throughout this subsection, we further assume that \emph{the state $\omega$ is pure}. (We remind the reader of the other assumption given at the beginning of this section.) Here we will investigate the lower bound of $p_X$ in the \emph{cylindrical} setting, following Hastings's idea used in the proof of \cite[Proposition 4.2]{Matsui13}.

Let $X,Y \in\mathcal{P}_0(\Gamma)$ satisfy $X\subseteq Y$. By assumption (A2), we have
\[
\mathcal{A}_Y\cong M_{d_Y}(\mathbb{C})= \mathcal{B}(\mathbb{C}^{d_Y}),\quad
\mathcal{A}_X\cong M_{d_X}(\mathbb{C})= \mathcal{B}(\mathbb{C}^{d_X}).
\]
As in Lemma \ref{lem_commutant}, one can find a unitary transform $W:\mathbb{C}^{d_Y}\to\mathbb{C}^{d_X} \otimes\mathcal{K}$ with some Hilbert space $\mathcal{K}$. Then we have 
\[
W(\mathcal{A}_Y\cap\mathcal{A}'_X)W^*= (W(\mathcal{A}_Y\cap\mathcal{A}'_X)W^*)''= \mathcal{B}(\mathcal{K})\cong M_{d_Y/d_X}(\mathbb{C}).
\]
Hence we will use the following identification:
\[
\mathcal{A}_Y\cap\mathcal{A}'_X \cong\mathcal{B}(\mathcal{K}).
\]

Let $\rho_{Y,X}'$ be the density operator of $\omega|_{\mathcal{A}_Y\cap\mathcal{A}'_X}$ via the identification just above. We consider the von Neumann entropy 
\[
s(\omega|_{\mathcal{A}_Y\cap \mathcal{A}'_X}):=\mathrm{Tr}_\mathcal{K} (-\rho_{Y,X}'\log\rho_{Y,X}').
\]
\begin{lemma}\label{lem_division}
If $p_X\leq1/4$ holds and $\ell>0$ is sufficiently large to make $24\sqrt{\epsilon_X(\ell)}<1$ hold, then for any $Y\in\mathcal{P}_0(\Gamma)$ with $Y\supseteq \partial X(3\ell+\mathfrak{r})$, we have
\[
s(\omega|_{\mathcal{A}_Y})
\leq s(\omega|_{\mathcal{A}_{Y\cap X}})+ s(\omega|_{\mathcal{A}_Y\cap \mathcal{A}'_X})
-\frac{1}{2}\log\frac{1}{p_X+6\sqrt{\epsilon_X(\ell)}}
+\log2.
\]
\end{lemma}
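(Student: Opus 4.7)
The plan is to reformulate the inequality as a lower bound on a relative entropy and then apply the monotonicity of relative entropy under measurement. Using the tensor factorization $\mathcal{A}_Y=\mathcal{A}_{Y\cap X}\otimes(\mathcal{A}_Y\cap\mathcal{A}'_X)$ (which follows from (A1), (A2) and Lemma~\ref{lem_commutant}) together with the identity $I(A{:}B)=D(\rho_{AB}\,\Vert\,\rho_A\otimes\rho_B)$, the inequality stated in the lemma is equivalent to
\[
D\bigl(\rho_Y \,\big\Vert\, \rho_{Y\cap X}\otimes\rho'_{Y\cap X^c}\bigr) \;\geq\; \tfrac{1}{2}\log\tfrac{1}{p_X+6\sqrt{\epsilon_X(\ell)}} - \log 2.
\]
I would prove this by exhibiting a positive contraction $Q\in\mathcal{A}_Y$ under which the actual state is well separated from the product reference, and then invoking the data-processing inequality for the binary POVM $\{Q,I-Q\}$.

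For the construction, note that in the Hastings approximant $O=O_B(X,\ell)O_L(X,\ell)O_R(X,\ell)$ the factors $O_R\in\mathcal{A}_X$ and $O_B\in\mathcal{A}_{\partial X(3\ell+\mathfrak{r})}\subseteq\mathcal{A}_Y$ are local, whereas $O_L$ lies only in $(\mathcal{A}'_X\cap\mathcal{A})''$. To produce $Q\in\mathcal{A}_Y$ I would take $Q$ proportional to the $\mathcal{A}_Y$-restriction of $\sigma(X,\ell)=O\rho_X\rho'_X O^*/\mathrm{Tr}(O\rho_X\rho'_X O^*)$, obtained via the conditional expectation in Proposition~\ref{prop_partial_trace} (this operation absorbs the non-local $O_L$ automatically). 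Then two estimates are needed: first, $\omega(Q)=\mathrm{Tr}_Y(\rho_Y Q)\geq 1/2$, which follows by converting the fidelity bound $\langle\Omega,\sigma(X,\ell)\Omega\rangle\geq 1-2\epsilon_X(\ell)/p_X$ from inequality \eqref{ineq_sigma_1} into a trace-norm bound via Fuchs--van de Graaf and using $p_X\leq 1/4$, $24\sqrt{\epsilon_X(\ell)}<1$; second, $\mathrm{Tr}_Y\bigl((\rho_{Y\cap X}\otimes\rho'_{Y\cap X^c})Q\bigr)\leq p_X+6\sqrt{\epsilon_X(\ell)}$, which should follow from replacing $O$ by $P_0$ (incurring an $O(\sqrt{\epsilon_X(\ell)})$ error by the same trace-norm estimate) and computing $\mathrm{Tr}_{\mathcal{H}_\omega}\bigl((\rho_{Y\cap X}\otimes\rho'_{Y\cap X^c})P_0\bigr)$, whose leading value is at most $p_X$ by equation \eqref{eq_fidelity} and the behavior of the Schmidt spectrum under refinement of the algebra cut. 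With both estimates in place, the binary data-processing inequality
\[
D(\rho_Y\,\Vert\,\sigma^{\mathrm{prod}}) \;\geq\; \omega(Q)\log\tfrac{\omega(Q)}{\mathrm{Tr}_Y(\sigma^{\mathrm{prod}}Q)} - \log 2
\]
then delivers the required bound.

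The main obstacle will be the upper estimate just mentioned. The quantity $p_X=\sum_j\lambda_j^3$ is tied to the Schmidt decomposition across $\mathcal{A}_X\,|\,\mathcal{A}'_X$, while $\rho_{Y\cap X}\otimes\rho'_{Y\cap X^c}$ involves the refined bipartition $\mathcal{A}_{Y\cap X}\,|\,(\mathcal{A}_Y\cap\mathcal{A}'_X)$; bounding $\mathrm{Tr}_{\mathcal{H}_\omega}\bigl((\rho_{Y\cap X}\otimes\rho'_{Y\cap X^c})P_0\bigr)$ by $p_X$ thus demands a careful Schmidt-spectrum comparison under refinement. This is precisely the step where Matsui's (4.17) appears to break (as flagged at the beginning of Section~\ref{S4}), so a fresh operator inequality---in the spirit of the probability-distribution inequality \eqref{ineq_prob} used in Theorem~\ref{thm_entropy}---or a direct block-matrix argument will be required. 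Secondary technical points include verifying that the non-normal conditional expectation from Proposition~\ref{prop_partial_trace} actually produces a contraction in $\mathcal{A}_Y$, and tracking constants carefully enough to recover the coefficients $6$ and $\log 2$ stated in the lemma.
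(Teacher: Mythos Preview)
Your overall architecture is right and matches the paper: rewrite the entropy inequality as a lower bound on the relative entropy $s(\omega|_{\mathcal{A}_Y},\,\omega|_{\mathcal{A}_{Y\cap X}}\otimes\omega|_{\mathcal{A}_Y\cap\mathcal{A}'_X})$ and then apply data processing for a binary measurement. The gap is in the choice of test operator and in the way you plan to connect the product state to $p_X$.

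The paper simply takes the test operator to be $O_B=O_B(X,\ell)$ itself. By Corollary~\ref{cor_main} this is already a \emph{positive contraction} in $\mathcal{A}_{\partial X(3\ell+\mathfrak{r})}\subseteq\mathcal{A}_Y$, so no conditional expectation is needed, and the issue you flag about Proposition~\ref{prop_partial_trace} disappears. The bound $\omega(O_B)\geq 1-3\epsilon_X(\ell)\geq 1/2$ follows directly from the Hastings factorization via $\Vert P_0-O_BP_0\Vert\leq\Vert(P_0-O_BO_LO_R)P_0\Vert+\Vert O_B(O_RO_LP_0-P_0)\Vert$; note that this uses nothing about $p_X$. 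By contrast, your route through $\sigma(X,\ell)$ and Fuchs--van de Graaf would need $\epsilon_X(\ell)/p_X$ small, but the hypotheses give only $p_X\leq 1/4$, so $p_X$ may be arbitrarily small and your first estimate can fail.

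For the product-state side, the decisive observation (which removes the ``main obstacle'' you describe) is that the GNS density of the product state $\omega_\otimes:=\omega|_{\mathcal{A}_X}\otimes\omega|_{\mathcal{A}\cap\mathcal{A}'_X}$ is exactly $\rho_X\rho'_X$, so $\omega_\otimes(P_0)=\mathrm{Tr}_{\mathcal{H}_\omega}(\rho_X\rho'_X P_0)=p_X$ by definition~\eqref{eq_fidelity}. There is no Schmidt-spectrum refinement to analyze; the restriction to $\mathcal{A}_Y$ of $\omega_\otimes$ is automatically $\omega|_{\mathcal{A}_{Y\cap X}}\otimes\omega|_{\mathcal{A}_Y\cap\mathcal{A}'_X}$. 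One then writes $x:=\omega_\otimes(O_B)$, $y:=\omega_\otimes(O_LO_R)=\omega(O_L)\omega(O_R)$ (since $O_R\in\mathcal{A}_X$, $O_L\in\mathcal{A}'_X$), and uses Cauchy--Schwarz in the form $|\omega_\otimes(O_BO_LO_R)-xy|^2\leq(x-x^2)(y-y^2)$ together with $|p_X-\omega_\otimes(O_BO_LO_R)|\leq\epsilon_X(\ell)$ and $1-y\leq 4\epsilon_X(\ell)$ to obtain $\omega_\otimes(O_B)\leq p_X+6\sqrt{\epsilon_X(\ell)}\leq 1/2$. Finally, the Schwarz map $\Phi_{O_B}(a_{ij}):=a_{11}O_B+a_{22}(I-O_B)$ and monotonicity of relative entropy give the stated bound, with the monotonicity of $t\mapsto t\log(t/s)+(1-t)\log((1-t)/(1-s))$ on $[s,1]$ producing the factor $1/2$ and the $\log 2$. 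The flaw in Matsui's (4.17) that you mention is relevant to Theorem~\ref{thm_entropy}, not to this lemma.
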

\begin{proof}
We write 
\[
O_B:=O_B(X,\ell),\ O_R:=O_R(X,\ell),\ O_L:=O_L(X,\ell),
\]
for simplicity. Thanks to the Hastings factorization (see the beginning of this section) and since 
\[
O_RO_L(O_RO_LO_B)=O_R(O_RO_LO_B)=O_L(O_RO_LO_B)=O_RO_LO_B
\]
holds, we have
\begin{equation}\label{ineq_P_1}
\Vert O_RO_LP_0-P_0\Vert
\leq \Vert O_RO_L(P_0-O_RO_LO_B)\Vert+\Vert O_RO_LO_B-P_0\Vert
\leq 2\epsilon_X(\ell)
\end{equation}
and
\begin{equation}\label{ineq_P_2}
\Vert O_RP_0-P_0\Vert\leq 2\epsilon_X(\ell),\quad
\Vert O_LP_0-P_0\Vert\leq 2\epsilon_X(\ell).
\end{equation}

By inequalities \eqref{ineq_P_1}, we obtain that
\begin{align*}
1-\omega(O_B)
&\leq \Vert P_0-O_BP_0\Vert
\leq \Vert(P_0-O_BO_LO_R)P_0\Vert +\Vert O_B(O_RO_LP_0-P_0)\Vert\\
&\leq \epsilon_X(\ell)+2\epsilon_X(\ell) = 3\epsilon_X(\ell).
\end{align*}
Since $24\sqrt{\epsilon_X(\ell)}<1$ holds, we obtain that
\begin{equation}\label{ineq_fidelity_1}
\omega(O_B)\geq1-3\epsilon_X(\ell) \geq\frac{1}{2}.
\end{equation}

We also write 
\[
\omega_\otimes:=\omega|_{\mathcal{A}_X}\otimes \omega|_{\mathcal{A}\cap\mathcal{A}'_X},\ 
x=x_{X,\ell}:=\omega_\otimes(O_B),\ 
y=y_{X,\ell}:=\omega_\otimes(O_LO_R)=\omega(O_L)\omega(O_R)
\]
for simplicity. Then, we have
\begin{align*}
|\omega_\otimes(O_BO_LO_R)-xy|^2
&=\big|\omega_\otimes
\big((O_B-x)(O_LO_R-y)\big)\big|^2\\
&\leq \omega_\otimes \big((O_B-x)^2\big)\ 
\omega_\otimes \big((O_LO_R-y)^2\big)
\leq \left(x-x^2\right)\left(y-y^2\right)
\end{align*}
by the Cauchy-Schwarz inequality. Furthermore, thanks to the Hastings factorization (see the beginning of this section) again,
\begin{align*}
|p_X-\omega_\otimes(O_RO_LO_B)|
&=\left|\mathrm{Tr}_{\mathcal{H}_\omega} \big(\rho_X\rho'_X(P_0-O_RO_LO_B)\big)\right|\\
&\leq \mathrm{Tr}_{\mathcal{H}_\omega}(\rho_X\rho'_X) \Vert P_0-O_RO_LO_B\Vert
\leq \epsilon_X(\ell).
\end{align*}
Hence we have 
\[
|p_X-xy|
\leq \epsilon_X(\ell)+ \sqrt{x-x^2}\sqrt{y-y^2}.
\]
Here, we observe that
\begin{align*}
1-y
&\leq|\omega(O_L)|\,|\omega(O_R)-1|+|\omega(O_L)-1|\\
&\leq \Vert O_RP_0-P_0\Vert+\Vert O_LP_0-P_0\Vert
\leq 4\epsilon_X(\ell),
\end{align*}
by inequalities \eqref{ineq_P_2}. Therefore, we have
\begin{align*}
x
&\leq \frac{1}{y}\left(p_X+\epsilon_X(\ell) +\sqrt{x-x^2}\sqrt{y-y^2}\right)
\leq \frac{1}{y} \left(p_X+\epsilon_X(\ell)+\frac{1}{2}\sqrt{1-y}\right)\\
&\leq \frac{p_X+\epsilon_X(\ell)+ \sqrt{\epsilon_X(\ell)}}{1-4\epsilon_X(\ell)}
\leq p_X +\frac{5\epsilon_X(\ell)+ \sqrt{\epsilon_X(\ell)}}{1-4\epsilon_X(\ell)}.
\end{align*}
Since $p_X\leq1/4$ and $24\sqrt{\epsilon_X(\ell)}\leq1$, we have
\begin{equation}\label{ineq_fidelity_2}
\omega_\otimes(O_B)
\leq p_X+2\left(5\epsilon_X(\ell)+\sqrt{\epsilon_X(\ell)}\right)
\leq p_X+6\sqrt{\epsilon_X(\ell)}
\leq\frac{1}{2}.
\end{equation}

Thanks to \cite[Proof of Proposition 4.3.9]{Bhatia:book}, we obtain that
\[
s\big(\omega|_{\mathcal{A}_Y}, \omega|_{\mathcal{A}_{Y\cap X}} \otimes\omega|_{\mathcal{A}_Y\cap \mathcal{A}'_X}\big)
= -s(\omega|_{\mathcal{A}_Y})+ s(\omega|_{\mathcal{A}_{Y\cap X}})+ s(\omega|_{\mathcal{A}_Y\cap \mathcal{A}'_X}).
\]
To estimate the relative entropy $s(\omega|_{\mathcal{A}_Y}, \omega|_{\mathcal{A}_{Y\cap X}} \otimes\omega|_{\mathcal{A}_Y\cap\mathcal{A}'_X})$, we define 
\[
\Phi_{O_B}
\begin{pmatrix}
  a_{11} & a_{12} \\
  a_{21} & a_{22}
\end{pmatrix}
:= a_{11}O_B+a_{22}(1-O_B),
\]
which becomes a Schwarz map because
\[
\Phi_{O_B}(A^*A)-\Phi_{O_B}(A)^*\Phi_{O_B}(A)
= |a_{21}|^2O_B+|a_{12}|^2(1-O_B)+ \big(|a_{11}|-|a_{22}|\big)^2O_B(1-O_B)\geq0
\]
for any 
$A=
\begin{pmatrix}
  a_{11} & a_{12} \\
  a_{21} & a_{22}
\end{pmatrix}
\in M_2(\mathbb{C})$. Thanks to the data processing inequality for relative entropy \cite[Theorem 1.5]{Ohya-Petz:book}, we have
\begin{align*}
s\big( \omega|_{\mathcal{A}_Y}, \omega|_{\mathcal{A}_{Y\cap X}} \otimes\omega|_{\mathcal{A}_Y\cap\mathcal{A}'_X} \big)
&\geq s\big( \omega|_{\mathcal{A}_Y}\circ\Phi_{O_B}, \omega|_{\mathcal{A}_{Y\cap X}} \otimes\omega|_{\mathcal{A}_Y\cap\mathcal{A}'_X}\circ\Phi_{O_B} \big)\\
&= \omega(O_B)\log
\left(
\frac{\omega(O_B)}{\omega_\otimes(O_B)}
\right)
+ \big(1-\omega(O_B)\big)\log
\left(
\frac{1-\omega(O_B)}{1-\omega_\otimes(O_B)}
\right),
\end{align*}
since 
\[
\begin{pmatrix}
  \omega(O_B) & 0 \\
  0 & 1-\omega(O_B)
\end{pmatrix}
\quad\text{and}\quad
\begin{pmatrix}
\omega_\otimes(O_B) & 0 \\
0 & 1-\omega_\otimes(O_B)
\end{pmatrix}
\]
are the density operators of $\omega|_{\mathcal{A}_Y}\circ\Phi_{O_B}$ and $\omega|_{\mathcal{A}_{Y\cap X}} \otimes\omega|_{\mathcal{A}_Y\cap\mathcal{A}'_X}\circ\Phi_{O_B}$, respectively.

For any $0\leq s\leq t\leq1$, we write
\[
f(t,s):=t\log\frac{t}{s}+(1-t)\log\frac{(1-t)}{(1-s)}.
\]
Since $f(s,t)$ is a monotone increasing function in $t\in[s,1]$ and since
\[
\omega_\otimes(O_B)\leq \frac{1}{2}\leq \omega(O_B)
\]
holds by inequalities \eqref{ineq_fidelity_1}, \eqref{ineq_fidelity_2}, we observe that
\begin{align*}
s\big(
\omega|_{\mathcal{A}_Y}, \omega|_{\mathcal{A}_{Y\cap X}} \otimes\omega|_{\mathcal{A}_Y\cap\mathcal{A}'_X}
\big)
&\geq \frac{1}{2}\log\frac{1}{2\omega_\otimes(O_B)}
+\frac{1}{2}\log\frac{1}{2(1-\omega_\otimes(O_B))}\\
&\geq \frac{1}{2}\log\frac{1}{\omega_\otimes(O_B)}-\log2
\geq \frac{1}{2}\log\frac{1}{p_X+6\sqrt{\epsilon_X(\ell)}}-\log2,
\end{align*}
where we note that the last inequality holds by inequality \eqref{ineq_fidelity_2}. Hence we are done.
\end{proof}

\medskip
Then we will prove the next proposition.
\begin{proposition}\label{prop_arealaw}
Assume further that $(\Gamma,d)$ is $\mathbb{Z}\times[1,n]^{\nu-1} \ ([1,n]:=\{1,2,\ldots,n\})$ with $\ell^1$-distance and that assumption (A5) below holds. (See the beginning of this section for the other assumptions, especially on the state $\omega$.)
\begin{itemize} 
\item[(A5)] For any $X,Y\in\mathcal{P}_0(\Gamma)$ with $X\subseteq Y$, there exists a $*$-isomorphism $\tau:\mathcal{A}_{Y\setminus X}\to\mathcal{A}_Y\cap\mathcal{A}'_X$ such that $\omega\circ\tau=\omega$ on $\mathcal{A}_{Y\setminus X}$. 
\end{itemize}
Then we have
\[
\sup_{_{\substack{a,b\in\mathbb{Z}\\ a<b}}} s(\omega|_{\mathcal{A}_{S_{a,b}}})<\infty,
\]
where we write $S_{a,b}:=[a,b]\times[1,n]^{\nu-1}.$
\end{proposition}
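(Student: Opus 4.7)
The plan is to reduce the uniform bound on $s(\omega|_{\mathcal{A}_{S_{a,b}}})$ to a uniform positive lower bound on the Schmidt fidelity $p_{S_{a,b}}$ from \eqref{eq_fidelity}, and then to establish the latter by applying Lemma~\ref{lem_division} in combination with (A4) and (A5). The key geometric observation is that $(\Gamma,d)=\mathbb{Z}\times[1,n]^{\nu-1}$ is a ``cylinder'' whose boundary thickness is uniform in $b-a$: precisely, $\partial S_{a,b}(\mathfrak{r})$ consists of sites whose $\mathbb{Z}$-coordinate lies in $[a-\mathfrak{r},a+\mathfrak{r}-1]\cup[b-\mathfrak{r}+1,b+\mathfrak{r}]$, so $|\partial S_{a,b}(\mathfrak{r})|\le 4\mathfrak{r}\,n^{\nu-1}=:N_0$ for every $a<b$. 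Theorem~\ref{thm_entropy} then yields
\[
s(\omega|_{\mathcal{A}_{S_{a,b}}})\le C_3 N_0(\log N_0)^\nu+C_4 N_0\bigl(\log(1/p_{S_{a,b}})\bigr)^\nu,
\]
so the task reduces to bounding $\log(1/p_{S_{a,b}})$ uniformly in $a<b$.

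To that end, I fix once and for all an $\ell>\mathfrak{r}$ large enough that $\epsilon_{S_{a,b}}(\ell)\le C_1 N_0 e^{-C_2\ell}<1/576$ for all $a<b$, so that the hypothesis of Lemma~\ref{lem_division} is met as soon as $p_X\le 1/4$ (the case $p_X>1/4$ being trivial). For $X=S_{a,b}$ with $b-a>6\ell+2\mathfrak{r}$, I take $Y:=\partial X(3\ell+\mathfrak{r})\in\mathcal{P}_0(\Gamma)$, which by construction decomposes as a disjoint union $Y=Y_+\sqcup Y_-$ with $Y_+:=Y\cap X$ and $Y_-:=Y\setminus X$ each a pair of endcap strips of width $3\ell+\mathfrak{r}$, so $|Y_\pm|=2(3\ell+\mathfrak{r})n^{\nu-1}$. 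Applying Lemma~\ref{lem_division}, using (A5) to identify $s(\omega|_{\mathcal{A}_Y\cap\mathcal{A}'_X})$ with $s(\omega|_{\mathcal{A}_{Y_-}})$, and invoking the Araki--Lieb triangle inequality $s(\omega|_{\mathcal{A}_Y})\ge|s(\omega|_{\mathcal{A}_{Y_+}})-s(\omega|_{\mathcal{A}_{Y_-}})|$ (valid because (A2) together with (A5) make $\mathcal{A}_Y\cong\mathcal{A}_{Y_+}\otimes\mathcal{A}_{Y_-}$), I arrive at
\[
\log\frac{1}{p_X+6\sqrt{\epsilon_X(\ell)}}\le 4\min\bigl(s(\omega|_{\mathcal{A}_{Y_+}}),s(\omega|_{\mathcal{A}_{Y_-}})\bigr)+2\log 2\le 8(3\ell+\mathfrak{r})n^{\nu-1}\log d_\infty+2\log 2,
\]
where the last bound uses (A2) and (A4) to give $s(\omega|_{\mathcal{A}_{Y_\pm}})\le|Y_\pm|\log d_\infty$. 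The right-hand side depends only on $\ell$, not on $a,b$.

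Provided $\ell$ can then be chosen so that $6\sqrt{\epsilon_X(\ell)}$ is dominated by, say, half of $\exp\!\bigl(-[8(3\ell+\mathfrak{r})n^{\nu-1}\log d_\infty+2\log 2]\bigr)$, this yields a uniform positive lower bound on $p_{S_{a,b}}$, and plugging back into Theorem~\ref{thm_entropy} concludes the proof. For the remaining $(a,b)$ with $b-a\le 6\ell+2\mathfrak{r}$, the trivial dimensional bound $s(\omega|_{\mathcal{A}_{S_{a,b}}})\le(b-a+1)n^{\nu-1}\log d_\infty\le(6\ell+2\mathfrak{r}+1)n^{\nu-1}\log d_\infty$ is already uniform. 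The main obstacle I anticipate is exactly the balance in the last step: since $\epsilon_X(\ell)$ decays like $e^{-C_2\ell}$ while the entropic lower bound to beat decays like $e^{-c\ell n^{\nu-1}\log d_\infty}$ for an explicit constant $c$, one needs $C_2$ sufficiently large relative to $n^{\nu-1}\log d_\infty$. Should a direct choice fail, a bootstrap argument---iteratively improving bounds on $M_L:=\sup_{b-a\le L}s(\omega|_{\mathcal{A}_{S_{a,b}}})$ rather than using the trivial dimensional bound on $s(\omega|_{\mathcal{A}_{Y_\pm}})$---should close the gap, following Matsui~\cite{Matsui13}.
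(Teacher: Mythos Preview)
Your single application of Lemma~\ref{lem_division} together with Araki--Lieb is correct as far as it goes, but the resulting inequality
\[
p_X+6\sqrt{\epsilon_X(\ell)}\ge \exp\bigl(-8(3\ell+\mathfrak{r})n^{\nu-1}\log d_\infty-2\log2\bigr)
\]
only yields a positive lower bound on $p_X$ if $C_2/2>24\,n^{\nu-1}\log d_\infty$, and nothing in the Hastings factorization guarantees this relation between $C_2$ and $n^{\nu-1}\log d_\infty$. You anticipate this, but the proposed remedy---a bootstrap on $M_L:=\sup_{b-a\le L}s(\omega|_{\mathcal{A}_{S_{a,b}}})$---does not close the gap: replacing the dimensional bound by $s(\omega|_{\mathcal{A}_{Y_\pm}})\le 2M_{3\ell+\mathfrak{r}}$ leads to the requirement $C_2\ell/2\gtrsim 8M_{3\ell+\mathfrak{r}}$, which is precisely the statement one is trying to prove. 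The circularity is genuine, not a matter of careful constant-tracking.

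The paper proceeds quite differently. It does not attempt to bound $p_{S_{a,b}}$ itself; instead it proves (Proposition~\ref{prop_division_interval}) that one can always shift to nearby endpoints $a_0\in[a-\ell_0,a]$, $b_0\in[b,b+\ell_0]$ with $p_{S_{a_0,b_0}}$ bounded below by a fixed constant, and then uses monotonicity $s(\omega|_{\mathcal{A}_{S_{a,b}}})\le s(\omega|_{\mathcal{A}_{S_{a_0,b_0}}})$ to conclude. The existence of such $a_0,b_0$ is established by \emph{contradiction with iterated bisection}: assuming $p_{S_{a,b}}\le\epsilon_n(\ell)$ for \emph{all} $a,b$ in a window of width $2^m3\ell_0$, one applies Lemma~\ref{lem_division} repeatedly, halving the boundary intervals $(I_k,J_k)$ at each step. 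Each application contributes a gain $\sim C_2 2^{k}\ell_0$ to the entropy balance, so after $m$ iterations the accumulated gain is of order $m\cdot 2^m\ell_0$, whereas the starting entropy is only of order $2^m\ell_0 n^{\nu-1}\log d_\infty$. For $m$ large this forces negative entropy. The crucial point is that the factor $m$ beats the fixed constant $n^{\nu-1}\log d_\infty$ regardless of $C_2$---this multiplicative recursion is exactly what your one-shot argument (and its bootstrap variant) lacks.
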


The next theorem immediately follows from Proposition \ref{prop_arealaw}, a generalization of \cite[Theorem 1]{Hastings07} or \cite[Proposition 4.2]{Matsui13}. 

\begin{theorem}[$1$-dimensional area law for quantum many-body systems]\label{thm_arealaw}
Under the assumption of Theorem \ref{thm_main} with $n=1$, we have $\sup_{m\in\mathbb{N}} s(\omega|_{\mathcal{A}_{[0,m]}})<\infty$.
\end{theorem}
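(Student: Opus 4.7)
The plan is to read off Theorem~\ref{thm_arealaw} as the specialization of Proposition~\ref{prop_arealaw} in which the cross-section parameter equals $1$, so that the cylinder $\mathbb{Z}\times[1,n]^{\nu-1}$ collapses to the one-dimensional lattice $\mathbb{Z}$ with the $\ell^{1}$-distance, and the cylindrical boxes $S_{a,b}=[a,b]\times[1,1]^{\nu-1}$ become ordinary intervals $[a,b]$. The bulk of the work is therefore already carried out in Proposition~\ref{prop_arealaw}; what remains is only to verify that all standing hypotheses needed there are available under the assumption of Theorem~\ref{thm_main} (together with the blanket assumptions of Section~\ref{S4}).

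First I would record that the metric space $(\Gamma,d)=(\mathbb{Z},\ell^{1})$ is $\nu$-regular with $\nu=1$, so by the discussion at the end of subsection~\ref{S3.1} it admits an $F$-function and satisfies assumption (A3). Assumptions (A1), (A2), and (A4) are built into the quantum spin or fermion setting implicitly used in Theorem~\ref{thm_arealaw}, as are assumptions (A2) and (A5) needed by Proposition~\ref{prop_arealaw}; in particular (A5) expresses the fact that the restriction of $\omega$ to the complementary region is canonically identified with the commutant within $\mathcal{A}_{Y}$, which is standard for quantum spin chains and for even states on fermion chains. Since $\omega$ is a gapped ground state it is automatically pure (as noted in Section~\ref{S1}), so the purity assumption required in subsection~\ref{S4.2} is satisfied. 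Finally, the Hastings factorization itself is supplied by Corollary~\ref{cor_main}, whose hypotheses are exactly those of Theorem~\ref{thm_main}.

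With these verifications in place, Proposition~\ref{prop_arealaw} applied with $n=1$ gives
\[
\sup_{\substack{a,b\in\mathbb{Z}\\ a<b}} s\bigl(\omega|_{\mathcal{A}_{[a,b]}}\bigr)<\infty,
\]
and restricting the supremum to the subfamily $\{[0,m]\}_{m\in\mathbb{N}}$ produces the desired conclusion $\sup_{m\in\mathbb{N}} s(\omega|_{\mathcal{A}_{[0,m]}})<\infty$.

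There is essentially no obstacle specific to this theorem: the genuine difficulty lies upstream, in Theorem~\ref{thm_entropy} (which controls $s(\omega|_{\mathcal{A}_X})$ in terms of $|\partial X(\mathfrak{r})|$ and $\log(1/p_X)$ via the Hastings factorization) and in Proposition~\ref{prop_arealaw} (where one must produce a uniform lower bound on $p_{[a,b]}$ along the chain by the iterative division argument encoded in Lemma~\ref{lem_division}). Once those statements are in hand, the passage from the cylindrical area law to the one-dimensional area law is only a matter of unwrapping definitions.
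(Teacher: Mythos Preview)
Your proposal is correct and matches the paper's own treatment: the paper states explicitly that Theorem~\ref{thm_arealaw} ``immediately follows from Proposition~\ref{prop_arealaw},'' and your argument is precisely this specialization to $n=1$ together with the routine verification that the standing hypotheses (A1)--(A5), purity, and the Hastings factorization are all available. The only cosmetic issue is the reference to a nonexistent label \texttt{S4.2}; otherwise there is nothing to add.
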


Under assumption (A5), we have 
\[
s(\omega|_{\mathcal{A}_{Y\setminus X}})= s(\omega|_{\mathcal{A}_Y\cap\mathcal{A}'_X}).
\]
In fact, we have $\mathrm{Tr}_{Y\setminus X}(I)= d_{Y\setminus X}=\dim\mathcal{K}= \mathrm{Tr}_\mathcal{K}(\tau(I))$ with the identity element $I$ of $\mathcal{A}_{Y\setminus X}$, where we used the same notation $\mathrm{Tr}_{Y\setminus X}$ as equation \eqref{eq_vNe}. By the uniqueness of the trace on $\mathcal{A}_{Y\setminus X}\cong M_{d_{Y\setminus X}}(\mathbb{C})$, we obtain $\mathrm{Tr}_{Y\setminus X}= \mathrm{Tr}_\mathcal{K}\circ\tau$. Let $\rho$ be the density operator of $\omega|_{\mathcal{A}_{Y\setminus X}}$. For any $B\in\mathcal{A}_Y\cap \mathcal{A}'_X$, we have 
\[
\omega(B)=\omega\big(\tau^{-1}(B)\big)= \mathrm{Tr}_{Y\setminus X}\big(\rho\tau^{-1}(B)\big) =\mathrm{Tr}_{Y\setminus X}\circ \tau^{-1}\big(\tau(\rho)B\big) =\mathrm{Tr}_\mathcal{K} \big(\tau(\rho)B\big). 
\]
Thus $\tau(\rho)$ is the density operator of $\omega|_{\mathcal{A}_Y\cap\mathcal{A}'_X}$. Hence we have
\[
s(\omega|_{\mathcal{A}\cap\mathcal{A}'_X}) 
=\mathrm{Tr}_\mathcal{K}\big(\tau(-\rho\log\rho)\big)
=\mathrm{Tr}_{Y\setminus X}(-\rho\log\rho)
=s(\omega|_{\mathcal{A}_{Y\setminus X}}).
\]

By assumptions (A2), (A5), we have $\mathcal{A}_Y\cong \mathcal{A}_X\otimes(\mathcal{A}_Y\cap\mathcal{A}'_X)\cong\mathcal{A}_X\otimes\mathcal{A}_{Y\setminus X}$ and hence $d_Y=d_Xd_{Y\setminus X}$ for any $X,Y\in\mathcal{P}_0(\Gamma)$ with $X\subseteq Y$. Therefore, we obtain that $d_X=\prod_{x\in X}d_x \leq (d_\infty)^{|X|}$ and 
\begin{equation}\label{ineq_entropy}
s(\omega|_{\mathcal{A}_X}) \leq|X|\log(d_\infty)
\end{equation}
holds for any $X\in\mathcal{P}_0(\Gamma)$. This fact will be used below.

\medskip
\begin{example}\label{ex_spin-A5}
Assumption (A5) is fulfilled, when the triple $(\mathcal{A},\{\mathcal{A}_X\}_{X\in\mathcal{P}_0(\Gamma)},\Phi)$ is associated with a quantum spin system. In fact, since $\mathcal{A}_{Y\setminus X}= \mathcal{A}_Y\cap\mathcal{A}'_X$, the desired $\tau$ can be chosen to be the identity map on $\mathcal{A}_{Y\setminus X}$.
\end{example}
\begin{example}\label{ex_fermion-A5}
Assumption (A5) is also fulfilled, when the triple $(\mathcal{A},\{\mathcal{A}_X\}_{X\in\mathcal{P}_0(\Gamma)},\Phi)$ is associated with a fermion system and the state $\omega$ is parity invariant. In fact, thanks to e.g., \cite[Theorem 4.17, Theorem 4.7 and Corollary 4.8]{Araki-Moriya}, one can find $V_X\in\mathcal{A}_X\cap\mathcal{A}^+$ such that 
\[
\mathcal{A}_Y\cap\mathcal{A}'_X= (\mathcal{A}_{Y\setminus X}\cap \mathcal{A}^+) +V_X(\mathcal{A}_{Y\setminus X} \cap\mathcal{A}^-),
\]
where $\mathcal{A}^+$ and $\mathcal{A}^-$ are the sets of all the elements of even parity or of odd parity, respectively. Hence we can define a $*$-isomorphism $\tau$ by
\[
\tau:\mathcal{A}_{Y\setminus X}\ni A\to A^++V_XA^-\in\mathcal{A}_Y\cap\mathcal{A}'_X 
\]
with a unique decomposition 
\[
A=A^++A^-\in (\mathcal{A}_{Y\setminus X}\cap \mathcal{A}^+)+(\mathcal{A}_{Y\setminus X} \cap\mathcal{A}^-).
\] 
Since $\omega$ is parity invariant, we have $\omega(A^-)=\omega(V_XA^-)=0$. Thus we have $\omega\circ\tau=\omega$ on $\mathcal{A}_{Y\setminus X}$. Hence $\tau$ is a desired one.
\end{example}

Therefore, Proposition \ref{prop_arealaw} as well as Theorem \ref{thm_arealaw} above are applicable to quantum spin and fermion systems of infinite volume successfully.

\medskip
The rest of this section is devoted to proving Proposition \ref{prop_arealaw}. Lemma \ref{lem_division} enables us to prove the next proposition.

\begin{proposition}\label{prop_division_interval}
Under the same assumptions of Proposition \ref{prop_arealaw}, one can find an $\ell_0(n)>\mathfrak{r}$ that makes the following hold: For any $a,b\in\mathbb{Z}$ with $a<b$, there exist $a_0\in[a-\ell_0(n),a]$ and $b_0\in[b,b+\ell_0(n)]$ such that
\[
p_{S_{a_0,b_0}}\geq 4\mathfrak{r}n^{\nu-1}C_1e^{-C_2\ell_0(n)} =:\epsilon_n(\ell_0(n)).
\]
\end{proposition}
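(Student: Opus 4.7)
My plan is to argue by contradiction, combining Lemma~\ref{lem_division} with the trivial bound~\eqref{ineq_entropy} and purity of~$\omega$. Assume that $p_{S_{a_0,b_0}}<\epsilon_n(\ell_0(n))$ for every candidate $(a_0,b_0)\in[a-\ell_0(n),a]\times[b,b+\ell_0(n)]$, and take $\ell_0(n)$ large enough (depending on~$n$ only) that the technical conditions $p_X\leq 1/4$ and $24\sqrt{\epsilon_X(\ell_0(n))}<1$ of Lemma~\ref{lem_division} are satisfied for every candidate $X=S_{a_0,b_0}$. Under the contradiction hypothesis, the correction $\tfrac{1}{2}\log(p_X+6\sqrt{\epsilon_X(\ell_0(n))})+\log 2$ appearing in Lemma~\ref{lem_division} is bounded above by $-\tfrac{C_2\ell_0(n)}{4}+O(\log n)$ for every candidate $X$.

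\textbf{Telescoping chain.} Inside the window I would construct a nested family $X_0\subsetneq X_1\subsetneq\cdots\subsetneq X_K$ of candidate intervals $X_k=S_{a^{(k)},b^{(k)}}$, with consecutive terms separated by at least $3\ell_0(n)+\mathfrak{r}$ on each side so that Lemma~\ref{lem_division} applies with $X=X_k$, $Y=X_{k+1}$ at each step. Telescoping these inequalities over $k=0,\ldots,K-1$ and bounding each slab entropy $s(\omega|_{\mathcal{A}_{X_{k+1}\setminus X_k}})$ via~\eqref{ineq_entropy} (using $|X_{k+1}\setminus X_k|\leq 2(3\ell_0(n)+\mathfrak{r})n^{\nu-1}$) yields
\[
s(\omega|_{\mathcal{A}_{X_K}})-s(\omega|_{\mathcal{A}_{X_0}})\leq 2K(3\ell_0(n)+\mathfrak{r})n^{\nu-1}\log d_\infty-\tfrac{KC_2\ell_0(n)}{4}+O(K\log n).
\]

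\textbf{Main obstacle and closure.} Choosing $\ell_0(n)$ larger than a suitable polynomial in~$n$ makes the right-hand side of order $-\Omega(K)$; pushing $K$ as large as the window permits then forces the difference $s(\omega|_{\mathcal{A}_{X_K}})-s(\omega|_{\mathcal{A}_{X_0}})$ to be unboundedly negative. The main obstacle is turning this into a genuine contradiction independent of the arbitrary~$(a,b)$: the above entropy difference is, a priori, only bounded in absolute value by $O((b-a+\ell_0(n))n^{\nu-1}\log d_\infty)$ via~\eqref{ineq_entropy}, which grows with~$b-a$. To remove this dependence, I would exploit the purity of~$\omega$ via the Schmidt symmetry of Lemma~\ref{lem_commutant} together with assumption~(A5) to re-express each endpoint entropy $s(\omega|_{\mathcal{A}_{X_k}})$ in terms of a complement-side entropy controlled solely by the two-boundary region of~$X_k$ (independent of~$b-a$). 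This purity-bootstrapping step is the delicate technical core: once it yields a uniform bound on $|s(\omega|_{\mathcal{A}_{X_K}})-s(\omega|_{\mathcal{A}_{X_0}})|$ depending only on~$n$ and~$\ell_0(n)$, the accumulated negative correction $-\tfrac{KC_2\ell_0(n)}{4}$ overtakes it for~$\ell_0(n)$ chosen appropriately (as a function of $n$ only), producing the desired contradiction.
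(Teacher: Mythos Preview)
Your linear telescoping does not close. The per-step gain from Lemma~\ref{lem_division} is $\tfrac{C_2\ell_0(n)}{4}+O(\log n)$, while the slab cost from~\eqref{ineq_entropy} is $2(3\ell_0(n)+\mathfrak{r})n^{\nu-1}\log d_\infty$. Both are linear in $\ell_0(n)$, so taking $\ell_0(n)$ large does not make the net per-step contribution negative; you would need $C_2>24\,n^{\nu-1}\log d_\infty$, which is not available (and certainly fails for large $n$). Your proposed ``purity-bootstrapping'' to bound $|s(\omega|_{\mathcal{A}_{X_K}})-s(\omega|_{\mathcal{A}_{X_0}})|$ uniformly in $b-a$ is precisely the area law one is working toward; purity gives $s(\omega|_{\mathcal{A}_X})=-\sum_j\lambda_j\log\lambda_j$ via the Schmidt decomposition, but no a priori bound on this quantity independent of $|X|$. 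There is also a scale mismatch: with step size $3\ell_0(n)+\mathfrak{r}$ you cannot fit even one step inside a window of width $\ell_0(n)$, so you would have to invoke the negation at a much larger parameter $L$---but then the gain is still governed by the Lemma-\ref{lem_division} parameter $\ell$, not by $L$, and the same linear-in-$\ell$ stalemate recurs.

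The paper avoids all of this by never looking at the full intervals $S_{a,b}$. It takes $Y=(I_k\cup J_k)\times[1,n]^{\nu-1}$ to be a pair of slabs of width $2^k\cdot 3\ell_0$ straddling the two boundary points, and applies Lemma~\ref{lem_division} with $X=S_{a,b}$ and $\ell=(2^{k-1}-1)\ell_0$. The two resulting pieces $Y\cap X$ and $Y\setminus X$ are again slab-pairs of width $2^{k-1}\cdot 3\ell_0$, so the recursion is a dyadic tree rather than a chain. Unrolling from level $m$ to level $0$ produces $2^m$ leaves of total volume $\sim 2^m\ell_0 n^{\nu-1}$, but accumulated gains $\sum_{k=1}^m 2^{m-k}\cdot\tfrac{C_2(2^{k-1}-1)\ell_0}{4}\sim m\,2^{m}C_2\ell_0$. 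The extra factor of $m$ is the whole point: it lets one choose $m$ (depending on $n$) large enough that the gains dominate the leaf cost regardless of the ratio $C_2/(n^{\nu-1}\log d_\infty)$, forcing a negative entropy and hence the contradiction.
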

\begin{proof}
On the contrary, suppose that for any $\ell>\mathfrak{r}$, one can find $a_\ell,b_\ell\in\mathbb{Z}$ with $a_\ell<b_\ell$ in such a way that 
\[
p_{S_{a,b}}\leq \epsilon_n(\ell)
\]
for any $a\in[a_\ell-\ell,a_\ell]$ and any $b\in[b_\ell,b_\ell+\ell]$. 

Let $m\in\mathbb{N}$ be arbitrarily given and $\ell_0>\mathfrak{r}$ sufficiently large in such a way that $24\sqrt{\epsilon_n(\ell_0)}<1$. Then one can find $a_m,b_m\in\mathbb{Z}$ with $a_m<b_m$ such that
\[
p_{S_{a,b}}\leq \epsilon_n(2^m3\ell_0)\leq\frac{1}{4}
\]
holds for any $a\in[a_m-2^m3\ell_0,a_m]$ and any $b\in[b_m,b_m+2^m3\ell_0]$. 

Assume that $k\in\mathbb{N}$ satisfies $1\leq k\leq m$ and $I_k,J_k\subseteq\mathbb{Z}$ are intervals satisfying
\begin{equation}\label{property_IkJk}
|I_k|=|J_k|=2^k3\ell_0,\ 
I_k\subseteq[a_m-2^m3\ell_0,a_m],\ 
J_k\subseteq[b_m,b_m+2^m3\ell_0].
\end{equation}
Then we can find an $S_{a,b}$ such that 
\[
I_k=[a-2^{k-1}3\ell_0,a+2^{k-1}3\ell_0-1]
\quad\text{and}\quad
J_k=[b-2^{k-1}3\ell_0+1,b+2^{k-1}3\ell_0]
\]
hold. We have $p_{S_{a,b}}\leq \epsilon_n(2^m3\ell_0)$ due to the choice of $a_m$ and $b_m$, and 
\begin{align*}
&(I_k\cup J_k)\times[1,n]^{\nu-1}
\supseteq \partial S_{a,b}(3(2^{k-1}-1)\ell_0 +\mathfrak{r})\\
&\qquad= \{x\in S_{a,b}\,;\, d(x,S_{a,b}^c)\leq 3(2^{k-1}-1)\ell_0+\mathfrak{r}\} 
\cup\{y\in S_{a,b}^c\,;\, d(y,S_{a,b})\leq 3(2^{k-1}-1)\ell_0+\mathfrak{r}\},
\end{align*}
see equation \eqref{eq_def-set} for the notation. By assumption (A5) and Lemma \ref{lem_division} with $\ell=(2^{k-1}-1)\ell_0$, we have 
\begin{align*}
&s\left(\omega|_{\mathcal{A}_{(I_k\cup J_k)\times[1,n]^{\nu-1}}}\right)\\
&\quad\leq s\left(\omega|_{\mathcal{A}_{\left((I_k\cup J_k)\times[1,n]^{\nu-1}\right)\cap S_{a,b}}}\right)+ s\left(\omega|_{\mathcal{A}_{\left((I_k\cup J_k)\times[1,n]^{\nu-1}\right)\setminus S_{a,b}}}\right)\\
&\qquad-\frac{1}{2}\log\frac{1}{p_{s_{a,b}}+6\sqrt{\epsilon_n((2^{k-1}-1)\ell_0)}}+\log2\\
&\quad\leq s\left(\omega|_{\mathcal{A}_{\left((I_k\cup J_k)\times[1,n]^{\nu-1}\right)\cap S_{a,b}}}\right)+ s\left(\omega|_{\mathcal{A}_{\left((I_k\cup J_k)\times[1,n]^{\nu-1}\right)\setminus S_{a,b}}}\right)
-\frac{C_2(2^{k-1}-1)\ell_0}{4}+C_5,
\end{align*}
where we note that $24\sqrt{\epsilon_n((2^{k-1}-1)\ell_0)}<1$ and write $C_5:=\log(7^24\mathfrak{r}n^{\nu-1}C_1)/4+\log2$.

For any $k\in\mathbb{N}$ with $1\leq k\leq m$, one can find two pairs $(I_{k-1},J_{k-1})$ and $(I'_{k-1},J'_{k-1})$ satisfying property \eqref{property_IkJk}, respectively, such that
\[
I_{k-1}\cup J_{k-1}=I_k\cup J_k\cap [a,b],\quad
I'_{k-1}\cup J'_{k-1} =(I_k\cup J_k)\setminus [a,b]. 
\]
Hence we have 
\begin{equation}\label{ineq_division}
\begin{aligned}
&s\left(\omega|_{\mathcal{A}_{(I_k\cup J_k)\times[1,n]^{\nu-1}}}\right)\\
&\quad\leq
s\left(\omega|_{\mathcal{A}_{(I_{k-1}\cup J_{k-1})\times[1,n]^{\nu-1}}}\right)+
s\left(\omega|_{\mathcal{A}_{(I'_{k-1}\cup J'_{k-1})\times[1,n]^{\nu-1}}}\right)
-\frac{C_2(2^{k-1}-1)\ell_0}{4}+C_5.
\end{aligned}
\end{equation}
Here, by inequality \eqref{ineq_entropy}, we observe that
\[
s\left(\omega|_{\mathcal{A}_{(I_0\cup J_0)\times[1,n]^{\nu-1}}}\right) \leq 6\ell_0n^{\nu-1}\log(d_\infty)
\]
for any pair $(I_0,J_0)$ satisfying property \eqref{property_IkJk}. Using inequality \eqref{ineq_division} repeatedly up to $k=m$, we have
\begin{align*}
s\left(\omega|_{\mathcal{A}_{(I_m\cup J_m)\times[1,n]^{\nu-1}}}\right)
&\leq 2^m6\ell_0n^{\nu-1}\log(d_\infty)+ \sum_{k=1}^{m} 2^{k-1} \left(C_5-\frac{C_2(2^{m-k}-1)\ell_0}{4}\right)\\
&\leq 2^m6\ell_0n^{\nu-1}\log(d_\infty)-m2^{m-3}C_2\ell_0+\frac{C_2\ell_0}{4}(2^m-1)+C_5(2^m-1)\\
&\leq 2^{m-3}\left(\ell_0 \left(2^36n^{\nu-1}\log(d_\infty)-(m-2) C_2\right)+2^3C_5\right).
\end{align*}
If we take an $m\in\mathbb{N}$ and an $\ell_0>0$ large enough in such a way that 
\[
\frac{2^36n^{\nu-1}\log(d_\infty)+1}{C_2}+2<m,\quad 2^3C_5=
2\log(7^24\mathfrak{r}n^{\nu-1}C_1)+8\log2
<\ell_0, 
\]
then we arrive at $s\left(\omega|_{\mathcal{A}_{(I_m\cup J_m)\times[1,n]^{\nu-1}}}\right)<0$, a contradiction.
\end{proof}

\begin{proof}[Proof of Proposition \ref{prop_arealaw}]
Let $a,b\in\mathbb{Z}$ with $a<b$ be arbitrarily given. Thanks to Proposition \ref{prop_division_interval}, we find an $\ell_0(n)>0$ and an interval $[a_0,b_0]\subseteq\mathbb{Z}$ such that 
\[
[a,b]\subseteq[a_0,b_0],\quad
p_{S_{a_0,b_0}}\geq\epsilon_n(\ell_0(n)).
\]

Here, by the data processing inequality for relative entropy \cite[Theorem 1.5]{Ohya-Petz:book},
\[
s(\omega|_{\mathcal{A}_X})=s(\rho_X,I) =s(E_X^Y(\rho_Y),E_X^Y(I)) \leq s(\rho_Y,I)=s(\omega|_{\mathcal{A}_Y})
\]
holds for any $X\subseteq Y$, where $E_X^Y:\mathcal{A}_Y\cong \mathcal{A}_X\otimes (\mathcal{A}'_X\cap\mathcal{A}_Y) \to\mathcal{A}_X$ is the conditional expectation (normalized partial trace) and $s(\,\cdot,\cdot\,)$ denotes the relative entropy, defined as $s(A,B):=\mathrm{Tr}_X(A(\log A-\log B))$ for positive elements $A,B\in\mathcal{A}_X$. Here, we note that $s(A,B)$ is independent the choice of $X\in\mathcal{P}_0(\Gamma)$ with $A,B\in\mathcal{A}_X$, since we identify $A\in\mathcal{A}_X$ with $A\otimes I\in \mathcal{A}_X\otimes\mathcal{A}_{X'}\cong\mathcal{A}_{X\cup X'}$ for any $X,X'\in\mathcal{P}_0(\Gamma)$. Thus we have
\[
s\left(\omega|_{\mathcal{A}_{S_{a,b}}}\right)\leq s\left(\omega|_{\mathcal{A}_{S_{a_0,b_0}}}\right).
\]
Thanks to Theorem \ref{thm_entropy} and due to the choice of $a_0$ and $b_0$, we have
\begin{align*}
s\left(\omega|_{\mathcal{A}_{S_{a_0,b_0}}}\right)
&\leq
4\mathfrak{r}C_3n^{\nu-1}\big(\log(4\mathfrak{r}n^{\nu-1})\big)^\nu
+ 4\mathfrak{r}C_4n^{\nu-1}\left(\log\frac{1}{\epsilon_n(\ell_0(n))}\right)^\nu\\
&\leq 4\mathfrak{r}C_3n^{\nu-1}\big(\log(4\mathfrak{r}n^{\nu-1})\big)^\nu
+ 4\mathfrak{r}C_4n^{\nu-1}(C_2\ell_0(n))^\nu<\infty,
\end{align*}
since we have $|\partial S_{a_0,b_0}(\mathfrak{r})|= 4\mathfrak{r}n^{\nu-1}$ and $\mathbb{Z}\times[1,n]^{\nu-1}$ equipped with $\ell^1$-distance is $\nu$-regular. Hence we are done.
\end{proof}

\appendix 
\section{Lieb-Robinson bound}\label{Appendix_Lieb-Robinson}
In this appendix, we will give a proof of Theorem \ref{thm_Lieb-Robinson} and Corollary \ref{cor_Lieb-Robinson}.
\begin{proof}[Proof of Theorem \ref{thm_Lieb-Robinson}]
We may and do assume that $\mathcal{A}$ acts on a Hilbert space. Let $\Lambda\in\mathcal{P}_0(\Gamma)$ and $B\in\mathcal{A}$ be given. For each $Z\subseteq \Lambda$ and $t\in\mathbb{R}$ we define 
\[
c_B^\Lambda(Z,t):=\sup_{A\in\mathcal{A}_Z\setminus\{0\}}\frac{\Vert[\alpha^t_\Lambda(A),B]\Vert}{\Vert A\Vert}.
\]

For every $A\in\mathcal{A}_Z$ with $Z\subseteq\Lambda$ let a function $f:\mathbb{R}\to\mathcal{A}$ be defined by $f(t) := [\alpha^t_\Lambda(\alpha^{-t}_Z(A)),B]$. By assumption (A1) and since $\ell>\mathfrak{r}$, we observe that 
\begin{equation*}
\frac{d}{dt}f(t)=i
\left[
\sum_{\substack{Z'\subseteq\Lambda\\ Z'\cap Z^c\neq\emptyset}}\alpha^t_\Lambda(\Phi(Z')),f(t)
\right]+ i \sum_{\substack{Z'\subseteq\Lambda\\ Z'\cap Z^c\neq\emptyset}}
\Big[
[\alpha^t_\Lambda(\Phi(Z')) ,B],\alpha^t_\Lambda(\alpha^{-t}_Z(A))
\Big].
\end{equation*}
By \cite[Lemma 2.3]{Nachtergaele-Sims-Young19}, we obtain that
\begin{equation*}\label{ineq_f(t)}
\begin{aligned}
\Vert f(t)\Vert 
&\leq 
\Vert f(0)\Vert+ \sum_{\substack{Z'\subseteq\Lambda\\ Z'\cap Z^c\neq\emptyset}} \int_{\min\{t,0\}}^{\max\{t,0\}} \Big\Vert \big[ [\alpha^s_\Lambda(\Phi(Z')),B],\alpha^s_\Lambda( \alpha^{-s}_Z(A)) \big] \Big\Vert\,ds\\
&\leq \Vert f(0)\Vert+2 \Vert A\Vert \sum_{\substack{Z'\subseteq\Lambda\\ Z'\cap Z^c\neq\emptyset}}
\int_{\min\{t,0\}}^{\max\{t,0\}} \Big\Vert [\alpha^s_\Lambda(\Phi(Z')),B] \Big\Vert\,ds.
\end{aligned}
\end{equation*}
It follows that 
\begin{equation}\label{ineq_integral_inequality_1}
c^\Lambda_B(Z,t)
\leq 
c^\Lambda_B(Z,0) 
+2\sum_{\substack{Z'\subseteq\Lambda\\ Z'\cap Z^c\neq\emptyset}}\Vert\Phi(Z')\Vert\int_{\min\{t,0\}}^{\max\{t,0\}}c^\Lambda_B(Z',s)\,ds.
\end{equation}

\medskip
On the other hand, we assume that $X,Y,\Lambda\in \mathcal{P}_0(\Gamma)$ satisfy $X\subseteq Y\cap\Lambda$ and that $A\in\mathcal{A}_X$ and $B\in\mathcal{A}_Y'\cap\mathcal{A}$ hold. Then we have $c^\Lambda_B(X,0)=0$. Thus by inequality \eqref{ineq_integral_inequality_1} with $Z=X$ and $Z=Z'$ in order, we obtain that
\begin{align*}
&c_B^\Lambda(X,t)
\leq 
2\sum_{\substack{Z'\subseteq\Lambda\\ Z'\cap X^c\neq\emptyset}}\Vert\Phi(Z')\Vert\int_{\min\{t,0\}}^{\max\{t,0\}} c_B^\Lambda(Z',s)\,ds \\
&\leq 
2\sum_{\substack{Z_1\subseteq\Lambda\\ Z_1\cap X^c\neq\emptyset}}\Vert\Phi(Z_1)\Vert \int_{\min\{t,0\}}^{\max\{t,0\}} \left( c_B^\Lambda(Z_1,0)+ 2\sum_{\substack{Z_2\subseteq\Lambda\\ Z_2\cap Z_1^c\neq\emptyset}} \int_{\min\{s,0\}}^{\max\{s,0\}} c_B^\Lambda(Z_2,r)\,dr \right)\,ds \\
&\leq 
2\Vert B\Vert\sum_{\substack{Z_1\subseteq\Lambda\\ Z_1\cap X^c\neq\emptyset}} \Vert\Phi(Z_1)\Vert\,\delta_{Y^c}(Z_1)\,2|t|\\
&\qquad+
4\sum_{\substack{Z_1\subseteq\Lambda\\ Z_1\cap X^c\neq\emptyset}}\Vert\Phi(Z_1)\Vert \sum_{\substack{Z_2\subseteq\Lambda\\ Z_2\cap Z_1^c\neq\emptyset}}\Vert\Phi(Z_2)\Vert 
\int_{\min\{t,0\}}^{\max\{t,0\}}\int_{\min\{s,0\}}^{\max\{s,0\}}c_B^\Lambda(Z_2,r)\,ds\,dr.
\end{align*} 
Iterating this procedure we have 
\[
c_B^\Lambda(X,t)
\leq 2\Vert B\Vert\left\{\sum_{n=1}^{N}\frac{a_n}{n!}(2|t|)^n\right\}+ r_{N+1}(t), 
\]
where
\begin{align*}
a_n 
&= 
\sum_{\substack{Z_1\subseteq\Lambda\\ Z_1\cap X^c\neq\emptyset}}\sum_{\substack{Z_2\subseteq\Lambda\\ Z_2\cap Z_1^c\neq\emptyset}} \cdots \sum_{\substack{Z_n\subseteq\Lambda\\ Z_n\cap Z_{n-1}^c\neq\emptyset}}\delta_{Y^c}(Z_n)\,\prod_{k=1}^{n} \Vert\Phi(Z_k)\Vert, \\
r_{N+1}(t) 
&= 
2^{N+1}\sum_{\substack{Z_1\subseteq\Lambda\\ Z_1\cap X^c\neq\emptyset}}\sum_{\substack{Z_2\subseteq\Lambda\\ Z_2\cap Z_1^c\neq\emptyset}}\cdots\sum_{\substack{Z_{N+1}\subseteq\Lambda\\ Z_{N+1}\cap Z_{N}^c\neq\emptyset}} \prod_{k=1}^{N+1} \Vert\Phi(Z_k)\Vert \\
&\quad
\int_{\min\{t,0\}}^{\max\{t,0\}}\int_{\min\{s_1,0\}}^{\max\{s_1,0\}}\cdots\int_{\min\{s_N,0\}}^{\max\{s_N,0\}}c_B^\Lambda(Z_{N+1}, s_{N+1})\,ds_{N+1} \cdots ds_2\,ds_1.
\end{align*}
By a careful counting argument, we have, for any $n\in\mathbb{N}$, 
\[
\sum_{\substack{Z_1\subseteq\Lambda\\ Z_1\cap X^c\neq\emptyset}} \cdots \sum_{\substack{Z_n\subseteq\Lambda\\ Z_n\cap Z_{n-1}^c\neq\emptyset}} \prod_{k=1}^n \Vert\Phi(Z_k)\Vert
\leq \sum_{w_0\in\partial_\Phi X}\sum_{w_1,\ldots,w_n\in\Lambda} \sum_{\substack{Z_1\subseteq\Lambda\\ w_0,w_1\in Z_1}} \cdots \sum_{\substack{Z_n\subseteq\Lambda\\ w_{n-1},w_n\in Z_n}} \prod_{k=1}^n \Vert\Phi(Z_k)\Vert.
\]
With this estimate and the notations $c_F, \Vert\Phi\Vert_F$, we obtain that 
\[
a_n\leq\Vert\Phi\Vert_F^n c_F^{n-1} \sum_{w_0\in\partial_\Phi X}
\sum_{y\in Y^c} F(d(w_0,y))
\]
and 
\begin{align*}
r_{N+1}(t)
&\leq \Vert\Phi\Vert_F^{N+1}c_F^N\sum_{w_0\in \partial_\Phi X}\sum_{w_{N+1}\in\Lambda} F(d(w_0,w_{N+1})) \\
&\leq 
\frac{2\Vert B\Vert|\partial_\Phi X| \Vert F\Vert}{c_F} \frac{(2|t|\Vert\Phi\Vert_F c_F)^{N+1}}{(N+1)!} \to 0 
\end{align*}
as $N\to\infty$ for any $t\in\mathbb{R}$. Therefore, we conclude that
\[
c_B^\Lambda(X,t)
\leq \frac{2\Vert B\Vert}{c_F}\left(e^{2\Vert\Phi\Vert_Fc_F|t|}-1\right) \sum_{x\in\partial_\Phi X} \sum_{y\in Y^c} F(d(x,y)), 
\]
implying the desired assertion.
\end{proof}

\begin{proof}[Proof for Corollary \ref{cor_Lieb-Robinson}]
Let $X,Y,\Lambda \in \mathcal{P}_0(\Gamma)$ with $X \subseteq Y \cap \Lambda$ be arbitrarily given. Assume that $A\in\mathcal{A}_X$ and $B\in\mathcal{A}_Y' \cap \mathcal{A}$. We have a new $F$-function $F_\mu(r):=e^{-\mu r}F(r)$ (see e.g., \cite[Appendix 2]{Nachtergaele-Sims-Young19}). Since $\Phi$ is bounded and has finite range $\mathfrak{r}$, we have $\Vert\Phi\Vert_{F_\mu}\leq e^{\mu\mathfrak{r}}\Vert \Phi\Vert_F<\infty$. Applying Theorem \ref{thm_Lieb-Robinson} with $F=F_\mu$, we have
\begin{align*}
\Vert[\alpha_\Lambda^t(A),B]\Vert 
&\leq\frac{2\Vert A\Vert\Vert B\Vert}{c_{F_\mu}} \left(e^{2\Vert\Phi\Vert_{F_\mu} c_{F_\mu}|t|}-1\right) \sum_{x\in\partial_\Phi X} \sum_{y\in Y^c} e^{-\mu d(x,y)}F(d(x,y))\\
&\leq \frac{2\Vert A\Vert\Vert B\Vert}{c_{F_\mu}} e^{2\Vert\Phi\Vert_{F_\mu} c_{F_\mu}|t|} |\partial_\Phi X|\,e^{-\mu d(\partial_\Phi X,Y^c)} \Vert F\Vert.
\end{align*}
Hence we get the desired inequality with letting $v_\mu:=2\Vert\Phi\Vert_{F_\mu} c_{F_\mu}/\mu$ and $c_\mu:=2\Vert F\Vert/c_{F_\mu}$. 
\end{proof}

\section{Proof of Propositions \ref{prop_time} and \ref{prop_derivation}} \label{Appendix_derivation}
In this appendix, we will give a self-contained proof of Propositions \ref{prop_time} and \ref{prop_derivation}. Throughout this appendix, we keep the setup of subsection \ref{S2.2} and use the same notations as in section \ref{S1} and subsections \ref{S2.1}, \ref{S2.2}. Moreover, we define the \emph{thermodynamic limit} as follows: Let $\lim_{\Lambda\nearrow\Gamma}A_\Lambda=A$ be such that for each monotone non-decreasing sequence $\{\Lambda_n\}$ of elements of $\mathcal{P}_0(\Gamma)$ with $\bigcup_{n=1}^\infty\Lambda_n=\Gamma$, we have
\[
\lim_{n\to\infty}A_{\Lambda_n}=A. 
\]

\begin{proof}[Proof of Proposition \ref{prop_time}]
Let $A\in\mathcal{A}_X$ with $X\in\mathcal{P}_0(\Gamma)$ be arbitrarily given. By assumption (A1), for any $\Lambda,\Lambda' \in\mathcal{P}_0(\Gamma)$ with $X\subseteq\Lambda\subseteq\Lambda'$, we have
\begin{align*}
\left\Vert\left(\alpha_{\Lambda'}^t-\alpha_\Lambda^t\right)(A)\right\Vert
&\leq \left\Vert \int^{\max\{0,t\}}_{\min\{0,t\}} \frac{d}{ds} \left( \alpha_{\Lambda'}^{t-s} \alpha_\Lambda^s(A)\right) \,ds \right\Vert
\leq \int^{\max\{0,t\}}_{\min\{0,t\}} \Vert[H_{\Lambda'}- H_\Lambda, \alpha^s_\Lambda(A)]\Vert\,ds\\
&\leq \int_{\min\{0,t\}}^{\max\{0,t\}} \sum_{\substack{Z\subseteq\Lambda'\\ Z\cap\Lambda\neq\emptyset\\ Z\cap\Lambda^{c}\neq\emptyset}} \Vert[\Phi(Z),\alpha_\Lambda^s(A)]\Vert\,ds.
\end{align*}
For any $Z\in\mathcal{P}_0(\Gamma)$ in the above sum, we observe that $Z^c\supseteq\Lambda_\mathrm{int}(\mathfrak{r}):=\{x\in\Lambda\,;\,d(x,\Lambda^c) >\mathfrak{r}\}$. By assumption (A1), we have $\Phi(Z)\in \mathcal{A}_{Z^c}' \cap\mathcal{A}\subseteq \mathcal{A}_{\Lambda_\mathrm{int}(\mathfrak{r})}' \cap\mathcal{A}$. 

If we assume that $\Lambda\in\mathcal{P}_0(\Gamma)$ is sufficiently large in such a way that $d(X,\Lambda^c)>\mathfrak{r}$, then we observe that $X\subseteq \Lambda_\mathrm{int}(\mathfrak{r})$. Since $A\in\mathcal{A}_X, \Phi(Z)\in \mathcal{A}_{\Lambda_\mathrm{int}(\mathfrak{r})}' \cap\mathcal{A}$ and $X\subseteq\Lambda_\mathrm{int}(\mathfrak{r})\cap\Lambda$, Theorem \ref{thm_Lieb-Robinson} (Lieb-Robinson bound) leads to 
\[
\Vert[\Phi(Z),\alpha_\Lambda^s(A)]\Vert
\leq \frac{2\Vert A\Vert}{c_F} \left(e^{2\Vert\Phi\Vert_F c_F|s|} -1\right) \Vert\Phi(Z)\Vert \sum_{x\in\partial_\Phi X} \sum_{y\in \Lambda_\mathrm{int}(\mathfrak{r})^c} F(d(x,y)).
\]
Hence we have
\begin{align*}
&\left\Vert\left(\alpha_{\Lambda'}^t-\alpha_\Lambda^t\right)(A)\right\Vert\\
&\leq\frac{2\Vert A\Vert}{c_F} \int_{\min\{0,t\}}^{\max\{0,t\}} \big(e^{2\Vert\Phi\Vert_F c_F|s|}-1\big) \,ds \sum_{\substack{Z\subseteq\Lambda'\\ Z\cap\Lambda\neq\emptyset\\ Z\cap\Lambda^{c}\neq\emptyset}} \Vert \Phi(Z)\Vert \sum_{x\in\partial_\Phi X} \sum_{y\in \Lambda_\mathrm{int}(\mathfrak{r})^c} F(d(x,y)).
\end{align*}

Let $\{\Lambda_n\}$ be a monotone non-decreasing sequence of elements of $\mathcal{P}_0(\Gamma)$ with $\bigcup_{n=1}^\infty\Lambda_n=\Gamma$ in what follows. Then one can find a sufficiently large $n_0\in\mathbb{N}$ in such a way that $d(X,\Lambda_{n_0}^c)>\mathfrak{r}$. For any $n,m\in\mathbb{N}$ with $n_0\leq n<m$, we have
\begin{align*}
&\left\Vert\left(\alpha_{\Lambda_m}^t-\alpha_{\Lambda_n}^t\right)(A)\right\Vert\\
&\leq \frac{2\Vert A\Vert}{c_F} \int_{\min\{0,t\}}^{\max\{0,t\}} \big(e^{2\Vert\Phi\Vert_F c_F|s|}-1\big) \,ds \sum_{x\in\partial_\Phi X} \left( \sum_{\substack{Z\subseteq\Lambda_m\\ Z\cap\Lambda_n^c\neq\emptyset}} \Vert\Phi(Z)\Vert \sum_{y\in {\Lambda_n}_\mathrm{int}(\mathfrak{r})^c} F(d(x,y))\right).
\end{align*}
Here, for any $x\in\partial_\Phi X$, we observe that
\begin{align*}
\sum_{\substack{Z\subseteq\Lambda_m\\ Z\cap\Lambda_n^{c}\neq\emptyset}} \Vert\Phi(Z)\Vert \sum_{y\in {\Lambda_n}_\mathrm{int}(\mathfrak{r})^c} F(d(x,y))
&\leq \sum_{z\in\Lambda_m\backslash\Lambda_n} \sum_{\substack{Z\in \mathcal{P}_0(\Gamma)\\ z\in Z}} \Vert\Phi(Z)\Vert \sum_{y\in\Gamma} F(d(x,y))\\
&\leq \sum_{z\in\Lambda_m\backslash\Lambda_n} \sum_{y\in\Gamma} \sum_{\substack{Z\in\mathcal{P}_0(\Gamma)\\ y,z\in Z}} \Vert\Phi(Z)\Vert F(d(x,y))\\
&\leq \Vert\Phi\Vert_F \sum_{z\in\Lambda_m\backslash\Lambda_n} \sum_{y\in\Gamma} F(d(y,z)) F(d(x,y))\\
&\leq \Vert\Phi\Vert_F c_F \sum_{z\in\Lambda_m\backslash \Lambda_n} F(d(x,z)).
\end{align*}
Notice that $\sum_{z\in\Lambda_m\backslash \Lambda_n} F(d(x,z))\to0$ as $n,m\to\infty$, since $\sum_{z\in\Gamma} F(d(x,z))\leq \Vert F\Vert$. Hence, for any $A\in\mathcal{A}_\mathrm{loc}$, $\lim_{n\to\infty}\alpha_{\Lambda_n}^t(A)$ converge uniformly in $t$ on any compact subsets. Moreover the resulting limit can be proved to be independent of the choice of $\{\Lambda_n\}$.

Write $\alpha^t(A):=\lim_{n\to\infty}\alpha_{\Lambda_n}^t(A)$ for any $A\in\mathcal{A}_\mathrm{loc}$, and then $\alpha^t$ is an isometry. Thus we can extend $\alpha^t$ to the whole $\mathcal{A}$. Then $\alpha^t$ is strongly continuous $1$-parameter automorphism group. In fact, $\alpha^{t+s}=\alpha^t\alpha^s$ holds, since $\alpha_\Lambda^{t+s}=\alpha_\Lambda^t\alpha_\Lambda^s$ holds for any $\Lambda\in\mathcal{P}_0(\Gamma)$. On the other hand, for each $A\in\mathcal{A}$, one can find $A_\epsilon\in\mathcal{A}_\mathrm{loc}$ and $n\in\mathbb{N}$ such that $\Vert A-A_\epsilon\Vert\leq \epsilon$ and $\sup_{t\in[-1,1]} \Vert\alpha^t_{\Lambda_n}(A)-\alpha^t(A)\Vert\leq\epsilon$. Then we have
\[
\Vert\alpha^t(A)-A\Vert \leq 3\epsilon+ \Vert\alpha^t_{\Lambda_n}(A_\epsilon)-A_\epsilon\Vert
\]
for any $t\in[-1,1]$. Hence for any sequence $\{t_m\}$ in $\mathbb{R}$ with $\lim_{m\to\infty}t_m=0$, we have 
\[
0\leq \limsup_{m\to\infty}\Vert\alpha_{t_m}(A)-A\Vert
\leq 3\epsilon+\lim_{m\to\infty} \Vert\alpha^{t_m}_{\Lambda_n}(A_\epsilon)-A_\epsilon\Vert =3\epsilon. 
\]
Hence $\alpha^t$ is strongly continuous.
\end{proof}

\begin{proof}[Proof of Proposition \ref{prop_derivation}]
Assume that $A\in\mathcal{A}_X$. By assumption (A1), we have $\delta(A)=\delta_\Lambda(A)$ for any $\Lambda\in\mathcal{P}_0(\Gamma)$ with $d(X,\Lambda^c)>\mathfrak{r}$. Hence (i) has been shown. 

To prove (ii), assume that $A\in\mathcal{A}_X$. By assumption (A1),
\[
\Vert\delta^n(A)\Vert 
\leq 2^n\Vert A\Vert \sum_{\substack{Z_n\in\mathcal{P}_0(\Gamma) \\Z_n\cap S_{n-1}\neq\emptyset}} \cdots \sum_{\substack{Z_1\in\mathcal{P}_0(\Gamma) \\Z_1\cap S_0\neq\emptyset}} \prod_{k=1}^n\Vert\Phi(Z_k)\Vert
\]
holds, where we write 
\[
S_0:=X,\ S_1:=Z_1\cup S_0,\ldots,\ S_n:=Z_n\cup S_{n-1}. 
\]
Then we observe that
\begin{align*} \sum_{\substack{Z_n\in\mathcal{P}_0(\Gamma) \\Z_n\cap S_{n-1}\neq\emptyset}} \cdots \sum_{\substack{Z_1\in\mathcal{P}_0(\Gamma) \\Z_1\cap S_0\neq\emptyset}} \prod_{k=1}^n\Vert\Phi(Z_k)\Vert
&\leq \sum_{z_1\in S_0}\cdots\sum_{z_n\in S_{n-1}} \sum_{\substack{Z_1\in\mathcal{P}_0(\Gamma) \\z_1\in Z_1}} \cdots \sum_{\substack{Z_n\in\mathcal{P}_0(\Gamma) \\z_n\in Z_n}} \prod_{k=1}^n\Vert\Phi(Z_k)\Vert\\
&\leq \mathfrak{j}'^n \sum_{z_1\in S_0}\cdots\sum_{z_n\in S_{n-1}} 1 \leq \mathfrak{j}'^n \prod_{k=1}^n(Nk +|X|)
\end{align*}
where we write $N=N_\Phi:= \sup_{x\in\Gamma}|\mathfrak{B}_\mathfrak{r}(x)|$. Thus we have 
\[
\left\Vert\frac{t^n}{n!}\delta^n(A) \right\Vert
\leq \frac{(2\mathfrak{j}'|t|)^n}{n!} \Vert A\Vert \prod_{k=1}^n (Nk+|X|)
\leq \big(2\mathfrak{j}'|t|(N+|X|)\big)^n \Vert A\Vert.
\]
One can find a $T>0$ such that $2\mathfrak{j}'T(N+|X|)<1$. Then 
\[
\exp(t\delta)(A):=\sum_{n=0}^\infty\frac{t^n}{n!}\delta^n(A)
\]
converges in norm for any $t\in(-T,T)$. Furthermore the convergence is uniformly on any compact subsets. 

Thanks to Proposition \ref{prop_time}, the first equality in (iii) holds. Moreover, both $\alpha^t_\Lambda(A)$ and $\exp(t\delta_\Lambda)(A)$ are solution to the initial value problem
\[
\frac{d}{dt}A(t)=i[H_\Lambda,A(t)],\quad A(0)=A. 
\]
Since the solution is unique, the second equality in (iii) is derived. To see that the third equality also holds, let $\{\Lambda_m\}$ be a monotone increase sequence of elements of $\mathcal{P}_0(\Gamma)$ with $\bigcup_{m=1}^\infty\Lambda_m=\Gamma$. For any $A\in\mathcal{A}_\mathrm{loc}$ and for any $t\in(-T,T)$, we have 
\[
\Vert\exp(t\delta)(A)- \exp(t\delta_{\Lambda_m})(A)\Vert 
\leq \sum_{n=N(\Lambda_m)}^\infty \frac{|t|^n}{n!} \Vert\delta^n(A)-\delta_{\Lambda_m}^n(A)\Vert,
\]
where we write
\[
N(\Lambda_m):=\min\{n\in\mathbb{N}\,;\, \delta^n(A)=\delta_{\Lambda_m}^n(A)\}.
\]
Then $N(\Lambda_m)\to\infty$ as $m\to\infty$. By Fubini's theorem, we have $\exp(t\delta_{\Lambda_m})(A)\to \exp(t\delta)(A)$ as $m\to\infty$. Furthermore the convergence is uniform on any compact subsets in $(-T,T)$. 
\end{proof}


\begin{thebibliography}{99}
\bibitem{Araki-Moriya}
H.\ Araki and H.\ Moriya,
Equilibrium statistical mechanics of fermion lattice systems.
{\it Reviews in Mathematical Physics}, {\bf 15.02}(2003), 93-198.

\bibitem{Bhatia:book}
R.\ Bhatia,
{\it Positive Definite Matrices.} 
Princeton university press, 2009.

\bibitem{Bratteli-Robinson:book}
O.\ Bratteli and D.\ W.\ Robinson, 
{\it Operator Algebras and Quantum Statistical Mechanics. 1, 2.} 
Second edition. Texts and Monographs in Physics. Springer-Verlag, 1987, 1997.

\bibitem{Bru-Siqueira Pedra:book}
J.\ B.\ Bru and W.\ Siqueira Pedra, 
{\it Lieb-Robinson Bounds for Multi-commutators and Applications to Response Theory.}
Springer, New York, 2016.

\bibitem{Hamza-Michalakis-Nachtergaele-Sims} 
E.\ Hamza, S.\ Michalakis, B.\ Nachtergaele and R.\ Sims, 
Approximating the ground state of gapped quantum spin systems. 
{\it J.\ Math.\ Phys.}, {\bf 50}(2009), no.9, 095213. 

\bibitem{Hastings07}
M.\ B.\ Hastings,
An area law for one-dimensional quantum systems.
{\it J.\ Stat.\ Mech.\ Theory, Exp.\ 2007}, {\bf P08024}(2007), 14 pp.

\bibitem{Hastings18}
M.\ B.\ Hastings,
An area law for one-dimensional quantum systems.
arXiv:0705.2024v4 (2018).

\bibitem{Kadison04}
R.V.\ Kadison, 
Non-commutative conditional expectations and their applications.
{\it Contemp.\ Math.}, {\bf 365}(2004), 143--179.

\bibitem{Matsui13}
T.\ Matsui,
Boundedness of entanglement entropy and split property of quantum spin chains.
{\it Rev.\ Math.\ Phys.}, {\bf 25}(2013), no.9, 1350017.

\bibitem{Matsui20}
T.\ Matsui,
Split property and fermionic string order.
arXiv:2003.13778 (2020).

\bibitem{Naaijkens13:book}
P.\ Naaijkens, 
{\it Quantum Spin Systems on Infinite Lattices: A Concise Introduction.} 
Lecture Notes in Physics, Springer, 2017. 

\bibitem{Nachtergaele-Ogata-Sims} 
B.\ Nachtergaele, Y.\ Ogata, and R.\ Sims, 
Propagation of correlations in quantum lattice systems.
{\it J.\ Stat.\ Phys.}, {\bf 124}(2006), no. 1, 1--13.

\bibitem{Nachtergaele-Sims-Young19}
B.\ Nachtergaele, R.\ Sims, and A.\ Young, 
Quasi-locality bounds for quantum lattice systems. I. Lieb-Robinson bounds, quasi-local maps, and spectral flow automorphisms.
{\it J.\ Math.\ Phys.}, {\bf 60}(2019), no. 6, 061101.

\bibitem{Nachtergaele-Sims-Young18} B.\ Nachtergaele, R.\ Sims and A.\ Young,
Lieb-Robinson bounds, the spectral flow, and stability of the spectral gap for lattice fermion systems.
{\it Mathematical Problems in Quantum Physics}, {\bf 717}(2018), 93-115.

\bibitem{Ogata20}
Y.\ Ogata, 
A $\mathbb{Z}_2$-index of symmetry protected topological phases with time reversal symmetry for quantum spin chains.
{\it Communications in Mathematical Physics}, {\bf 374.2}(2020), 705-734.

\bibitem{Ogata22} 
Y.\ Ogata, 
Classification of symmetry protected topological phases in quantum spin chains.
{\it Current Developments in Mathematics}, {\bf 2020}(2022), 41--104. 

\bibitem{Ohya-Petz:book} 
M.\ Ohya and D.\ Petz,
{\it Quantum Entropy and Its Use.} 
Springer Science \& Business Media, 2004.

\bibitem{Ranard-Walter-Witteveen} 
D.\ Ranard, M.\ Walter, F.\ Witteveen, 
A converse to Lieb--Robinson boundes in one dimension using index theory.
{\it Ann.\ Henri Poincar\'{e}}, {\bf 23}(2022), 3905--3979. 

\bibitem{Sakai:book} 
S.\ Sakai,
{\it Operator Algebras in Dynamical Systems.} 
Cambridge University Press, New York, 1991.
\end{thebibliography}
\end{document}